\theoremstyle{plain}
\newtheorem{theorem}{Theorem}
\theoremstyle{plain}
\newtheorem{lemma}{Lemma}
\newtheorem{remark}{Remark}
\newtheorem{fact}{Fact}
\newtheorem{corollary}{Corollary}
\theoremstyle{plain}
\newtheorem{definition}{Definition}
\theoremstyle{plain}
\theoremstyle{nonumberplain}
\newtheorem{proof}{Proof.}
\newcommand{\myref}[1]{\autoref{#1}}
\newcommand{\ndef}[1]{\textbf{#1}}
\newcommand{\nrule}[1]{\ensuremath{\textup{\textsc{#1}}}}
\newcommand{\M}[1]{\ensuremath{\mathsf{#1}}}
\newcommand{\denot}[1]{\ensuremath{\llbracket #1\rrbracket}}
\newcommand{\dom}{\ensuremath{\textup{\textit{dom\,}}}}
\newcommand{\update}[2]{\ensuremath{[#1\mapsto{}#2]}}
\newcommand{\impl}{\ensuremath{\rightarrow}}
\renewcommand{\iff}{\ensuremath{\leftrightarrow}}
\newcommand{\monto}{\ensuremath{\xrightarrow{mon}}}
\newcommand{\Prop}{\ensuremath{\textsf{\textbf{P}}}}
\newcommand{\slist}[1]{\ensuremath{\overline{#1}}}
\newcommand{\incl}{\ensuremath{\subseteq}}
\newcommand{\set}[1]{\ensuremath{\{#1\}}}
\newcommand{\cc}{;}
\newcommand{\Exp}{\ensuremath{\M{Exp}}}
\newcommand{\Term}{\ensuremath{\M{Term}}}
\newcommand{\Fun}{\ensuremath{\mathcal{F}}}
\newcommand{\Act}{\ensuremath{\mathcal{A}}}
\newcommand{\Var}{\ensuremath{\mathcal{V}}}
\newcommand{\Val}{\ensuremath{\mathbb{V}}}
\newcommand{\Valb}{\ensuremath{\mathbb{V}_{\bot}}}
\newcommand{\Evt}{\ensuremath{\mathcal{E}}}
\newcommand{\fv}{\ensuremath{\textup{fv}}}
\newcommand{\valtobool}{\ensuremath{\beta}}
\newcommand{\Bool}{\ensuremath{\mathbb{B}}}
\newcommand{\BoolT}{\ensuremath{\mathbf{true}}}
\newcommand{\BoolF}{\ensuremath{\mathbf{false}}}
\newcommand{\opeval}[2]{\ensuremath{\denot{#1}\,#2}}
\newcommand{\EVE}{\ensuremath{\varnothing}}
\newcommand{\extevt}{\ensuremath{\alpha}}
\newcommand{\extexp}{\ensuremath{\eta}}
\newcommand{\evt}{\ensuremath{\phi}}
\newcommand{\configs}{\ensuremath{\Sigma}}
\newcommand{\config}{\ensuremath{\sigma}}
\newcommand{\closures}{\ensuremath{\mathcal{C}}}
\newcommand{\ilTx}[1]{\texttt{#1}}
\newcommand{\ilIn}{\ilTx{in}}
\newcommand{\ilGoto}[2]{#1\,#2}
\newcommand{\ilLet}[3]{\ilTx{let}\,#1=#2\,\ilIn\,#3}
\newcommand{\ilIf}[3]{\ilTx{if}\,{#1}\,\ilTx{then}\,{#2}\,\ilTx{else}\,{#3}}
\newcommand{\ilFDefs}[3]{\slist{#1\,#2\,=\,#3}}
\newcommand{\ilLetRecP}[2]{\ilTx{fun}~#1~\ilIn~#2}
\newcommand{\ilLetRec}[4]{\ilLetRecP{{#1}\,{#2}={#3}}{#4}}
\newcommand{\ilLetRecM}[4]{\ilLetRecP{\ilFDefs{#1}{#2}{#3}}{#4}}
\newcommand{\ilEvent}[4]{\ilTx{let}~#1=#2\,#3~\ilIn~#4}
\newcommand{\fstate}[3]{\ensuremath{#1~|~#2~|~#3}}
\newcommand{\fevals}{\ensuremath{\longrightarrow}}
\newcommand{\fevalstgsl}[4]{\ensuremath{#1~~\stackrel{#4}{\fevals}_{#3}~~#2}}
\newcommand{\fevalsg}[2]{\stackrel{#2}{\fevals}_{#1}}
\newsavebox{\topprooftreebox}
\newlength{\topprooftreewidth}
\NewDocumentEnvironment{topprooftree}{m}%
  {\begin{lrbox}{\topprooftreebox}}%
  {
   \DisplayProof
   \end{lrbox}
\setlength{\topprooftreewidth}{\wd\topprooftreebox}%
\begin{minipage}[b][][b]{\topprooftreewidth}%
      \nrule{\small #1}\\[1mm]\usebox{\topprooftreebox}\vspace{0pt}
    \end{minipage}
  }
\DeclareDocumentCommand \mkdef { o m } {%
  \IfNoValueTF {#1} {%
    \reversemarginpar\marginpar{#2}\fbox{#2}
  }{%
    \reversemarginpar\marginpar{#1}\fbox{#2}
  }%
}
\newcommand{\res}{\mathit{res}}
\newcommand{\redsys}[5]{\ensuremath{(#1, #2, #3, #4, #5)}}
\newcommand{\LTS}[3]{\ensuremath{(#1, #2, #3)}}
\newcommand{\red}{\ensuremath{\fevals}}
\newcommand{\SimN}{\ensuremath{\raisebox{-2pt}{\ensuremath{\overset{\mathsmaller{<}}{\sim}}}}}
\newcommand{\Sim}{\ensuremath{\mathrel{\SimN}}}
\newcommand{\Bisim}{\vphantom{\Sim}\ensuremath{\sim}}
\newcommand{\BisimN}{\ensuremath{{\sim}}}
\newcommand{\GSimN}{\ensuremath{\approx}}
\newcommand{\trmg}[2]{\ensuremath{\Downarrow_{#2}{#1}}}
\newcommand{\todo}[1]{}
\renewcommand{\todo}[1]{\marginpar{\color{red} TODO: {#1}}}
\setlist[itemize]{leftmargin=*,itemsep=2pt}
\begin{document}

\setlength{\pdfpageheight}{\paperheight}
\setlength{\pdfpagewidth}{\paperwidth}




\title{An Inductive Proof Method for Simulation-based Compiler Correctness}
\author{Sigurd Schneider, Gert Smolka, Sebastian Hack\\Saarland Informatics Campus, Saarland University}



\maketitle

\begin{abstract}
We study induction on the program structure as a proof method for bisimulation-based compiler correctness.
We consider a first-order language with mutually recursive function definitions, system calls, and
an environment semantics.
The proof method relies on a generalization of compatibility of function definition with the bisimulation.
We use the inductive method to show correctness of a form of dead code elimination.
This is an interesting case study because the transformation removes function, variable, and parameter
definitions from the program.
While such transformations require modification of the simulation in a coinductive proof, the inductive method deals with them naturally.
All our results are formalized in Coq.
\end{abstract}

\section{Introduction}
We study induction on the program structure as a proof method for bisimulation-based compiler correctness.
We detail inductive equivalence proofs for the language IL~\cite{DBLP:conf/itp/SchneiderSH15} with respect to a simple, coinductively defined bisimulation.
IL is a first-order language with lexically scoped variables, system calls, mutual recursion, and an environment semantics in the style of Standard ML~\cite{Milner97}.
The restriction to first-order simplifies the setup of semantics, simulation, and the inductive method.
System calls realize (two-way) communication with a system environment
and warrant a bisimulation-based notion of program equivalence.

We explain the inductive method by providing paper proofs of the crucial lemma,
and a case study that applies the lemma in proofs of (bi)similarities:
First, we show contextuality of a bisimulation-based equivalence.
Second, we prove correctness of dead code elimination (DCE), which we split into the following two transformations:
\begin{enumerate}
 \item Unreachable Code Elimination (UCE)
 \item Dead Variable Elimination (DVE)
\end{enumerate}
UCE removes unreachable function definitions and unreachable conditional branches.
DVE removes dead variable definitions and dead parameters from functions.

We verify correctness with respect to a coinductively defined bisimulation.
Intuitively, two configurations are bisimilar if they perform the same (possibly infinite) trace of system calls.
Two configurations are similar, if every trace of the left-hand configuration is also a (partial) trace of the right-hand configuration.
We show that UCE respects bisimilarity, and that DVE respects simulation.

We use induction on the program structure as a proof method for the correctness arguments.
In the inductive proof, the fact that optimizations UCE and DVE remove program statements is no issue.
This is in contrast to a coinductive proof method.
The coinductive hypothesis cannot be applied without further justification if, for example, the source program reduced, but the target program did not.
Such situations naturally arise, for example, if a variable or function definition is removed.
The standard solution is index the simulation with a well-founded relation, and allow stutter steps if the well-founded relation is decreased~\cite{Leroy-Compcert-CACM}.
The main feature of our inductive method is that it does not require modifications of the simulation.

The inductive proof method is enabled by a lemma that strengthens the inductive hypothesis.
The plain structural inductive hypothesis cannot readily be used to show the function definition case correct.
Suppose $\Bisim$ is a semantically defined bisimilarity relation for programs.
The problem with function definitions is that the following rule of congruence does not directly follow by induction.
Function definition introduces a fixed-point of the semantics of
$s$, which requires extra (coinductive) treatment.

\begin{prooftree}
\AxiomC{$s \Bisim s'$}
\AxiomC{$t' \Bisim t'$}
\BinaryInfC{$\ilLetRec{f}{\slist{x}}{s}{t} \Bisim \ilLetRec{f}{\slist{x}}{s'}{t'}$}
\end{prooftree}
This rule of congruence (which we prove in \myref{sec:ctx}) is a special case of a general lemma we prove, and which we use to strengthen the inductive hypothesis.
Our lemma generalizes to optimizations that change function signatures and remove and rename function and variable definitions.
In the latter case, the bodies of related functions are generally not equivalent, but their function applications are, provided the arguments are in some other, often asymmetric, relation.

This paper is accompanied by a Coq development which contains formal proofs of all lemmas and theorems.
The Coq development is part of a larger compiler verification project and available online:
\begin{center}
\url{www.ps.uni-saarland.de/~sdschn/lvc-ind/}
\end{center}

\subsection{Contributions}
The paper makes the following contributions:
\begin{enumerate}
  \item We develop a simple proof method for IL which supports induction on the program structure for proofs of bisimulation-based equivalence. IL has mutually recursive function definitions and system calls.
  \item We detail the method in the proof of contextuality of our simulation, and in the correctness proofs of Unreachable Code Elimination and Dead Variable Elimination.
  We explain how the method deals with the removal of definitions, function parameters, and mutual recursion.
  \item The correctness proofs of UCE and DVE are carried out formally in Coq in a setting with De-Bruijn function binders and mutual recursion.
    Here in the paper, we present a named version in hope for better readability.
\end{enumerate}

\subsection{Related Work}
\textbf{Inductive Proofs for Bisimulations}
The idea to use compatibility lemmas to simplify correctness proofs is outlined
in \S 2.8 of the master thesis of one of the authors~\cite{Schneider13}.
The master thesis uses a version of IL without mutual recursion and system calls.
A basic version of the extension lemma, which enables the inductive method and which we prove in \myref{sec:ext_fun_ctx}, appears in the master thesis as Lemma 3.
The masters thesis uses the extension lemma to show that contextual equivalence is characterized by a simulation-based definition.
In \myref{sec:ctx} of this paper we show that a bisimulation-based definition is sound for contextual equivalence.

Neis et al.~\cite{DBLP:conf/icfp/NeisHKMDV15} recently used an inductive method to deal with stuttering steps when proving their elaborate parametric inter-language simulations (PILS). In the PILS framework, they verify a compiler for an imperative higher-order language with non-mutually recursive functions that take a fixed number of arguments.

Neis et al.\ use an inductive method to deal with stuttering steps in, among other things, the correctness proof of a form of DCE with respect to PILS that only eliminates unused let-bindings.
Neis et al.\ mention that their framework provides a series of compatiblity lemmas simplifying the proof, but do not state the precise form of the lemmas in the paper.
Our DCE removes dead function parameters, unused function definitions, and unreachable branches of conditionals.

As Neis et al. deal with a higher-order language in the PILS framework, their setup is necessarily more complicated than ours, and they only give a high-level description of how they setup the induction.
This paper aims to explain the inductive method in a simple setting that still allows to see its merit.
We include mutual recursion as it directly interacts with the setup of the inductive method.
We detail the proofs in the hope to expose the inductive method in general, independent of a framework.

\textbf{DCE in CompCert}
Dead code elimination (DCE) in CompCert~\cite{Leroy-backend} is carried out by two optimizations.
First, the translation from RTL to LTL replaces instructions that write to dead registers with no-ops.
Second, the branch tunneling phase removes no-ops.

In CompCert, optimizations that remove instructions are proven correct via a measure argument
that justifies applicability of the coinductive hypothesis.
Dealing with a measure is more complicated than a plain coinductive proof.
For this reason, only the branch tunneling phase removes instructions.
Our inductive approach supports removal of definitions (i.e. instructions) without additional effort in the correctness proof of any optimization and does not require an additional measure.

\textbf{Correctness Arguments in Verified Compilers}
The correctness arguments in VeLLVM~\cite{DBLP:conf/pldi/ZhaoNMZ13}, the verified LLVM project,
CompCertSSA~\cite{DBLP:conf/esop/BartheDP12}, and CompCertTSO~\cite{DBLP:journals/jacm/SevcikVNJS13}
use exclusively coinduction for correctness proofs.
Those compilers operate on a graph-based program representation, so induction on the program structure is not as useful as in our term-based setting.

\textbf{Howe's Method}
Howe's Method~\cite{DBLP:journals/iandc/Howe96} is a general method to show that a (coinductively defined) relation is a congruence.
Howe's method is particularly effective in higer-order settings.
Howe's method first constructs a precongruence candidate relation that contains bisimilarity and can easily be shown to be a congruence.
Afterwards the candidate relation is shown to coincide with bisimilarity.
In our work we prove by coinduction that for showing two function definitions bisimilar
 it suffices to show that their bodies are bisimilar, assuming that all related functions in the environment are bisimilar.
Howe's method seems to be geared towards congruence properties and we are not aware of work extending it to optimizations that change function signatures.

\textbf{CakeML}
CakeML~\cite{DBLP:conf/esop/OwensMKT16} is a verified compiler for a substantial subset of Standard ML.
CakeML originally uses big-step semantics, which does not account for diverging behaviors.
Recently, CakeML switched to an evaluation function with a step limit to specify the semantics.
Both approaches directly support inductive proofs on the semantics.

\subsection{Outline}
The paper is organized as follows.
We define the syntax and semantics of the language IL in \myref{sec:il_sem}.
In \myref{sec:progeq} we repeat the definition of program equivalence from previous work~\cite{DBLP:conf/itp/SchneiderSH15},
and give a new characterization using parameterized co-induction~\cite{DBLP:conf/popl/HurNDV13}.
We then prove compatibility rules admissible that we use repeatedly in the following proofs.
We develop the inductive method in \myref{sec:ind_method} and use it in \myref{sec:ctx} to show
that the bisimulation we defined is contextual.
We describe how program analysis information is represented in our framework in \myref{sec:ann}.
We specify reachability and prove unreachable code elimination correct in \myref{sec:UCE}.
We specify true liveness and prove dead variable elimination correct in \myref{sec:DVE}.
We discuss the formal development in \myref{sec:coq} and conclude in \myref{sec:conclusion}.


\section{IL}
\label{sec:il_sem}
\subsection{Values, Variables, and Expressions}
We assume a type $\Val$ of values and a function
$\valtobool:\Val\to\Bool=\set{\BoolT,\BoolF}$ that we use to simplify the semantic rule for the conditional.
By convention, $v$ ranges over $\Val$.
We use the countably-infinite alphabet~$\Var$ for names $x,y,z$ of values, which we call \emph{variables}.

\newcommand{\envle}{\ensuremath{\sqsubseteq}}
We assume a type $\Exp$ of expressions.
By convention, $e$~ranges over \Exp.
Expressions are pure, their evaluation is deterministic and may fail, hence
expression evaluation is a function $\opeval{\cdot}{}:\Exp\to(\Var\to\Valb)\to\Valb$.
Environments are of type $\Var\to\Valb$ to track uninitialized variables, and are
partially ordered by $\envle$, which is the pointwise lifting of the relation defined by the two equations $\bot \envle w$ and $w \envle w$, where $w\in\Valb$.
We assume that expression evaluation is monotone\label{inl:exp_mon}, i.e., $V \envle V' \impl \opeval{e}\,V \envle \opeval{e}\,V'$.

We assume a function $\fv:\Exp\to\M{set}\,\Var$ such that for all environments $V,V'$
that agree on $\fv(e)$ we have $\opeval{e}{V}=\opeval{e}{V'}$.
We use the notation $\slist{x}$ for a list of variables.
We lift $\opeval{\cdot}{}$ pointwise to lists of expressions in a strict fashion: $\opeval{\slist{e}}$ yields a list of values if none of the expressions in~$\slist{e}$ failed to evaluate, and $\bot$ otherwise.

We sometimes omit the side condition $\opeval{\slist{e}}{V}\not=\bot$ in the presentation if
$\opeval{\slist{e}}{V}$ is used in a place where type $\Val$ is required.
For example, we write $\beta(\opeval{\slist{e}}{V})=\BoolT$ instead of
$\exists v:\Val, \opeval{\slist{e}}{V}=v \land \beta v = \BoolT$.

\label{chap:il}
\subsection{Syntax}
IL is a first-order language with a tail-call restriction, mutual recursion, and system calls.
IL syntactically enforces a first-order discipline by using a separate alphabet $\Fun$ for function names $f,g,h$.
Variables are lexically scoped binders, and a function definition creates a closure that captures variables.

IL uses a third alphabet $\Act$ for names $\alpha$ which we call \emph{actions}.
The term $\ilLet{x}{\extevt(\slist{e})}{\dots}$ is like a system call $\alpha$ with argument list $\slist{e}$ that non-deterministically returns a value.

IL allows mutually recursive function definitions.
The syntax of IL is given in \myref{fig:ilf_syntax}.

\newcommand{\mkBlocks}[1]{\ensuremath{\llparenthesis#1\rrparenthesis}}
\newcommand{\cg}[1]{\ensuremath{[#1]}}
\newcommand{\mkGroup}[1]{\ensuremath{\cg{#1}}}

\begin{figure}
  \centering
\begin{align*}
\eta ::=&~e~|~\extevt(\slist{e})&&\quad\textup{extended expression}\\
\Term\ni{}s,t ::=&~\ilLet{x}{\extexp}{s}&&\quad\textup{variable binding}\\
   |&~\ilIf{e}{s}{t}&&\quad\textup{conditional}\\
   |&~e&&\quad\textup{expression}\\
   |&~\ilLetRecM{f}{\slist{x}}{s}{t}&&\quad\textup{function definition}\\
   |&~f\,\slist{e}&&\quad\textup{application}
\end{align*}
  \caption{Syntax of IL}
  \label{fig:ilf_syntax}
\end{figure}

\newcommand{\bnfeq}{\ensuremath{\mathrel{:\mkern-2mu:\!\!\shorteq}}}
\newcommand{\defeq}{\ensuremath{\mathrel{:\!\!\shorteq}}}
\newcommand{\LE}{\ensuremath{L}}
\newcommand{\rewind}[2]{\ensuremath{#1^{-#2}}}

\begin{figure}
\begin{center}
\begin{topprooftree}{Op}
   \AxiomC{$\denot{e}\,V=v$}
   \UnaryInfC{\fevalstgsl{\fstate{\LE}{V}{\ilLet{x}{e}{s}}}
                      {\fstate{\LE}{V\update{x}{v}}{s}}{}{}
                      }
\end{topprooftree}
\end{center}
\begin{center}
\begin{topprooftree}{Cond}
   \AxiomC{$\denot{e}\,V=v$}
   \AxiomC{$\valtobool(v) = b$}
   \LeftLabel{{\small\nrule{}}}
   \BinaryInfC{\fevalstgsl{\fstate{\LE}{V}{\ilIf{e}{s_\BoolT}{s_\BoolF}}}
                       {\fstate{\LE}{V}{s_i}}{}{}}
\end{topprooftree}
\end{center}
\begin{center}
\begin{topprooftree}{Extern}
   \AxiomC{$v'\in\Val$}
   \AxiomC{$\denot{\slist{e}}\,V=\slist{v}$}
   \BinaryInfC{\fevalstgsl{\fstate{\LE}{V}{\ilEvent{x}{\alpha(\slist{e})}{\!}{s}}}
                        {\fstate{\LE}{V\update{x}{v'}}{s}}
                        {}{v'=\alpha(\slist{v})}}
\end{topprooftree}
\end{center}
\begin{center}
\begin{topprooftree}{Fun}
  \AxiomC{}
  \UnaryInfC{\fevalstgsl{\fstate{\LE}{V}{\ilLetRecM{f}{\slist{x}}{s}{t}}}
                     {\fstate{\mkBlocks{\ilFDefs{f}{\slist{x}}{s}}_V\cc \LE}{V}{t}}{}{}}
\end{topprooftree}
\end{center}
\begin{center}
\begin{topprooftree}{App}
  \AxiomC{$\denot{\slist{e}}\,V=\slist{v}$}
  \AxiomC{$\LE_f = (V', \slist{x}, s)$}
  \BinaryInfC
    {\fevalstgsl{\fstate{\LE}{V}{\ilGoto{f}{\slist{e}}}}
             {\fstate{\rewind{\LE}{f}}{{V'}\update{\slist{x}}{\slist{v}}}{s}}{}{}}
\end{topprooftree}
\end{center}
\begin{align*}
\mkBlocks{\ilFDefs{f}{\slist{x}}{s}}_V =\mkGroup{\slist{f:(V,\slist{x},s)}}
\end{align*}
\caption{Semantics of IL}
\label{fig:ili-sem}
\end{figure}

\subsection{Semantics}
\label{sec:ilf}
The semantics of IL is given as small-step relation $\fevals$ in \myref{fig:ili-sem}.
Note that the tail-call restriction ensures that no call stack is required.
The reduction relation $\fevals$ operates on \ndef{configurations} of the form $(L,V,s)$ where $s$ is the IL term to be evaluated.
The semantics does not rely on substitution, but uses an environment $V:\Var\to\Valb$ for variable definitions and a context $L$ of function definitions.
Transitions in~$\fevals$ are labeled with \emph{events} $\evt$.
By convention, $\psi$ ranges over events different from~$\tau$.
 $$\Evt\ni\evt ::= \tau~~|~~v=\alpha(\slist{v})$$
The silent event is denoted by $\tau$, and we omit it by convention.

A \ndef{context} is a list of groups of named definitions.
For example, the context $K=\cg{f_1:a_1, f_2:a_2}\cc\cg{g_1:b_1}$ consists of three definitions in two groups.
We define a function $\dom$ that yields the domain of a context as a list, e.g. $\dom{K}=f_1,f_2,g_1$.
A definition in a context may refer to previous definitions and definitions in its group.
Notationally, we use contexts like functions:
To access the first element with name $f$, we write \mkdef{$L_f$} and
we have $L_f=\bot$ if no such element exists.
We write $\rewind{L}{f}$ for the context obtained from $L$ by dropping all groups before the first group containing $f$.
We write $\cc$ for context concatenation
and $\EVE$ for the empty context.

A \ndef{closure} is a tuple $(V,\slist{x},s)\in\closures$ consisting of an environment $V$, a parameter list $\slist{x}$, and a function body $s$.
Since a function~$f$ in a context can only refer to previously defined functions and functions in its own group, the first-order restriction allows the closures to be non-recursive: function closures do not need to close under functions.
An application $f \slist{e}$ causes the function context $L$ to rewind to $\rewind{L}{f}$, i.e. up to the group with the definition of $f$ (rule \nrule{App}).
In contrast to higher-order formulations, we do not define closures mutually recursively with the values of the language.

A \textbf{system call} $\ilEvent{x}{\alpha}{\slist{e}}{s}$ invokes a function~$\alpha$ of the system, which is not assumed to be deterministic.
This reflects in the rule \nrule{Extern}, which does not restrict the result value of the system call other than requiring that it is a value.
The transition records the system call name $\alpha$, the argument values $\slist{v}$ and the result value $v'$ in the event $v'=\alpha(\slist{v})$.

\section{Program Equivalence}
\label{sec:progeq}
Before any transformation can be proven correct, we must formally
define what semantic equivalence means.
Semantic equivalence is not directly tight to the language, but only
to the way the language interacts with its environment.
In our case, the language interacts with the environment via system calls,
and possibly a result value.
We abstract this behavior with internally deterministic reduction systems (IDRS),
that we previously introduced \cite{DBLP:conf/itp/SchneiderSH15}.
\begin{definition}
A \emph{reduction system} (RS) is given by a tuple
$\redsys{\configs}{\Evt}{\red}{\tau}{\res}$ such that
\begin{multicols}{2}
\begin{enumerate}
\item $\LTS{\configs}{\Evt}{\red}$ is a LTS
\item $\res:\configs\to\Valb$
\item $\res\,\config=v \Rightarrow \config~\textup{$\,\not\!\!\red$}$
\item $\tau\in\Evt$
\end{enumerate}
\end{multicols}
\noindent
An \emph{internally deterministic} reduction system (IDRS) additionally satisfies
\begin{enumerate}
\setcounter{enumi}{4}
\item $\config\fevalsg{}{\phi}\config_1 \land \config\fevalsg{}{\phi}\config_2 \Rightarrow \config_1 = \config_2$ \hfill \textup{action-deterministic} \quad

\item $\config\fevalsg{}{\phi}\config_1 \land \config\fevalsg{}{\tau}\config_2 \Rightarrow \phi = \tau$ \hfill \textup{$\tau$-deterministic} \quad
\end{enumerate}

\end{definition}

The semantics of IL forms an IDRS: We define $\res$ such that $\res(\config)=v$ if $\config$ is of the form $(F,V,e)$ and $\denot{e}\,V=v$. Otherwise, $\res(\config)=\bot$.

\subsection{Similarity and Bisimilarity}
\label{sec:sim_bisim_std}
\label{sec:bisimilarity}
To define what it means that two IDRS behave equivalently, we use (bi)similarity.
Bisimilarity is obtained as the greatest fixed-point, and naturally accounts for
diverging behaviors.
In previous work \cite{DBLP:conf/itp/SchneiderSH15} we have given a definition of the bisimilarity relation we present here,
and we have shown that it sound and complete for trace equivalence.

Before we give the rules defining (bi)similarity, need some definitions.
We write $\config\trmg{w}{}$ (where $w\in\Valb$) if $\config$ terminates with $w$, that is,
 $\config\red^\ast\config'$ such that $\config'$ is $\red$-terminal and $\res(\config')=w$.
We also want to be able do identify configurations which are about to execute a system call,
and say that such configurations are \emph{ready}.
Finally we introduce the notation
$\config_1 \stackrel{R}{\leadsto}\config_2$ for the standard forward-simulation
property.
That is, every transition $\config_1$ takes can also be taken
by $\config_2$, and the two successor configurations are related by $R$.
%
\begin{figure}
\begin{center}
\begin{topprooftree}{Bisim-Silent}
  \AxiomC{$\config_1\fevalsg{}{}^{+}\config_1'$}
  \AxiomC{$\config_2\fevalsg{}{}^{+}\config_2'$}
  \AxiomC{$\config_1'\Bisim\config_2'$}
  \doubleLine
  \TrinaryInfC{$\config_1\Bisim\config_2$}
\end{topprooftree}
\begin{topprooftree}{Bisim-Term}
  \AxiomC{$\config_1\trmg{w}{}$}
  \AxiomC{$\config_2\trmg{w}{}$}
  \doubleLine
  \BinaryInfC{$\config_1\Bisim\config_2$}
\end{topprooftree}
\end{center}
\begin{center}
\begin{topprooftree}{Bisim-Extern}
  \Axiom$\fCenter\config_1\fevalsg{}{}^{*}\config_1'$
  \noLine
  \UnaryInf$\fCenter\config_2\fevalsg{}{}^{*}\config_2'$
  \Axiom$\fCenter\config_1', \config_2' ~ \textup{ready}$
  \Axiom$\fCenter\config'_1 \stackrel{\BisimN}{\leadsto}\config'_2$
  \noLine
  \UnaryInf$\fCenter\config'_2 \stackrel{\BisimN}{\leadsto} \config'_1$
  \doubleLine
  \TrinaryInfC{$\config_1\Bisim\config_2$}
\end{topprooftree}
\begin{topprooftree}{\small\nrule{Sim-Error}}
  \AxiomC{$\config_1\fevals^*\config'_1$}
  \noLine
  \AxiomC{$\config'_1\,\textup{terminal}$}
  \UnaryInfC{$\res\,\config'_1=\bot$}
  \doubleLine
  \BinaryInfC{$\config_1\Sim\config_2$}
\end{topprooftree}
\end{center}
\caption{Defining Rules of Similarity and Bisimilarity}
\label{fig:bisim_rules}
\end{figure}
\begin{definition}[Bisimilarity] Let $\redsys{S}{\Evt}{\fevalsg{}{}}{\res}{\tau}{}$ be an IDRS.
We define bisimilarity in type theory as relation of type ${S}\to{S}\to\Prop$, where $\Prop$ is the universe of propositions. Bisimilarity $\BisimN$ is defined coinductively as the greatest relation closed under the rules \nrule{Bisim-Silent}, \nrule{Bisim-Extern}, \nrule{Bisim-Term} in \myref{fig:bisim_rules}.
\label{def:bisim}
\end{definition}
%
\nrule{Bisim-Silent} allows to match finitely many steps on both sides, as long as all transitions are silent.
This makes sense for IDRS, but would not yield a meaningful definition otherwise.
\nrule{Bisim-Extern} ensures that every external transition of $\config_1'$ is matched by the same external transition of $\config_2'$, and vice versa.
This ensures that if two programs are in relation, they react to every possible result value of the external call in a bisimilar way.
$\config_1',\config_2'$ are required to be ready to simplify case distinctions by ensuring that the next event cannot be $\tau$.
\begin{definition}
\label{def:sim}
Let $\redsys{S}{\Evt}{\fevalsg{}{}}{\res}{\tau}{}$ be an IDRS.
Similarity is defined as the greatest relation closed under the rules \nrule{Bisim-Silent}, \nrule{Bisim-Extern}, \nrule{Bisim-Term} and \nrule{Sim-Error} in \myref{fig:bisim_rules}.
\end{definition}
\nrule{Sim-Error} can be used to justify similarity for any configuration on the right side, if the left side can be shown to reduce to a stuck configuration.

\subsection{Bisimilarity as Symmetrization of Similarity}
Obtaining bisimilarity as symmetrization of similarity is useful if
the properties one wants to show are symmetric properties:
Bisimilarity is obtained from a proof of similarity and symmetry.
If the property is not symmetric, one needs two proofs of similarity,
which, in practice, share a lot of arguments.
Leroy \cite{Leroy-backend} and Sevc{\'{\i}}k \cite{DBLP:journals/jacm/SevcikVNJS13}
avoid a second proof for the backward direction
by showing that on the class of LTS they are using, forward and backward simulation coincide.
In our setting, bisimilarity is the basic definition, and simulation is obtained by adding an
``escape'' rule (\nrule{Sim-Error}) that justifies similarity if the left configuration is stuck.
In this way, we can show forward and backward direction in one proof,
but do not require the two directions to be equivalent.
In the presence of non-determinism, the two forward and backward simulation do not coincide.

\section{Parameterized Coinduction}

\newcommand{\SimF}{\ensuremath{\mathit{sim}}}
\newcommand{\BisimF}{\ensuremath{\mathit{bisim}}}
\newcommand{\cofix}{\ensuremath{\textup{cofix}}}

For the formalization, we need to define simulation and bisimulation via parameterized coinduction~\cite{DBLP:conf/popl/HurNDV13} to side-step the too restrictive guardedness check for co-fixed points in Coq.
A coinductively defined function must be productive to be well-formed, a criterion that is dual to the requirement that an inductively defined function must be terminating.
Coq requires co-recursion to occur syntactically directly below a constructor of the co-inductive definition, which is a sufficient criterion for productivity.
Parameterized coinduction allows for productivity to be accounted for in a semantic way.
We recapitulate the basic setup of parameterized coinduction following Hur et al.~\cite{DBLP:conf/popl/HurNDV13} in this section, and outline how parameterized coinduction works in \myref{rem:outline}, when we have all definitions at hand.
\newcommand{\X}{\ensuremath{X}}
\newcommand{\x}{\ensuremath{x}}
\newcommand{\y}{\ensuremath{y}}
\newcommand{\z}{\ensuremath{z}}
\newcommand{\Xtype}{complete prelattice}
\begin{definition}[Complete Prelattice]
A complete prelattice $(\X,\sqsubseteq,\sqcap,\sqcup,\top,\bot)$ is a complete lattice that is defined with respect to $\x\equiv\y:=\x\sqsubseteq\y\land\y\sqsubseteq\x$ instead of equality, i.e. a lattice which does not require anti-symmetry.
\end{definition}
The setup relies on the notion of a complete prelattice.
Hur et al. do not require anti-symmetry, but base the paper presentation on a complete lattice nonetheless.
We apply parameterized coinduction to functions into~$\Prop$, the universe of proprositions.
Function types into $\Prop$ only form a complete lattice, if the axioms of propositional extensionality and functional extensionality are assumed.
The function types into $\Prop$ each form a complete prelattice.

\newcommand{\Lincl}{\ensuremath{\sqsubseteq}}
\newcommand{\Lcup}{\ensuremath{\sqcup}}
\newcommand{\Lcap}{\ensuremath{\sqcap}}

\begin{definition}[Greatest Fixed Point]
\label{def:cofix}
Let $X$ be a complete prelattice.
We define a function
\begin{align*}
 \cofix&~:~(\X\to\X)\to\X\\
 \cofix&~f~:=~\bigsqcup\set{\y\in\X~|~\y \Lincl f \y}
\end{align*}
We use the notations $\nu x.s := \cofix (\lambda x.s)$ and $\nu f := \cofix f$.
\end{definition}

\begin{fact}
\label{fact:cofix_unfold}
Let $\X$ be a \Xtype{} and $f$ be a monotone function.
Then $\cofix f \Lincl f (\cofix f)$.
\end{fact}

\newcommand{\G}{\ensuremath{\textup{\textbf{G}}}}
\newcommand{\Lat}{\ensuremath{C}}

\begin{definition}[Parameterized Greatest Fixed Point]
\label{lem:gf_mon}
\label{def:pgf}
Let $X$ be a \Xtype{} and $f:X \to X$ be a monotone function.
We define a function
\begin{align*}
 \G&~:~(\X \monto \X)\monto \X\monto \X\\
 \G&~f~\x~:=\nu \y.f(\x \Lcup \y)
\end{align*}
It is easy to check that $\G$ and $\G f$ are monotone.
\end{definition}

\begin{lemma}[Initialize]
\label{lem:gf_init}
$\nu f \equiv \G f \bot$.
\end{lemma}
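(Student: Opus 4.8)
The plan is to unfold both sides to their $\cofix$ definitions and to observe that the two monotone operators whose greatest fixed points are being taken agree up to $\equiv$ at every argument, so that $\cofix$ must assign them $\equiv$-equal values. Concretely, by \autoref{def:pgf} we have $\G f \bot = \nu\y.\,f(\bot \sqcup \y) = \cofix(\lambda \y.\,f(\bot \sqcup \y))$, which by \autoref{def:cofix} equals $\bigsqcup\set{\y \mid \y \sqsubseteq f(\bot \sqcup \y)}$; the left-hand side is $\nu f = \cofix f = \bigsqcup\set{\y \mid \y \sqsubseteq f\y}$. So it suffices to compare these two suprema.

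The central computation is the pointwise equivalence $f(\bot \sqcup \y) \equiv f\y$. Since $\bot$ is the least element, $\bot \sqcup \y \equiv \y$ for every $\y$. Moreover, monotonicity of $f$ makes it respect $\equiv$: from $\y \equiv \y'$, i.e. $\y \sqsubseteq \y'$ and $\y' \sqsubseteq \y$, monotonicity gives $f\y \sqsubseteq f\y'$ and $f\y' \sqsubseteq f\y$, hence $f\y \equiv f\y'$. Applying this to $\bot \sqcup \y \equiv \y$ yields $f(\bot \sqcup \y) \equiv f\y$ for all $\y$.

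It remains to transfer this equivalence through the supremum. Because $f(\bot \sqcup \y) \equiv f\y$, the proposition $\y \sqsubseteq f(\bot \sqcup \y)$ is logically equivalent to $\y \sqsubseteq f\y$ (using transitivity of $\sqsubseteq$ together with the two inequalities packaged in $\equiv$). Hence the two post-fixed-point predicates coincide pointwise, and the suprema taken over them are $\equiv$-equal, which gives $\G f \bot \equiv \nu f$ as claimed.

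The step demanding the most care is this last transfer: in a \Xtype{} there is no antisymmetry, so I cannot conclude literal equality of the two suprema and must stay at the level of $\equiv$, relying on the fact that $\bigsqcup$ respects pointwise logical equivalence of its defining predicate only up to $\equiv$ — which is exactly the conclusion the statement asserts. Everything else is routine: no genuine coinductive reasoning is needed, and the lemma reduces to unfolding the fixed points plus the neutrality of $\bot$ for $\sqcup$.
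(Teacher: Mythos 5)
Your proof is correct, but it is worth pointing out that the paper itself never proves this lemma: it is stated bare, and the surrounding text discharges it by citation, noting that $\G$ and its lemmas are provided by the Paco library of Hur et al.~\cite{DBLP:conf/popl/HurNDV13} (where, moreover, $\G$ is realized directly as a coinductive predicate rather than via the $\cofix$ operator of \myref{def:cofix}). Your argument is therefore a genuine, self-contained alternative. Its structure is sound: you never need to know that $\cofix$ yields a fixed point (no appeal to \myref{fact:cofix_unfold} or to Knaster--Tarski); you only compare the two defining suprema $\bigsqcup\set{\y \mid \y \Lincl f\y}$ and $\bigsqcup\set{\y \mid \y \Lincl f(\bot \Lcup \y)}$, using neutrality of $\bot$ for $\Lcup$, the fact that a monotone $f$ respects $\equiv$, and transitivity of $\Lincl$ to see that the two post-fixed-point predicates are pointwise logically equivalent, whence their least upper bounds are mutually $\Lincl$-comparable and hence $\equiv$-equal. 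Your closing caveat is exactly the right one for this setting: without antisymmetry the suprema can only be identified up to $\equiv$, which is all the statement claims. What the citation-based route buys the paper is alignment with its Coq development, which reuses Paco's lemmas wholesale; what your route buys is independence from that library, and an explicit check that the paper-level presentation of $\G$ in terms of $\cofix$ really does validate the lemma as stated.
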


\begin{lemma}[Unfold]
\label{lem:gf_unfold}
$\G f \x \equiv f (\x \Lcup \G f \x)$.
\end{lemma}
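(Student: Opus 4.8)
The plan is to recognize $\G f \x$ as the greatest fixed point of a single auxiliary monotone operator and then invoke both halves of the Knaster--Tarski characterization. Concretely, write $g := \lambda \y.\, f(\x \Lcup \y)$, so that by \myref{def:pgf} we have $\G f \x = \cofix g$, while the claimed right-hand side is exactly $g(\cofix g) = f(\x \Lcup \G f \x)$. Thus it suffices to prove $\cofix g \equiv g(\cofix g)$, i.e.\ that the greatest post-fixed point of $g$ is in fact a fixed point up to the equivalence $\equiv$.

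First I would check that $g$ is monotone, so that \myref{fact:cofix_unfold} applies to it. This is immediate: if $\y \Lincl \y'$ then $\x \Lcup \y \Lincl \x \Lcup \y'$ by monotonicity of the join in its second argument, and hence $f(\x \Lcup \y) \Lincl f(\x \Lcup \y')$ since $f$ is assumed monotone. With monotonicity of $g$ in hand, \myref{fact:cofix_unfold} gives one inequality directly, namely $\cofix g \Lincl g(\cofix g)$, which unfolds to $\G f \x \Lincl f(\x \Lcup \G f \x)$.

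For the reverse inequality I would exploit that, by \myref{def:cofix}, $\cofix g$ is the supremum of all post-fixed points of $g$. Applying the monotone map $g$ to the inequality $\cofix g \Lincl g(\cofix g)$ just obtained yields $g(\cofix g) \Lincl g(g(\cofix g))$, which says precisely that $g(\cofix g)$ is itself a post-fixed point of $g$. It therefore lies below the supremum, giving $g(\cofix g) \Lincl \cofix g$, i.e.\ $f(\x \Lcup \G f \x) \Lincl \G f \x$. Combining the two inequalities yields the desired $\G f \x \equiv f(\x \Lcup \G f \x)$.

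I do not expect a genuine obstacle here: the argument is the standard proof that a Knaster--Tarski greatest post-fixed point is a fixed point, specialized to $g$. The only point requiring care is that \myref{fact:cofix_unfold} supplies only the post-fixed-point direction $\cofix g \Lincl g(\cofix g)$, so the converse inequality must be recovered separately by the monotonicity-and-supremum argument above rather than by appealing to a ready-made fixpoint equation. One should also keep in mind that every equality is replaced by $\equiv$ throughout, since $\X$ is only a complete prelattice; this causes no difficulty, because $\equiv$ is by definition exactly the conjunction of the two inequalities we establish.
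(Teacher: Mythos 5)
Your proof is correct, and it is genuinely more self-contained than what the paper offers: the paper states \myref{lem:gf_unfold} without proof, deferring (as with \myref{lem:gf_acc}) to Hur et al.\ and noting that in the formalization the lemmas come from the Paco library, where $\G$ is realized directly as a coinductively defined predicate rather than via the $\cofix$ operator of \myref{def:cofix}. You instead work entirely inside the paper's lattice-theoretic presentation: you identify $\G f \x$ as $\cofix g$ for the auxiliary monotone map $g = \lambda\y.\,f(\x\Lcup\y)$, get the post-fixed-point inequality $\cofix g \Lincl g(\cofix g)$ from \myref{fact:cofix_unfold}, and recover the converse by the standard Knaster--Tarski step --- applying $g$ once to show $g(\cofix g)$ is itself a post-fixed point, hence below the supremum $\cofix g$. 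All the ingredients you use (monotonicity of $\Lcup$ in each argument, the universal property of $\bigsqcup$) survive the weakening from lattice to prelattice, since they are stated purely in terms of $\Lincl$, and your remark that $\equiv$ is just the conjunction of the two inequalities handles the absence of anti-symmetry. What your route buys is a proof readable from the paper's own definitions; what the paper's route buys is fidelity to the actual Coq development, where this unfolding is a library fact about a differently defined $\G$ and need not be re-proved.
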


\begin{lemma}[Accumulate]
\label{lem:gf_acc}
$\y \Lincl \G f \x \iff \y \Lincl \G f (\x \Lcup \y)$.
\end{lemma}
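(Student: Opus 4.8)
The plan is to prove the two implications of the biconditional separately. Throughout I rely on the characterization $\cofix g \equiv \bigsqcup\set{u \mid u \Lincl g\,u}$ from \myref{def:cofix}: the join is an upper bound of the set of post-fixed points of $g$, so \emph{any} $u$ with $u \Lincl g\,u$ already satisfies $u \Lincl \cofix g$. I also use that $\G f$ is monotone in its argument (noted in \myref{def:pgf}) together with the unfolding equation of \myref{lem:gf_unfold}.

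\textbf{Forward direction.} Here I would assume $\y \Lincl \G f \x$ and argue purely by monotonicity. Since $\x \Lincl \x \Lcup \y$ and $\G f$ is monotone, we get $\G f \x \Lincl \G f (\x \Lcup \y)$, and chaining with the hypothesis yields $\y \Lincl \G f \x \Lincl \G f (\x \Lcup \y)$. This direction needs nothing beyond monotonicity of $\G f$.

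\textbf{Backward direction.} This is the crux. Assume $\y \Lincl \G f (\x \Lcup \y)$ and set $\z := \G f (\x \Lcup \y)$, so the hypothesis reads $\y \Lincl \z$. Instantiating \myref{lem:gf_unfold} at the point $\x \Lcup \y$ gives
$$\z \equiv f\bigl((\x \Lcup \y) \Lcup \z\bigr).$$
Because $\y \Lincl \z$, the join satisfies $\y \Lcup \z \equiv \z$, hence $(\x \Lcup \y) \Lcup \z \equiv \x \Lcup \z$; applying the monotone $f$ (which therefore respects $\equiv$) yields $\z \Lincl f(\x \Lcup \z)$. Thus $\z$ is a post-fixed point of $\lambda u.\,f(\x \Lcup u)$, so it lies below the greatest such element: $\z \Lincl \nu u.\,f(\x \Lcup u) = \G f \x$. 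Combining with $\y \Lincl \z$ gives $\y \Lincl \G f \x$, as required.

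The step I expect to be the main obstacle is the backward direction, specifically the act of \emph{accumulating} the hypothesis $\y \Lincl \z$ back into the argument of $f$, turning $f\bigl((\x\Lcup\y)\Lcup\z\bigr)$ into $f(\x\Lcup\z)$ so that $\z$ becomes a genuine post-fixed point of the functional defining $\G f \x$ rather than of the one for $\G f(\x\Lcup\y)$. Making this substitution rigorous requires care with $\equiv$ in place of equality, since the complete prelattice need not be antisymmetric: every intermediate identity must be phrased as a $\Lincl$ (or a two-sided $\Lincl$), and monotonicity of $f$ must be invoked explicitly to transport the equivalence $(\x \Lcup \y) \Lcup \z \equiv \x \Lcup \z$ through $f$.
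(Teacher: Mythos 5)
Your proof is correct. The paper itself gives no argument for this lemma---it simply defers to Hur et al.~\cite{DBLP:conf/popl/HurNDV13}---and your two directions (monotonicity of $\G f$ for one implication; showing $\z := \G f(\x \Lcup \y)$ is a post-fixed point of $\lambda u.\,f(\x \Lcup u)$ via the unfold equation and absorption $\y \Lcup \z \equiv \z$ for the other) are exactly the standard accumulation argument from that reference, including the correct care taken with $\equiv$ in the absence of antisymmetry.
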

\begin{proof}
See \cite{DBLP:conf/popl/HurNDV13}.
\end{proof}

\begin{corollary}
\label{lem:gf_ind}
If $\forall z, x \Lincl z \impl y \Lincl z \impl y \Lincl \G f z$ then $y \Lincl \G f x$.
\end{corollary}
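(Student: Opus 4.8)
The plan is to reduce the goal to a single application of the hypothesis, using the Accumulate lemma to absorb $y$ into the accumulator. First I would observe that by \myref{lem:gf_acc} the goal $y \Lincl \G f x$ is equivalent to $y \Lincl \G f (x \Lcup y)$, so it suffices to establish this latter statement. The key idea is that the accumulator $x \Lcup y$ is exactly strong enough to trigger the hypothesis: it dominates both $x$ and $y$, which are precisely the two premises the hypothesis demands.

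Next I would instantiate the universally quantified hypothesis at $z := x \Lcup y$. This leaves two premises to discharge, namely $x \Lincl x \Lcup y$ and $y \Lincl x \Lcup y$. Both hold immediately, since in a \Xtype{} the operation $\Lcup$ is the join, hence an upper bound of its arguments. Feeding these in, the hypothesis delivers $y \Lincl \G f (x \Lcup y)$, and then the $\iff$ of \myref{lem:gf_acc} turns this back into $y \Lincl \G f x$, which is the desired conclusion.

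I expect essentially no obstacle here: the entire content of the argument is choosing the accumulator $x \Lcup y$ and recognising that the two side conditions are just the defining upper-bound properties of the join. The one point worth flagging is that we work in a complete prelattice rather than a lattice, so the ``equality'' $\equiv$ appearing in the surrounding lemmas is really $\Lincl$ in both directions; but since this proof manipulates only $\Lincl$-statements together with the equivalence supplied by Accumulate, anti-symmetry is never invoked and the reasoning goes through unchanged.
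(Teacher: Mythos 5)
Your proof is correct and is exactly the intended derivation: the paper states this as a corollary of \myref{lem:gf_acc} (Accumulate), and instantiating the hypothesis at $z := x \Lcup y$, discharging the two premises via the upper-bound property of $\Lcup$, and then applying the right-to-left direction of Accumulate is precisely that argument. Your remark that anti-symmetry is never needed (so the prelattice setting causes no issue) is also accurate.
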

\myref{rem:outline} outlines the usage of \myref{lem:gf_ind} as coinductive proof principle.
The definition of $\G$ and its lemmas are provided by the Paco library~\cite{DBLP:conf/popl/HurNDV13}.
The Paco library realizes $\G$ directly as a coinductively defined predicate,
instead of using the cofixed point operator we defined for this presentation in~\myref{def:cofix}.

\section{Similarity and Bisimilarity as Parameterized Greatest Fixed Point}

\newcommand{\SimType}{\ensuremath{\textsf{STy}}}
\newcommand{\STBi}{\ensuremath{\textsf{bisim}}}
\newcommand{\STSi}{\ensuremath{\textsf{sim}}}

We obtain definitions equivalent to similarity and bisimilarity
with the fixed point operator $\G$ from a single function.
The use of a single function allows us to show many properties which hold for both, similarity and bisimilarity, with one lemma.
This saves a lot of repetition particularly in the proof of transitivity.

\begin{definition}
We define the function $\SimF$ that generates similarity and bisimilarity in \myref{fig:sim_functional}.
\end{definition}
\begin{figure}
\begin{align*}
&\SimType\ni{s}::=\STBi~|~\STSi\\
&\SimF:(\SimType\to\Sigma\to\Sigma\to\Prop)\to(\SimType\to\Sigma\to\Sigma\to\Prop)\\
&\SimF\,r\,p\,\config_1\,\config_2:= (\exists w.~ \config_1\trmg{w}{} \land \config_2\trmg{w}{})&&\nrule{Bisim-Term}
\\
  &\,\quad \lor (\exists \config_1'\config_2'.~ \config_1\fevalsg{}{}^{+}\config_1' \land \config_2\fevalsg{}{}^{+}\config_2' \land r\,p\,\config_1'\,\config_2')&&\nrule{Bisim-Step}
\\
  &\,\quad \lor (\exists \config_1'\config_2'.~ \config_1\fevalsg{}{}^{+}\config_1' \land \config_2\fevalsg{}{}^{+}\config_2' \\
  &\,\quad\quad \land \config_1', \config_2' ~ \textup{ready} \land \config'_1 \stackrel{r\,p}{\leadsto}\config'_2 \land \config'_2 \stackrel{r\,p}{\leadsto} \config'_1) &&\nrule{Bisim-Extern}
\\
  &\,\quad \lor (p=\STSi \\
  &\,\quad\quad \land   \exists \config_1'.~ \config_1\fevals^*\config'_1 \land \config'_1\,\textup{terminal} \land \res\,\config'_1=\bot)&&\nrule{Sim-Error}
\end{align*}
\caption{Generating Function for Simulation and Bisimulation. Each disjunct corresponds to a rule from \myref{fig:bisim_rules}.}
\label{fig:sim_functional}
\end{figure}

\begin{remark}[Outline of Parametric Co-Induction]
\label{rem:outline}
A parameterized coinduction using $\SimF$ always has a conclusion of the form $R \incl \G\,\SimF\,r\,p$ for some relations $R$ and $r$.
Applying \myref{lem:gf_ind} sets up the coinduction: We have to show $R \incl \G\,\SimF\,r'\,p$ but can assume $r \incl r'$ and $R \incl r'$.
The assumption $R \incl r'$ is the coinductive hypothesis.
The proof typically proceeds by unfolding $\G$ according to \myref{lem:gf_unfold}: $R \incl \SimF(r'\cup\G\,\SimF\,r')\,p$. Unfolding exposes the generating function $\SimF$, each disjunct of which corresponds to a constructor (cf. \myref{fig:bisim_rules}).
In places where the constructor uses co-recursion, the function $\SimF$ applies its parameter $r$.
In our proof, the parameter is $r'\cup\G\,\SimF\,r'$.
This ensures that the co-hypothesis $R \incl r'$ is only applied after one of the constructors has been ``used''.
The parameter in the definition of~$\G$ encodes the productivity requirement semantically.
\end{remark}

\begin{lemma}
\label{lem:sim_gen_trans}
Let $p:\SimType$ and $\config_1,\config_2,\config_3:\configs$.
If $\G\,{\SimF}\,\bot\,p\,\config_1\,\config_2$
and $\G\,{\SimF}\,\bot\,p\,\config'_2\,\config_3$
and $\config_2\fevalsg{}{}^{*}\config'_2$ or $\config'_2\fevalsg{}{}^{*}\config_2$
then $\G\,{\SimF}\,\bot\,p\,\config_1\,\config_3$.
\end{lemma}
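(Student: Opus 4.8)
The plan is to prove the stronger, reduction-closed statement $R \incl \G\,\SimF\,\bot$, where $R$ is the relation that packages exactly the three hypotheses. Concretely I define, on the complete prelattice $\SimType\to\configs\to\configs\to\Prop$,
\[
R\,p\,\config_1\,\config_3 \;:=\; \exists\,\config_2\,\config_2'.\; \G\,\SimF\,\bot\,p\,\config_1\,\config_2 \;\land\; \G\,\SimF\,\bot\,p\,\config_2'\,\config_3 \;\land\; (\config_2\fevalsg{}{}^{*}\config_2' \lor \config_2'\fevalsg{}{}^{*}\config_2).
\]
The lemma is then the pointwise instance of $R \incl \G\,\SimF\,\bot$ obtained by supplying the given $\config_2,\config_2'$ as witnesses. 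Before starting the coinduction I record the two determinism facts that drive every case: since $\fevalsg{}{}$ is silent ($\tau$) reduction, action- and $\tau$-determinism make it a deterministic partial function, so the disjunction $\config_2\fevalsg{}{}^{*}\config_2'\lor\config_2'\fevalsg{}{}^{*}\config_2$ says $\config_2$ and $\config_2'$ lie on one common silent path. Consequently silent-reduction-related configurations co-terminate with the same result, and (by a short induction on the number of silent steps to termination, unfolding $\G\,\SimF\,\bot$) termination propagates across the relation: if $\G\,\SimF\,\bot\,p\,x\,y$ and $y\trmg{w}{}$ with $w$ a value then $x\trmg{w}{}$, except that for $p=\STSi$ the left may instead reach a stuck terminal.

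To run the coinduction I apply \myref{lem:gf_ind} with $f=\SimF$, $x=\bot$, $y=R$: since $\bot\incl z$ always holds, it suffices to show $R\incl\G\,\SimF\,z$ for an arbitrary $z$ with $R\incl z$, the assumption $R\incl z$ being the coinductive hypothesis. Unfolding the goal once with \myref{lem:gf_unfold} reduces it to $R\incl\SimF(z\Lcup\G\,\SimF\,z)$, which exposes the four disjuncts of $\SimF$ (cf.\ \myref{rem:outline}). Fixing $(p,\config_1,\config_3)\in R$ with witnesses $\config_2,\config_2'$, I unfold both stored relations via \myref{fact:cofix_unfold} and \myref{lem:gf_unfold} into the form $\SimF(\G\,\SimF\,\bot)$ and proceed by case analysis on their top-level disjuncts, using the recorded determinism facts to align the middle silent reductions.

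In each case the recipe is the same: use determinism to transport the middle slack so that the pair of successor configurations again satisfies $R$, and discharge the resulting corecursive obligation through the parameter, using $R\incl z\incl z\Lcup\G\,\SimF\,z$. The \nrule{Bisim-Term}/\nrule{Bisim-Term} case forces equal results (uniqueness of silent normal forms) and closes by the \nrule{Bisim-Term} disjunct. Genuine \nrule{Bisim-Step}/\nrule{Bisim-Step} and the \nrule{Bisim-Extern} cases step both configurations, re-establish the still-comparable silent slack between the successors, and close by the \nrule{Bisim-Step} respectively \nrule{Bisim-Extern} disjunct with the successors landing in $R$. For $p=\STSi$, whenever the left side is found to reduce to a stuck terminal the \nrule{Sim-Error} disjunct closes the goal immediately, irrespective of the right side.

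I expect the main obstacle to be the mixed cases in which the two premises choose incompatible disjuncts — typically one side taking silent steps while the other has already terminated or become ready. These do not close by a matching disjunct; instead they must be collapsed using the co-termination and termination-propagation facts above, to conclude \nrule{Bisim-Term} (or, for similarity, \nrule{Sim-Error}). The two subsidiary points of care are (i) the determinism bookkeeping needed to prove that the various silent segments really do align on a single deterministic path, and (ii) respecting the productivity discipline of \myref{rem:outline}, namely applying the coinductive hypothesis $R\incl z$ only after a disjunct of $\SimF$ has been consumed, so that the parameter $z\Lcup\G\,\SimF\,z$ is what guards the recursive use.
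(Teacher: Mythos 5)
Your proposal is correct and follows essentially the same route as the paper: the paper's proof is precisely a parameterized coinduction whose body is a case analysis on the two unfolded premises, with the IDRS determinism conditions used to align the middle configurations along a single silent path — all of which the paper compresses into ``case analysis \dots{} not difficult, but tedious.'' Your write-up merely supplies the coinduction setup (via \myref{lem:gf_ind}), the packaged relation $R$ with the silent-reduction slack, and the co-termination bookkeeping that the paper leaves implicit.
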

\begin{proof}
The proof is by case analysis on $\G\,{\SimF}\,\bot\,p\,\config_1\,\config_2$
and $\G\,{\SimF}\,\bot\,p\,\config'_2\,\config_3$.
The cases are not difficult, but tedious.
\end{proof}

\newcommand{\GSim}[2]{\ensuremath{\GSimN^{#1}_{#2}}}
\begin{definition}
Let $r:\SimType\to\configs\to\configs\to\Prop$. We define:
\begin{align*}
{\GSim{p}{r}}&:=\G\,{\SimF}\,r\,p\\
\SimN_r&\mathrel{:=}{\GSim{\STSi}{r}}\\
\BisimN_r&\mathrel{:=}{\GSim{\STBi}{r}}
\end{align*}
\end{definition}

\begin{lemma}
$\SimN_\bot$ is a preorder.
\end{lemma}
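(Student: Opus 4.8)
The plan is to establish the two defining properties of a preorder for $\SimN_\bot = \G\,\SimF\,\bot\,\STSi$ separately: transitivity and reflexivity. Transitivity I would obtain for free from \myref{lem:sim_gen_trans}. Given $\config_1 \SimN_\bot \config_2$ and $\config_2 \SimN_\bot \config_3$, i.e.\ $\G\,\SimF\,\bot\,\STSi\,\config_1\,\config_2$ and $\G\,\SimF\,\bot\,\STSi\,\config_2\,\config_3$, I would instantiate the lemma with $p := \STSi$ and $\config'_2 := \config_2$; its side condition $\config_2 \fevalsg{}{}^{*}\config_2$ is discharged by reflexivity of $\fevalsg{}{}^{*}$ (take zero steps). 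This immediately yields $\G\,\SimF\,\bot\,\STSi\,\config_1\,\config_3$, that is $\config_1 \SimN_\bot \config_3$.

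Reflexivity requires a parameterized coinduction. I would let $\Delta$ be the diagonal relation defined by $\Delta\,p\,\config_1\,\config_2 := (\config_1 = \config_2)$ for every $p$, and prove $\Delta \incl \G\,\SimF\,\bot$; instantiating at $p = \STSi$ and $\config_1 = \config_2 = \config$ then gives $\config \SimN_\bot \config$ (and, as a bonus, reflexivity of $\BisimN_\bot$ as well). Following \myref{rem:outline}, I would apply \myref{lem:gf_ind} with $x := \bot$: the premise $\bot \incl z$ is trivial, so I may assume the coinductive hypothesis $\Delta \incl z$ and must show $\Delta \incl \G\,\SimF\,z$. Unfolding with \myref{lem:gf_unfold} reduces this to $\Delta \incl \SimF\,r'$ with $r' := z \sqcup \G\,\SimF\,z$, where $\Delta \incl z \incl r'$.

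It then remains to fix $p$ and a diagonal pair $(\config,\config)$ and exhibit one disjunct of $\SimF\,r'\,p\,\config\,\config$ by case analysis on $\config$. If $\config$ is terminal I would use \nrule{Bisim-Term} with $w := \res\,\config \in \Valb$, since $\config\trmg{w}{}$ holds in zero steps on both sides (this uniformly covers terminal configurations with a value and stuck ones). If $\config$ takes a silent step $\config \fevalsg{}{}\config'$ I would use \nrule{Bisim-Step}, stepping to $\config'$ on both sides and discharging the residual $r'\,p\,\config'\,\config'$ from $\Delta \incl r'$. If $\config$ is ready I would use \nrule{Bisim-Extern} with zero silent steps on both sides, matching each external transition $\config\fevalsg{}{\psi}\config''$ by the identical transition and again appealing to $\Delta \incl r'$. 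The main points requiring care are the exhaustiveness of this split --- every configuration is terminal, about to take a silent step (as forced by $\tau$-determinism), or ready --- and the \nrule{Bisim-Extern} case, whose two-directional simulation obligation looks delicate but is immediate here because the two sides are literally the same configuration, so each transition is matched by itself. None of these steps involves a genuine difficulty; the only real content is setting up the coinduction so that the hypothesis $\Delta \incl z$ is applied strictly below a used constructor, which the parameterized fixpoint handles automatically.
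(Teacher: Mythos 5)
Your proposal is correct and matches the paper's own proof, which reads in its entirety ``Reflexivity is trivial; transitivity is an instance of \myref{lem:sim_gen_trans}'': your transitivity argument is exactly that instantiation (with $\config'_2 := \config_2$ and zero steps), and your diagonal coinduction via \myref{lem:gf_ind} is the standard argument the paper compresses into ``trivial.'' The only point worth noting is that your use of zero silent steps in the \nrule{Bisim-Extern} case relies on the $\fevalsg{}{}^{*}$ formulation of that rule from \myref{fig:bisim_rules} rather than the $\fevalsg{}{}^{+}$ appearing in \myref{fig:sim_functional} (an inconsistency internal to the paper), which is the sensible reading since a ready configuration can take no silent step.
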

\begin{proof}
Reflexivity is trivial; transitivity is an instance of \myref{lem:sim_gen_trans}.
\end{proof}

\begin{lemma}
$\BisimN_\bot$ is an equivalence relation.
\end{lemma}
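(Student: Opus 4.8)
The goal is to show that $\BisimN_\bot$ is an equivalence relation, i.e.\ reflexive, symmetric, and transitive. Since $\BisimN_\bot = \GSim{\STBi}{\bot} = \G\,\SimF\,\bot\,\STBi$, the plan is to establish the three properties separately, reusing the work already done for similarity.

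\textbf{Reflexivity and transitivity.}
The plan is to dispatch these exactly as in the preceding lemma for $\SimN_\bot$. Reflexivity $\config \BisimN_\bot \config$ should follow by a routine parameterized coinduction: using \myref{lem:gf_ind} one reduces to showing that a configuration is related to itself one step at a time, choosing the \nrule{Bisim-Step} disjunct of $\SimF$ when $\config$ reduces silently, the \nrule{Bisim-Term} disjunct when $\config$ terminates, and the \nrule{Bisim-Extern} disjunct when $\config$ is ready to perform a system call; in each case the forward-simulation obligations are met by matching a transition with itself. Transitivity is a direct instance of \myref{lem:sim_gen_trans} specialized to $p=\STBi$, taking $\config_2' = \config_2$ so that the side condition $\config_2\fevalsg{}{}^{*}\config_2$ holds trivially by reflexivity of $\fevalsg{}{}^{*}$.

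\textbf{Symmetry.}
This is the only property genuinely new to the bisimulation case, and I expect it to be the main (though still routine) obstacle. The statement to prove is $\config_1 \BisimN_\bot \config_2 \impl \config_2 \BisimN_\bot \config_1$, i.e.\ $\{(\config_1,\config_2) \mid \config_2 \BisimN_\bot \config_1\} \incl \BisimN_\bot$. The plan is again to apply \myref{lem:gf_ind} to set up a coinduction, unfold $\G$ by \myref{lem:gf_unfold} to expose $\SimF$, and then observe that the generating function $\SimF$ at $p=\STBi$ is symmetric in $\config_1,\config_2$ disjunct-by-disjunct: the \nrule{Bisim-Term} disjunct is symmetric because equality of result values is symmetric; the \nrule{Bisim-Step} disjunct is symmetric because its conjunction simply swaps under exchanging the two configurations; and the \nrule{Bisim-Extern} disjunct is symmetric because the two forward-simulation conjuncts $\config_1' \stackrel{r\,p}{\leadsto}\config_2'$ and $\config_2' \stackrel{r\,p}{\leadsto}\config_1'$ are already presented as a symmetric pair (and ``ready'' is symmetric). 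The only point requiring care is that the recursive occurrences of $r\,p$ must be fed the swapped pairs, which is exactly what the coinductive hypothesis supplied by \myref{lem:gf_ind} provides. Crucially, the \nrule{Sim-Error} disjunct, which would break symmetry, is absent at $p=\STBi$, which is why this argument works for $\BisimN_\bot$ but not for $\SimN_\bot$.

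\textbf{Expected difficulty.}
None of the three parts is conceptually hard; the work is the bookkeeping of the parameterized-coinduction setup (one application of \myref{lem:gf_ind}, one unfolding via \myref{lem:gf_unfold}, and a case split on the disjuncts of $\SimF$). The one subtlety to watch is ensuring the coinductive hypothesis is applied only underneath a ``used'' constructor, as explained in \myref{rem:outline}; since the disjuncts of $\SimF$ each take at least one concrete transition or terminal step before recursing through the parameter $r$, the productivity requirement is automatically respected.
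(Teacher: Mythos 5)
Your proposal is correct and takes essentially the same route as the paper, whose entire proof reads ``Reflexivity and symmetry are trivial; transitivity is an instance of \myref{lem:sim_gen_trans}'' --- exactly your decomposition, with the trivial parts spelled out. One fine point in your symmetry argument: in the \nrule{Bisim-Extern} disjunct the successor pairs already come oriented in both directions (that disjunct packages the two simulation conjuncts together), so there you conclude by monotonicity of $\G\,\SimF$ in its parameter (from $\bot \Lincl z$) rather than by the coinductive hypothesis, which is genuinely needed only in the \nrule{Bisim-Step} disjunct.
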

\begin{proof}
Reflexivity and symmetry are trivial; transitivity is an instance of \myref{lem:sim_gen_trans}.
\end{proof}

The following theorem establishes trust in our non-standard setup.
The definitions obtained from parameterized coinduction and the function $\SimF$ are equivalent to the more basic definitions from \myref{sec:sim_bisim_std}.
\begin{lemma}
$\SimN\equiv\SimN_\bot$ and $\BisimN\equiv\BisimN_\bot$.
\end{lemma}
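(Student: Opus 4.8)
The goal is to show $\SimN \equiv \SimN_\bot$ and $\BisimN \equiv \BisimN_\bot$, where the left-hand sides are the coinductively defined greatest relations closed under the rules in Figure~\ref{fig:bisim_rules}, and the right-hand sides are $\G\,\SimF\,\bot\,p$ for $p = \STSi$ and $p = \STBi$ respectively. The plan is to show that both descriptions characterize the greatest fixed point of the same monotone functional, and then invoke uniqueness of greatest fixed points up to $\equiv$.

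\textbf{Approach.} First I would make the functional underlying the rule-based definitions explicit. Reading off Figure~\ref{fig:bisim_rules}, the rules \nrule{Bisim-Term}, \nrule{Bisim-Silent} (which the paper renames \nrule{Bisim-Step}), and \nrule{Bisim-Extern} define a monotone operator $F_{\STBi}$ on relations of type $\configs\to\configs\to\Prop$, and adding \nrule{Sim-Error} gives $F_{\STSi}$; by definition $\BisimN = \nu F_{\STBi}$ and $\SimN = \nu F_{\STSi}$. The key observation is that these are exactly the functionals obtained by instantiating $\SimF$ at a fixed mode: for each $p$, the section $\lambda R.\,\SimF\,(\lambda p'.R)\,p$ agrees disjunct-for-disjunct with $F_p$, since Figure~\ref{fig:sim_functional} was constructed so that each disjunct corresponds to one rule. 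So the two setups are built from essentially one functional; the only gap is the indirection through the mode argument $\SimType$ and the parameter of $\G$.

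\textbf{Key steps.} I would proceed as follows. (1) By \myref{lem:gf_init} we have $\G\,\SimF\,\bot\,p \equiv \nu(\SimF\,\cdot)$ read at mode $p$; more precisely $\SimN_\bot = \G\,\SimF\,\bot\,\STSi$ and $\BisimN_\bot = \G\,\SimF\,\bot\,\STBi$ are the mode-$p$ components of the greatest fixed point $\nu\,\SimF$ of the combined functional on $\SimType\to\configs\to\configs\to\Prop$. (2) I would check that the mode-$\STBi$ component of $\nu\,\SimF$ does not depend on the $\STSi$ component and vice versa in a way that breaks the correspondence: inspecting $\SimF$, the recursive occurrences of $r$ are always applied at the \emph{same} mode $p$ that is being defined, so $\nu\,\SimF$ decomposes into independent fixed points at each mode, giving $(\nu\,\SimF)\,\STBi \equiv \nu F_{\STBi}$ and $(\nu\,\SimF)\,\STSi \equiv \nu F_{\STSi}$. (3) Combining (1) and (2) yields $\BisimN_\bot \equiv \nu F_{\STBi} = \BisimN$ and $\SimN_\bot \equiv \nu F_{\STSi} = \SimN$. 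Throughout I would use that greatest fixed points of monotone operators are unique up to the prelattice equivalence $\equiv$, which follows from \myref{def:cofix} and \myref{fact:cofix_unfold} by the standard mutual-inclusion argument ($\nu f \Lincl f(\nu f)$ plus the fact that $\nu f$ dominates every post-fixed point).

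\textbf{Main obstacle.} The delicate point is step~(2): justifying that parameterizing the functional (the $\x \Lcup \y$ inside $\G\,f\,\x := \nu\y.f(\x\Lcup\y)$) and the extra mode argument do not change the relation computed at $\bot$. Concretely, I must show $\G\,\SimF\,\bot\,p \equiv \nu(\lambda R.\SimF(\lambda\_. R)\,p)$, i.e. that initializing the parameterized fixed point at $\bot$ recovers the plain greatest fixed point at each mode. The containment $\G\,\SimF\,\bot \Lincl \nu\,\SimF$ and its converse both follow from \myref{lem:gf_init} ($\nu\SimF \equiv \G\,\SimF\,\bot$), so the real content is purely the disjunct-by-disjunct matching of $\SimF$ against the rules and the mode-independence of the recursion, both of which are routine but must be done carefully to handle the two-sided \nrule{Bisim-Extern} premise and the asymmetric \nrule{Sim-Error} disjunct guarded by $p=\STSi$. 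I expect the verification that \nrule{Bisim-Silent} with its transitive-closure premises $\config\fevalsg{}{}^{+}\config'$ corresponds exactly to the \nrule{Bisim-Step} disjunct to be the fiddliest bookkeeping, but no genuinely new idea is needed beyond the fixed-point uniqueness already available from the preceding lemmas.
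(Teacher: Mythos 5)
Your overall route is sound, and there is in fact no paper proof to compare it against: the lemma is stated bare, with the formal argument deferred to the Coq development. The three steps you give are the natural argument. Step (2), the mode decomposition, is correct and worth spelling out exactly as you indicate: since $\SimF\,r\,p$ consults its parameter $r$ only at the same mode $p$, a relation $r$ is a post-fixed point of $\SimF$ iff each component $r\,p$ is a post-fixed point of the mode-$p$ functional $F_p$; suprema in the function prelattice are pointwise, so $(\nu\,\SimF)\,p \equiv \nu F_p$, where the converse inclusion uses \myref{fact:cofix_unfold} and hence monotonicity of $F_p$, which holds because $r$ occurs only positively in $\SimF$. Combined with \myref{lem:gf_init} this gives $\SimN_\bot \equiv \nu F_{\STSi}$ and $\BisimN_\bot \equiv \nu F_{\STBi}$, and since the paper's $\cofix$ is defined as the supremum of post-fixed points, ``greatest relation closed under the rules'' is literally $\nu F_p$, so only your step (3) remains.

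One concrete caveat on step (3), which you dismiss as routine bookkeeping: as literally printed, the figures do \emph{not} match disjunct-for-disjunct, and the mismatch is not where you expect it. Your worry about \nrule{Bisim-Silent} versus \nrule{Bisim-Step} is unfounded (both use $\fevalsg{}{}^{+}$), but the \nrule{Bisim-Extern} rule of \myref{fig:bisim_rules} reaches the ready pair via $\fevalsg{}{}^{*}$, whereas the \nrule{Bisim-Extern} disjunct of $\SimF$ in \myref{fig:sim_functional} demands $\fevalsg{}{}^{+}$. This difference has semantic content: a ready configuration has no silent transitions ($\tau$-determinism of an IDRS forbids a configuration to have both a labelled and a silent successor), so two configurations that are already ready and mutually simulate each other's external transitions --- e.g.\ two copies of a program that performs a system call and then recurses --- are related by the rules but satisfy no disjunct of the literal $\SimF$. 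Under the literal reading the stated equivalence (and even reflexivity of $\BisimN_\bot$, which the paper asserts is trivial) would be false. So your proof goes through only after recognizing this as an erratum and normalizing both definitions to use the same closure ($\fevalsg{}{}^{*}$) in the Extern case, as the formalization must; a careful write-up of step (3) should say so explicitly rather than treat the matching as mechanical.
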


\subsection{Properies of Similarity and Bisimilarity}
The following admissible rules allow us to retract to reduction successors of states when showing (bi)similarity.

\begin{lemma} The rules in \myref{fig:bisimprops} are admissible.
\label{lem:sim_expansion_closed}
\label{lem:sim_Y}
\end{lemma}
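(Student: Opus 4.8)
The plan is to prove the admissibility of the expansion-closure rules in \myref{fig:bisimprops} by unfolding the parameterized greatest fixed point and directly constructing a witness for the generating function \SimF. Since the lemma asserts admissibility of rules that let us retract to reduction successors, the statement in question concerns rules of roughly the form: if $\config_1 \fevalsg{}{}^{*}\config_1'$ and $\config_2 \fevalsg{}{}^{*}\config_2'$ (silently), and $\config_1' \mathrel{\GSim{p}{r}} \config_2'$, then $\config_1 \mathrel{\GSim{p}{r}} \config_2$ (and symmetric/one-sided variants). The key tool is \myref{lem:gf_unfold}, which gives $\G\,\SimF\,x \equiv \SimF(x \Lcup \G\,\SimF\,x)$, letting me expose the five disjuncts of \SimF{} and rebuild a proof that absorbs a prefix of silent steps.

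First I would fix the parameter $r$ and the mode $p$, and unfold the hypothesis $\config_1' \mathrel{\GSim{p}{r}} \config_2'$ one step via \myref{lem:gf_unfold} to obtain $\SimF\,(r \Lcup \GSim{p}{r})\,p\,\config_1'\,\config_2'$. Then I would perform a case analysis on which disjunct holds. In the \nrule{Bisim-Step} and \nrule{Bisim-Extern} cases, the witnesses already begin with $\config_1' \fevalsg{}{}^{+}\dots$ and $\config_2' \fevalsg{}{}^{+}\dots$; prepending the given silent prefixes $\config_i \fevalsg{}{}^{*}\config_i'$ using transitivity of $\fevalsg{}{}^{*}$ (and the fact that $* \cdot + \subseteq +$) yields transitions $\config_i \fevalsg{}{}^{+}\dots$ witnessing the same disjunct for $\config_1,\config_2$. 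In the \nrule{Bisim-Term} case, I would use that $\config_i'\trmg{w}{}$ together with $\config_i \fevalsg{}{}^{*}\config_i'$ gives $\config_i\trmg{w}{}$ directly, since termination is defined up to $\red^\ast$. The \nrule{Sim-Error} case ($p=\STSi$) is analogous: a stuck reduction of $\config_1'$ extends to a stuck reduction of $\config_1$ by prepending the silent prefix.

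The one genuine subtlety I anticipate is the degenerate case where the silent prefix $\config_i \fevalsg{}{}^{*}\config_i'$ has \emph{length zero} in a disjunct that needs strictly positive transitions. In \nrule{Bisim-Step} this is harmless because the postfix is already a $+$-transition, so $*\cdot+$ remains a $+$; but if one were proving a pure expansion rule whose hypothesis used only $*$-reductions on one side, I would need to be careful that I am not silently weakening $+$ to $*$. Here the soundness for IDRS relies on $\tau$-determinism (axiom 6): because silent steps are deterministic, retracting to a silent successor cannot introduce spurious matching behavior. I would appeal to this determinism to argue the retraction is reversible in the sense needed for the backward (second) simulation clause of \nrule{Bisim-Extern}, where both $\config_1' \stackrel{r\,p}{\leadsto}\config_2'$ and $\config_2' \stackrel{r\,p}{\leadsto}\config_1'$ must be re-established for $\config_1,\config_2$.

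The main obstacle, then, is not any single case but the bookkeeping of the \nrule{Bisim-Extern} disjunct, where the simulation relations $\stackrel{r\,p}{\leadsto}$ in both directions must be carried through the prefix extension without disturbing the ready-configurations $\config_1',\config_2'$ that were reached. Since these readiness witnesses and the $\leadsto$-relations are stated \emph{at} the successors, and the prefixes $\config_i \fevalsg{}{}^{*}\config_i'$ merely precede them, the same witnesses transfer verbatim; the only work is reassembling the existential package with the longer transition sequences. I expect each individual rule in \myref{fig:bisimprops} to follow by this same unfold-and-prepend pattern, differing only in which disjunct is targeted and whether the prefix is applied on the left, the right, or both.
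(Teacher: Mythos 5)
Your treatment of \nrule{Sim-Expansion-Closed} is correct and is the standard argument: unfold $\config'_1 \GSim{p}{r} \config'_2$ with \myref{lem:gf_unfold}, do a case analysis on the disjuncts of $\SimF$, prepend the silent prefixes (using that $\fevalsg{}{}^{*}\cdot\fevalsg{}{}^{+}\subseteq\fevalsg{}{}^{+}$ for \nrule{Bisim-Step} and \nrule{Bisim-Extern}, and that $\trmg{w}{}$ and the error clause are already closed under $\fevalsg{}{}^{*}$-prefixes), and reassemble via \myref{lem:gf_unfold}; the $\leadsto$-witnesses and readiness indeed transfer verbatim. (The paper states this lemma without proof, so the comparison can only be on the merits of your argument.)

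The genuine gap is the second rule of \myref{fig:bisimprops}: \nrule{Sim-Retract} is not a ``symmetric/one-sided variant'' of expansion closure and does not follow from your unfold-and-prepend pattern. There the hypothesis relates $\config_1$ and $\config_2$, while the conclusion concerns \emph{different} configurations $\config'_1, \config'_2$ that merely share silent successors with them ($\config_1 \fevalsg{}{} \config''_1$ and $\config'_1 \fevalsg{}{} \config''_1$, similarly on the right). Prepending is useless here; you must first \emph{strip} the unfolded witnesses for $\config_1, \config_2$ down through the common successors. This is exactly where internal determinism enters: since the IDRS is action- and $\tau$-deterministic, the silent step $\config_1 \fevalsg{}{} \config''_1$ is the \emph{unique} transition of $\config_1$, so every witness sequence $\config_1 \fevalsg{}{}^{+} \delta$ (in \nrule{Bisim-Step}, \nrule{Bisim-Extern}, and likewise the termination and error clauses, whose sequences are nonempty because $\config_1$ is not terminal) factors as $\config_1 \fevalsg{}{} \config''_1 \fevalsg{}{}^{*} \delta$; only after this factoring can you attach $\config'_1 \fevalsg{}{} \config''_1$ in front to obtain witnesses for $\config'_1$. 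Without this determinism argument the rule is simply false: $\config'_1$ could have additional external behavior not reflected in $\config_1$. Your single mention of $\tau$-determinism is misplaced — you invoke it for the \nrule{Bisim-Extern} clause of the \emph{expansion} rule, where it is not needed (as you yourself observe one paragraph later, those witnesses transfer verbatim) — while the rule that actually requires it is left unproved.
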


\begin{figure}
\begin{center}
\begin{topprooftree}{{\small\nrule{Sim-Expansion-Closed}}}
  \AxiomC{$\config_1 \fevalsg{}{}^* \config'_1$}
  \AxiomC{$\config_2 \fevalsg{}{}^* \config'_2$}
  \AxiomC{$\config'_1 \GSim{p}{r} \config'_2$}
  \TrinaryInfC{$\config_1 \GSim{p}{r} \config_2$}
\end{topprooftree}
\begin{topprooftree}{\small\nrule{Sim-Retract}}
  \AxiomC{$\config_{1} \fevalsg{}{} \config''_1$}
  \noLine
  \UnaryInfC{$\config'_{1} \fevalsg{}{} \config''_1$}
  \AxiomC{$\config_{2} \fevalsg{}{} \config''_2$}
  \noLine
  \UnaryInfC{$\config'_{2} \fevalsg{}{} \config''_2$}
  \AxiomC{$\config_{1} \GSim{p}{r} \config_2$}
  \TrinaryInfC{$\config'_{1} \GSim{p}{r} \config_2'$.}
\end{topprooftree}
\end{center}
\caption{Closedness under Expansion and Retraction}
\label{fig:bisimprops}
\end{figure}

\subsection{Structural Rules}
The following lemmas are formulated with respect to $\GSim{p}{r}$, which allows us to use one proof to show a property of both simulation and bisimulation.


\begin{figure}
  \begin{center}
    \begin{topprooftree}{{\small\nrule{Sim-Let-Op}}}
      \AxiomC{$\denot{e}\,V=\denot{e'}\, V'$}
      \noLine
      \UnaryInfC{$ \forall v, (L,V\update{x}{v},s)~({\GSim{p}{r}}\cup r)~(L', V'\update{x'}{v}, s') $}
      \UnaryInfC{$(L,V,\ilLet{x}{e}{s})\GSim{p}{r}(L', V', \ilLet{x'}{e'}{s'})$}
    \end{topprooftree}
  \end{center}
  \begin{center}
    \begin{topprooftree}{{\small\nrule{Sim-Let-Call}}}
      \AxiomC{$\denot{\slist{e}}\,V=\denot{\slist{e'}}\, V'$}
      \noLine
      \UnaryInfC{$ \forall v, (L,V\update{x}{v},s)~({\GSim{p}{r}}\cup r)~(L', V'\update{x'}{v}, s') $}
      \UnaryInfC{$(L,V,\ilEvent{x}{\!f}{\slist{e}}{s})\GSim{p}{r}(L', V', \ilEvent{x'}{\!f}{\slist{e'}}{s'})$}
    \end{topprooftree}
  \end{center}
  \begin{center}
    \begin{topprooftree}{{\small\nrule{Sim-Cond}}}
      \AxiomC{$\denot{e}\,V=\denot{e'}\, V'$}
      \noLine
      \UnaryInfC{$ \valtobool{(\denot{e}\,V)} = \BoolT \impl (L,V,s)~({\GSim{p}{r}}\cup r)~(L', V', s') $}
      \noLine
      \UnaryInfC{$ \valtobool{(\denot{e}\,V)} = \BoolF \impl (L,V,t)~({\GSim{p}{r}}\cup r)~(L', V', t') $}
      \UnaryInfC{$(L,V,\ilIf{e}{s}{t})\!\GSim{p}{r}\!\!(L', V'\!\!, \ilIf{e'}{s'\!}{t'})$}
    \end{topprooftree}
  \end{center}
  \caption{Admissible Rules}
  \label{fig:admissible_rules}
\end{figure}

\begin{lemma}
\label{lem:sim_let_op}
\label{lem:sim_let_call}
\label{lem:sim_cond}
 The rules in \myref{fig:admissible_rules} are admissible.
\end{lemma}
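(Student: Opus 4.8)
The plan is to prove all three rules uniformly by a single unfolding of the parameterized fixed point, with no appeal to the coinduction principle \myref{lem:gf_ind}. Since $\GSim{p}{r} = \G\,\SimF\,r\,p$, I would first rewrite the goal with \myref{lem:gf_unfold} to $\SimF\,(r \Lcup \G\,\SimF\,r)\,p\,\config_1\,\config_2$. This exposes the generating function, so the task reduces to choosing the correct disjunct and discharging its side conditions. The key observation driving the whole argument is that the premise of each rule already supplies the successor configurations in the relation $\GSim{p}{r}\cup r = (\G\,\SimF\,r\,p)\cup(r\,p)$, which is exactly the parameter $(r \Lcup \G\,\SimF\,r)\,p$ to which $\SimF$ is applied after one unfold. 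Hence the ``co-recursive'' obligation is met verbatim by the hypothesis, and a single unfold suffices; this is precisely why these structural rules need no coinduction.

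For \nrule{Sim-Let-Op} and \nrule{Sim-Cond} I would case-split on the value of the guard $\denot{e}\,V$, which equals $\denot{e'}\,V'$ by the first premise. If it is $\bot$, then rule \nrule{Op} (resp.\ \nrule{Cond}) cannot fire, so both configurations are $\fevals$-terminal with $\res=\bot$; I then select the \nrule{Bisim-Term} disjunct with $w=\bot$, using that a variable-binding or conditional configuration has $\res=\bot$ and is terminal whenever its guard is undefined. If the guard is a value $v$, both sides take a single silent step (by \nrule{Op}, resp.\ by \nrule{Cond} into the branch selected by $\valtobool(v)$), and I select the \nrule{Bisim-Step} disjunct, instantiating $\config_1',\config_2'$ with the two successors $(L,V\update{x}{v},s)$ and $(L',V'\update{x'}{v},s')$. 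The remaining conjunct is exactly the second premise (for \nrule{Sim-Cond}, the branch premise matching $\valtobool(\denot{e}\,V)$).

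For \nrule{Sim-Let-Call} the structure is identical up to the choice of disjunct: here the step is an external transition, so I would use \nrule{Bisim-Extern} with $\config_1'=\config_1$ and $\config_2'=\config_2$ (zero reduction steps). If $\denot{\slist{e}}\,V=\bot$, both sides are stuck and I fall back on \nrule{Bisim-Term} with $w=\bot$ as above; otherwise $\denot{\slist{e}}\,V=\slist{v}=\denot{\slist{e'}}\,V'$, so both configurations are ready. The two simulation obligations $\config_1'\stackrel{r\,p}{\leadsto}\config_2'$ and $\config_2'\stackrel{r\,p}{\leadsto}\config_1'$ then hold because every external transition of either side carries an event $v'=\alpha(\slist{v})$ with $v'\in\Val$, and by rule \nrule{Extern} the opposite side admits the transition with the same action, the same arguments $\slist{v}$, and the same result $v'$; the successors $(L,V\update{x}{v'},s)$ and $(L',V'\update{x'}{v'},s')$ are related by the instance $v:=v'$ of the second premise.

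The individual steps are routine; the only point that requires care is the bookkeeping of the parameter, namely recognizing that after unfolding $\G$ once the relation demanded on successors is exactly $(r \Lcup \G\,\SimF\,r)\,p$ and that this coincides with the premise $\GSim{p}{r}\cup r$. I expect the main obstacle to be the corner cases where a guard or argument list fails to evaluate: one must verify that such configurations are genuinely $\fevals$-terminal with $\res=\bot$, so that \nrule{Bisim-Term} applies to both sides simultaneously and the rules hold for the bisimulation parameter $p=\STBi$, not merely for $p=\STSi$ where \nrule{Sim-Error} would also be available.
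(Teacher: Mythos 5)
Your proposal is correct and takes essentially the same route as the paper: a single rewrite with \myref{lem:gf_unfold}, case analysis on the guard's evaluation, and selection of the \nrule{Bisim-Step}/\nrule{Bisim-Extern}/\nrule{Bisim-Term} disjunct, with the rule's premise discharging the successor obligation verbatim at the parameter $(r \sqcup \G\,\SimF\,r)\,p$ --- no appeal to the coinduction principle. The paper only details \nrule{Sim-Let-Op}; your handling of \nrule{Sim-Cond} and \nrule{Sim-Let-Call} is the natural completion, and your zero-step use of the extern disjunct is licensed by the $^{*}$-form of \nrule{Bisim-Extern} in \myref{fig:bisim_rules} (the $^{+}$ in \myref{fig:sim_functional} must be read as $^{*}$, since a ready configuration admits no silent step).
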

\begin{proof}
We only show \nrule{Sim-Let-Op}. After rewriting with \myref{lem:gf_unfold},
we have to show that $(L,V,\ilLet{x}{e}{s})$ and $(L', V', \ilLet{x'}{e'}{s'})$
are related by $\SimF\,(r\cup{\GSim{p}{r}})\,p$.
Case analysis on $\opeval{e}{V}$.
\begin{itemize}
\item Case $\opeval{e}{V}=v$. We unfold $\SimF$ and show the case \nrule{Bisim-Silent}.
  The two required successor states exist:
  \begin{enumerate}
  \item $(L,V,\ilLet{x}{e}{s})\fevalsg{}{}^{+}(L,V\update{x}{v},s)$
  \item $(L', V', \ilLet{x'}{e'}{s'})\fevalsg{}{}^{+}(L', V'\update{x'}{v}, s')$
  \end{enumerate}
  $(L,V\update{x}{v},s)~({\GSim{p}{r}}\cup r)~(L', V'\update{x'}{v}, s')$ holds
  by assumption, which finishes the case.
\item Case $\opeval{e}{V}=\bot$.
  We unfold $\SimF$ and show the case \nrule{Bisim-Term}.
  Both states are terminal, and the way we defined the result function
  ensures that $(L,V,\ilLet{x}{e}{s})\trmg{\bot}{}$ and $(L',V',\ilLet{x'}{e'}{s'})\trmg{\bot}{}$.
\end{itemize}
\end{proof}
\newcommand{\refsimletop}{\hyperref[lem:sim_let_op]{\nrule{Sim-Let-Op}}}
\newcommand{\refsimletcall}{\hyperref[lem:sim_let_call]{\nrule{Sim-Let-Call}}}
\newcommand{\refsimcond}{\hyperref[lem:sim_cond]{\nrule{Sim-Cond}}}

We prove in general that conditionals can be eliminated if the value of the condition is statically known.

\begin{lemma}
\label{lem:if_elimination}
If
\begin{itemize}
\item $\valtobool{(\denot{e}\,\EVE)} = \bot \impl \denot{e}\,V = \denot{e}\,V'$
\item $\forall v, \denot{e}\,V = v \impl \valtobool{v} = \BoolT \impl \valtobool{(\denot{e}\,\EVE)} \not= \BoolF \impl (L, V, s_1) \GSim{p}{r} (L', V', s'_1)$
 \item $\forall v, \denot{e}\,V = v \impl \valtobool{v} = \BoolF \impl \valtobool{(\denot{e}\,\EVE)} \not= \BoolT \impl (L, V, s_2) \GSim{p}{r} (L', V', s'_2)$
\end{itemize}
then
$$
\begin{array}{lll}
  &(L, V, &\ilIf{e}{s_1}{s_2})\\
\GSim{p}{r}&(L', V',\,& \M{if~}\denot{e}\emptyset=\BoolT\M{~then~}s'_1\\
         &&\M{else~if~}\denot{e}\emptyset=\BoolF\M{~then~}s'_2\\
         &&\M{else~} \ilIf{e}{s'_1}{s'_2}).
\end{array}
$$
\end{lemma}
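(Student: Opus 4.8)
The plan is to perform a case analysis on the static value $\valtobool(\denot{e}\EVE)$, since this is exactly the quantity that selects which branch the right-hand program reduces to. The preliminary observation driving the whole argument is a monotonicity step: the empty environment $\EVE$ maps every variable to $\bot$, so $\EVE \envle V$ holds pointwise, and the assumed monotonicity of expression evaluation gives $\denot{e}\EVE \envle \denot{e}V$. Whenever $\valtobool(\denot{e}\EVE) \not= \bot$, the value $\denot{e}\EVE$ is a proper value $v_0 \in \Val$; since a proper value is below only itself in $\envle$, this forces $\denot{e}V = v_0$, and hence $\valtobool(\denot{e}V) = \valtobool(\denot{e}\EVE)$. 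In words: when the condition can be decided statically, it evaluates to the same boolean at run time.

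If $\valtobool(\denot{e}\EVE) = \BoolT$, the right-hand program is $s'_1$. By the observation, $\valtobool(\denot{e}V) = \BoolT$, so the left configuration steps via \nrule{Cond} to $(L,V,s_1)$. The second hypothesis then applies, because $\valtobool(\denot{e}\EVE) = \BoolT \not= \BoolF$, and yields $(L,V,s_1) \GSim{p}{r} (L',V',s'_1)$. I close the single left step with \nrule{Sim-Expansion-Closed} (\myref{lem:sim_expansion_closed}), taking zero steps on the right. The case $\valtobool(\denot{e}\EVE) = \BoolF$ is entirely symmetric: the left configuration steps to $s_2$, the third hypothesis applies since $\valtobool(\denot{e}\EVE) = \BoolF \not= \BoolT$, and the same expansion rule closes the step.

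In the remaining case $\valtobool(\denot{e}\EVE) = \bot$, the right-hand program is the conditional $\ilIf{e}{s'_1}{s'_2}$, and here I invoke \refsimcond directly. Its first premise $\denot{e}V = \denot{e}V'$ is exactly the first hypothesis, which is applicable precisely because $\valtobool(\denot{e}\EVE) = \bot$. For its second premise, assuming $\valtobool(\denot{e}V) = \BoolT$, the second hypothesis applies (using $\valtobool(\denot{e}\EVE) = \bot \not= \BoolF$) and places the pair in $\GSim{p}{r} \subseteq \GSim{p}{r} \cup r$; the third premise follows symmetrically from the third hypothesis using $\valtobool(\denot{e}\EVE) = \bot \not= \BoolT$.

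The only genuinely delicate step is the monotonicity argument that identifies $\denot{e}V$ with the statically determined value; everything else is routine case bookkeeping together with the already-established structural rules. In particular, no direct unfolding of $\G$ or $\SimF$ is required, because \nrule{Sim-Expansion-Closed} and \refsimcond package exactly the reasoning needed in each branch.
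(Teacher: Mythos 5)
Your proof is correct. In the two statically decided cases it is essentially the paper's proof: the same monotonicity observation (the empty environment is pointwise below $V$, so a non-$\bot$ static value $\denot{e}\,\EVE$ forces $\denot{e}\,V=\denot{e}\,\EVE$), followed by \nrule{Sim-Expansion-Closed} with one \nrule{Cond} step on the left and zero steps on the right — you even get the direction right where the paper's prose slips and says ``reducing only the right side one step''. The only real divergence is the residual case $\valtobool(\denot{e}\,\EVE)=\bot$: the paper handles it by hand, doing a case analysis on $\denot{e}\,V$, closing the stuck subcase by unfolding (\myref{lem:gf_unfold}) into the termination disjunct of the generating functional, and the value subcase by another application of \nrule{Sim-Expansion-Closed}; you instead invoke the admissible rule \nrule{Sim-Cond} (\myref{lem:sim_cond}) with $e'=e$, whose three premises are exactly your three hypotheses (using $\bot\not=\BoolF$ and $\bot\not=\BoolT$, and weakening $\GSimN^{p}_{r}$ into $\GSimN^{p}_{r}\cup r$). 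The two routes are materially the same argument — the case analysis you skip is precisely the one performed once and for all inside the proof of \nrule{Sim-Cond} — but your decomposition is slightly cleaner: this lemma then requires no direct unfolding of the fixed-point machinery, and it uses the structural rules of \myref{fig:admissible_rules} the way the paper itself uses them elsewhere (e.g., in the conditional case of \myref{lem:simeq_refl}).
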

\begin{proof}
Case analysis on $\valtobool(\denot{e}\,\EVE)$.
\begin{itemize}
\item Case $\valtobool(\denot{e}\,\EVE)=\BoolT$.
  By \hyperref[inl:exp_mon]{monotonicity of expression evaluation}, there is $v$ such that $\denot{e}\,V=v$ and $\valtobool{v}=\BoolT$.
  We apply \hyperref[lem:sim_expansion_closed]{\nrule{Sim-Expansion-Closed}}, reducing only the right side one step and finish with
  the second assumption.
\item Case $\valtobool(\denot{e}\,\EVE)=\BoolF$. Analogous to the previous case.
\item Case $\valtobool(\denot{e}\,\EVE)=\bot$.
  Case analysis on $\denot{e} V$.
  If $\denot{e}\,V=\bot$, both sides are stuck by the first assumption.
  We unfold via \myref{lem:gf_unfold} and use the case \nrule{Sim-Term} of $\SimF$ to show simulation.
  If $\denot{e}\,V=v$, then $\denot{e}\,V'=v$ by the first assumption. Case analysis on $\valtobool{v}$.
  If $\valtobool{v}=\BoolT$ ($\valtobool{v}=\BoolF$) we apply \hyperref[lem:sim_expansion_closed]{\nrule{Sim-Expansion-Closed}} to reduce both sides one step
  and finish with the second (third) assumption.
\end{itemize}
\end{proof}

\section{An inductive proof method}
\label{sec:ind_method}
\newcommand{\pr}[5]{\ensuremath{(#1,#2,#3,#4)}}
\newcommand{\Ay}{\ensuremath{\mathit{A}}}
\newcommand{\Pa}{\ensuremath{\mathit{Param}}}
\newcommand{\Ar}{\ensuremath{\mathit{Arg}}}
\newcommand{\Idx}{\ensuremath{\mathit{Idx}}}
\newcommand{\Img}{\ensuremath{\mathit{Img}}}
\newcommand{\Paf}[3]{\ensuremath{\Pa\,#1\,#2\,#3}}
\newcommand{\Arf}[3]{\ensuremath{\Ar\,#1\,#2\,#3}}
\newcommand{\Idxf}[3]{\ensuremath{\Idx\,#1\,#2\,#3}}
\newcommand{\Imgf}[1]{\ensuremath{\Img\,#1}}
\newcommand{\PR}{\ensuremath{\mathcal{P}}}
\newcommand{\App}{\ensuremath{\mathsf{App}}}
\newcommand{\Appf}[4]{\ensuremath{\App^{#1}\,#2\,#3\,#4}}
\newcommand{\simL}[5]{\ensuremath{\Appf{#1}{#2}{#3}{#4} \mathrel{\incl}\, #5}}
\newcommand{\AIE}{\ensuremath{\Lambda}}
\newcommand{\AIEf}{\ensuremath{\Lambda_f}}
\newcommand{\AIEfp}{\ensuremath{\AIEf}}
\newcommand{\labsim}[5]{\ensuremath{#2 \mathrel{#5} #3 :^{#1} #4}}
We develop an inductive proof method for (bi)similarity,
i.e. for statements of the form $(L,V,s) \GSim{p}{r} (L',V',s')$.
Such proofs require relating function contexts $L,L'$,
and we will use the relation $\labsim{\PR}{L}{L'}{\AIE}{r}$ defined below.
For flexibility, the relation is parameterized by a so called
\emph{proof relation} $\PR$, which describes which functions
in $L,L'$ are related, and how their arguments and parameters differ.
\begin{definition}[Proof Relation]
A proof relation is a tuple $\pr{A}{\Pa}{\Ar}{\Idx}{\Img}$ such that
\begin{enumerate}
\item $\Ay : \M{Type}$
\item $\Pa : A \to \slist{\Var} \to \slist{\Var}\to\Prop$
\item $\Ar : A \to \slist{\Val} \to \slist{\Val}\to\Prop$
\item $\Idx : A\to\Fun\to\Fun\to\Prop$
\end{enumerate}
\end{definition}
The proof relation is indexed by a type $A$, which typically represents program analysis information.
A proof relation defines
conditions on formal parameters ($\Pa$), arguments at function calls ($\Ar$), and function names ($\Idx$).
\begin{figure}
\begin{minipage}{.48\textwidth}
\begin{lstlisting}[caption={},basicstyle=\ttfamily]
fun f (x,y) =
 if (x > 9) then 1
 else f (x+1, y)
in f (3,2)
\end{lstlisting}
\end{minipage}
\begin{minipage}{.48\textwidth}
\begin{lstlisting}[caption={},basicstyle=\ttfamily]
fun f (x) =
 if (x > 9) then 1
 else f (x+1)
in f (3)
\end{lstlisting}
\end{minipage}
\caption{An example program before (left) and after (right) dead variable elimination.}
\label{fig:example}
\end{figure}
For example, consider \myref{fig:example}, which contains a program before (left) and after dead variable elimination (right), an optimization we verify in \myref{sec:DVE}.
The proof relation in \myref{def:proofrel_dve} relates the two versions of the function $f$ and expresses that, for example, $y$ is removed from the function parameters of $f$ because the second parameter is not used in the body of $f$.


\subsection{Relating Function Contexts}
We define the relation $\labsim{\PR}{L}{L'}{\AIE}{r}$, which intuitively means that if two functions from $L,L'$ are related by $\Idx$, then their parameters satisfy the relation $\Pa$ and the functions are equivalent if called with arguments in relation $\Ar$.

\begin{definition}
Given a proof relation $\PR$ and
analysis information context $\AIE$, and function context $L, L'$ we say
$\AIE$ and $L, L'$ are in parameter relation with respect to $\PR$, written \mkdef{$\Paf{\AIE}{L}{L'}$}, if whenever
$\Idxf{\AIEfp}{f}{f'}$ and
$L_f=(V, \slist{x}, s)$ and  $L'_{f'}=(V', \slist{x}', s')$ then $\Paf{\AIEfp}{\slist{x}}{\slist{x}'}$.
\end{definition}


\begin{definition}
Given a proof relation $\PR$, and function context $L, L'$,
and analysis information context $\AIE$,
we define a relation \mkdef{$\Appf{\PR}{\AIE}{L}{L'}$} on configurations such that
\begin{align*}
  & \Appf{\PR}{\AIE}{L}{L'}\,(L, V, \ilGoto{f}{\slist{e}})\, (L', V', \ilGoto{f'}{\slist{e'}})\\
  :\iff ~& \exists V\,V' \slist{x}\,\slist{x}'s\,s', L_f=(V, \slist{x}, s) \land L'_{f'}=(V', \slist{x}', s')\\
&\land  \Idxf{\AIEfp}{f}{f'} \land \Arf{\AIEfp}{(\denot{\slist{e}}\,V)}{(\denot{\slist{e'}}\,V')}\\
&\land|\slist{x}|=|\slist{e}|\land|\slist{x}'|=|\slist{e}'|
\end{align*}
The relation $\Appf{\PR}{\AIE}{L}{L'}$ relates application configurations that satisfy the requirements imposed by the proof relation.

\end{definition}

\begin{definition}
Two function contexts $L,L'$ are in $r$-relation with respect to $\AIE$ and $\PR$,
 written \mkdef{\labsim{\PR}{L}{L'}{\AIE}{r}} if
  \begin{enumerate}
  \item $\dom{\AIE}=\dom{L}$
  \item $\Paf{\AIE}{L}{L'}$
  \item $\Idxf{\AIEfp}{f}{f'} \impl (f \in \dom{L} \iff f' \in \dom{L'})$
  \item $\simL{\PR}{\AIE}{L}{L'}{r}$
  \end{enumerate}
\end{definition}

\begin{lemma} \label{lem:labsim_mon}
If $r\incl r'$ and $\labsim{\PR}{L}{L'}{\AIE}{r}$ then $\labsim{\PR}{L}{L'}{\AIE}{r'}$.
\end{lemma}

\subsection{Extending Related Function Contexts}
\label{sec:ext_fun_ctx}
We prove the central lemma of our inductive proof method,
which we call the \emph{extension lemma}.
When descending under function definitions, related $L$ and $L'$ are extended with new closures.
The inductive hypothesis provides that the bodies of these functions are (bi)similar,
but this does not readily mean that the corresponding semantic fixed-points are (bi)similar.
The extension lemma (\myref{lem:sem_extension}) accounts for the semantics of the fixed-point operator.

\newcommand{\BdyF}{\ensuremath{\mathsf{Bdy}}}
\newcommand{\BdyFf}[6]{\ensuremath{\BdyF^{#1}_{#5,#6}\,#2\,#3\,#4}}
\begin{definition}
  \label{def:bdyf}
  Given a proof relation $\PR$, function context $L, L'$ and $K,K'$,
  and analysis information $\AIE$,
  we define a relation \mkdef{$\BdyFf{\PR}{\AIE}{K}{K'}{L}{L'}$} on configurations such that
  \begin{align*}
    &\BdyFf{\PR}{\AIE}{K}{K'}{L}{L'}\,(K\cc L, V\update{\slist{x}}{v}, s)\,(K\cc L', V'\update{\slist{x}'}{v'}, s')\\
    &:\iff~\exists f\,f'\,V\,V', K_f=(V,\slist{x}, s) \land K'_{f'}=(V',\slist{x}', s')\\
    & \quad\quad \land \Idxf{\AIEfp}{f}{f'} \land      \Arf{\AIEfp}{\slist{v}}{\slist{v'}}
  \end{align*}
\end{definition}
The relation $\BdyFf{\PR}{\slist{a}}{F}{F'}{K}{K'}$ relates configurations that are obtained by one reduction from application configurations that satisfy the requirements imposed by the proof relation.
We set up our inductive proofs such that the inductive hypothesis provides that
these configurations are equivalent.

\newcommand{\sep}[4]{\ensuremath{#3 \parallel\!#2\!\parallel_{#1} #4}}
\begin{definition}
  A proof relation $\PR$ separates
  two contexts $K, K'$ under $\AIE;\AIE'$, written \mkdef{$\sep{\PR}{\AIE;\AIE'}{K}{K'}$} if:
  \begin{enumerate}
  \item  $\dom{\AIE}=\dom{K}$
  \item $\Idxf{(\AIE;\AIE')_f}{f}{f'} \impl (f \in \dom K \iff f' \in \dom K')$
  \end{enumerate}
\end{definition}

\begin{lemma}[Extending Parameter Relations]
If we have $\Paf{\AIE}{K}{K'}$ and also $\sep{\PR}{\AIE;\AIE'}{K}{K'}$ and $\Paf{\AIE'}{L}{L'}$ then $\Paf{(\AIE\cc \AIE')}{(K\cc L)}{(K'\cc L')}$.
\end{lemma}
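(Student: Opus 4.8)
The plan is to unfold the definition of the parameter relation so that the goal becomes a statement about a single pair of related function names, and then to settle it by a case analysis on which half of the concatenation that name lives in. Concretely, to establish $\Paf{(\AIE\cc\AIE')}{(K\cc L)}{(K'\cc L')}$ I would fix function names $f,f'$ together with closures such that $\Idxf{(\AIE\cc\AIE')_f}{f}{f'}$, $(K\cc L)_f=(V,\slist{x},s)$ and $(K'\cc L')_{f'}=(V',\slist{x}',s')$, and aim to derive $\Pa\,((\AIE\cc\AIE')_f)\,\slist{x}\,\slist{x}'$. The whole argument rests on the elementary lookup law for context concatenation, namely $(K\cc L)_f=K_f$ when $f\in\dom K$ and $(K\cc L)_f=L_f$ otherwise (and likewise for $\AIE\cc\AIE'$ and $K'\cc L'$), so I would split on whether $f\in\dom K$.

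In the case $f\in\dom K$: since $\sep{\PR}{\AIE\cc\AIE'}{K}{K'}$ supplies $\dom\AIE=\dom K$, we get $f\in\dom\AIE$, whence the left-hand lookups collapse to $(\AIE\cc\AIE')_f=\AIE_f$ and $(K\cc L)_f=K_f=(V,\slist{x},s)$. From $\Idxf{\AIE_f}{f}{f'}$ and the biconditional of the separation hypothesis I obtain $f'\in\dom{K'}$, hence $(K'\cc L')_{f'}=K'_{f'}=(V',\slist{x}',s')$. Now the hypothesis $\Paf{\AIE}{K}{K'}$ applies directly and yields $\Pa\,\AIE_f\,\slist{x}\,\slist{x}'$, which is the goal since $(\AIE\cc\AIE')_f=\AIE_f$.

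The case $f\notin\dom K$ is symmetric: $\dom\AIE=\dom K$ gives $f\notin\dom\AIE$, so the left-hand lookups descend into the second components, $(\AIE\cc\AIE')_f=\AIE'_f$ and $(K\cc L)_f=L_f=(V,\slist{x},s)$; the separation biconditional now yields $f'\notin\dom{K'}$, so $(K'\cc L')_{f'}=L'_{f'}=(V',\slist{x}',s')$, and the hypothesis $\Paf{\AIE'}{L}{L'}$ closes the goal. Note that no domain relationship between $\AIE'$ and $L$ is needed, because the relevant hypothesis is stated purely as an implication about the looked-up closures.

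The only real subtlety, and the step I would treat most carefully, is the bookkeeping that relocates $f'$ to the matching half of $K'\cc L'$. This is exactly what the separation biconditional $\Idxf{(\AIE\cc\AIE')_f}{f}{f'}\impl(f\in\dom K\iff f'\in\dom{K'})$ is for, and it is essential that the $\Idx$-witness feeding that biconditional is literally the one from the goal, namely $\Idxf{(\AIE\cc\AIE')_f}{f}{f'}$; once that alignment is observed, everything else is routine unfolding combined with the concatenation lookup law.
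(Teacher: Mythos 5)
Your proof is correct: the paper states this lemma without proof (deferring it to the Coq development), and your argument --- unfolding the definition of the parameter relation, case-splitting on $f\in\dom K$, collapsing the lookups via the concatenation lookup law together with $\dom{\AIE}=\dom{K}$ from separation, and using the separation biconditional with the goal's own $\Idx$-witness to place $f'$ in the matching half of $K'\cc L'$ --- is precisely the routine argument the paper intends. Your remark that no domain relationship between $\AIE'$ and $L$ is required is also accurate, since the hypothesis $\Paf{\AIE'}{L}{L'}$ is conditional on the looked-up closures existing.
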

Separation requires that functions in $K$ are only related to functions in $K'$, and vice versa.

\newcommand{\simI}[7]{\ensuremath{\BdyFf{#1}{#2}{#3}{#4}{#5}{#6} \mathrel{\incl}\, #7}}
\begin{lemma}
\label{lem:extension}
Let $\PR$ be a proof relation. If
\begin{enumerate*}
\item $\sep{\PR}{\AIE;\AIE'}{K}{K'}$,
\item $\Paf{\AIE}{K}{K'}$,
\item \simI{\PR}{(\AIE\cc\AIE')}{K}{K}{L}{L'}{(\GSim{p}{r} \cup \mathrel{r})}, and
\item \labsim{\PR}{L}{L'}{\AIE'}{\GSim{p}{r}}
\end{enumerate*}
then
$\labsim{\PR}{K\cc L}{K\cc L'}{\AIE\cc\AIE'}{\GSim{p}{r}}$.
\end{lemma}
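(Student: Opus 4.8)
We must show that if two related function contexts $L, L'$ (under $\AIE'$) are extended by parallel blocks $K$ (on both sides, identically), and these blocks satisfy the separation, parameter, and body-equivalence hypotheses, then the extended contexts $K\cc L$ and $K\cc L'$ are related under $\AIE\cc\AIE'$. By unfolding the definition of $\labsim{\PR}{K\cc L}{K\cc L'}{\AIE\cc\AIE'}{\GSim{p}{r}}$, the proof decomposes into establishing the four conjuncts: the domain equality $\dom{(\AIE\cc\AIE')}=\dom{(K\cc L)}$, the parameter relation $\Paf{(\AIE\cc\AIE')}{(K\cc L)}{(K\cc L')}$, the $\Idx$-domain compatibility, and the crucial simulation inclusion $\simL{\PR}{\AIE\cc\AIE'}{K\cc L}{K\cc L'}{\GSim{p}{r}}$.

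**Plan for the routine conjuncts.** The first three conjuncts should follow by bookkeeping. The domain equality follows from $\dom{\AIE}=\dom{K}$ (hypothesis~1, via separation) combined with $\dom{\AIE'}=\dom{L}$ (from hypothesis~4), since concatenation of contexts concatenates domains. The parameter relation is exactly the content of the preceding \textbf{Extending Parameter Relations} lemma, whose three premises are available: $\Paf{\AIE}{K}{K'}$ is hypothesis~2 (with $K'=K$), $\sep{\PR}{\AIE;\AIE'}{K}{K'}$ is hypothesis~1, and $\Paf{\AIE'}{L}{L'}$ is extracted from hypothesis~4. The $\Idx$-domain compatibility is a case split on whether the related names $f,f'$ lie in $K$ or in $L$; separation (hypothesis~1) handles the $K$ case, and the corresponding clause of hypothesis~4 handles the $L$ case.

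**The main obstacle: the simulation inclusion.** The heart of the proof is showing $\simL{\PR}{\AIE\cc\AIE'}{K\cc L}{K\cc L'}{\GSim{p}{r}}$, i.e. that every application configuration related by $\Appf{\PR}{\AIE\cc\AIE'}{K\cc L}{K\cc L'}$ is in $\GSim{p}{r}$. This is where the semantic fixed-point must be handled, and it is the only step requiring coinduction via \myref{lem:gf_ind}. I would take an $\App$-related pair of application configurations $(K\cc L, V, f\slist{e})$ and $(K\cc L', V', f'\slist{e'})$ and split on whether $f\in\dom K$. If $f$ (hence $f'$, by the $\Idx$-domain clause) lives in the old contexts $L,L'$, then the application rewinds past $K$, and I finish directly from hypothesis~4, $\labsim{\PR}{L}{L'}{\AIE'}{\GSim{p}{r}}$, after checking the application lands in its $\App$ relation. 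The interesting case is $f\in\dom K$: here I apply the \nrule{App} rule on both sides, reducing each configuration one step to a body configuration of exactly the shape matched by $\BdyFf{\PR}{\AIE\cc\AIE'}{K}{K}{L}{L'}$ (the argument values agree via $\Ar$ using $\Arf{\AIEfp}{(\denot{\slist{e}}\,V)}{(\denot{\slist{e'}}\,V')}$, and the parameter lengths match). Then hypothesis~3 yields that these body configurations lie in $\GSim{p}{r}\cup r$, and I close using \hyperref[lem:sim_expansion_closed]{\nrule{Sim-Expansion-Closed}} to pull the relation back across the two \nrule{App} reduction steps.

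**Setting up the coinduction.** Since the $K$-case feeds the body configurations into a relation containing $r$ rather than $\GSim{p}{r}$ directly, the whole simulation inclusion must be proven by parameterized coinduction: I set $R := \Appf{\PR}{\AIE\cc\AIE'}{K\cc L}{K\cc L'}$ and prove $R \incl \GSim{p}{r}$ by \myref{lem:gf_ind}, gaining a coinductive hypothesis $R\incl r'$ that discharges precisely the $\BdyF$ occurrences arising in the recursive $K$-case. The subtlety to watch is the bookkeeping of the accumulator $r'$ and ensuring the guarding reduction steps (the two \nrule{App} transitions) occur before the coinductive hypothesis is invoked, exactly as described in \myref{rem:outline}; this is what makes the fixed-point argument productive and is the step I expect to demand the most care.
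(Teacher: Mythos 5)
Your case structure (split on whether the related pair comes from $K$ or from $L,L'$; lock-step \nrule{App} reduction in the $K$-case; rewinding past $K$ with hypothesis~4 in the $L$-case) matches the paper's proof, but the way you close the $K$-case contains a genuine gap, and the coinduction you bolt on afterwards does not repair it. After the lock-step reduction, hypothesis~3 places the body configurations in ${\GSim{p}{r}} \cup r$. You first propose to conclude with \hyperref[lem:sim_expansion_closed]{\nrule{Sim-Expansion-Closed}}; but that rule requires the successor configurations to be in $\GSim{p}{r}$ itself, and nothing lets you promote the $r$-part of ${\GSim{p}{r}}\cup r$ into $\GSim{p}{r}$ (the parameter $r$ of $\G\,\SimF\,r$ is exactly the ``unguarded'' part, and $r \incl \GSim{p}{r}$ fails in general). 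You then notice this and wrap the inclusion in a parameterized coinduction with $R := \Appf{\PR}{\AIE\cc\AIE'}{K\cc L}{K\cc L'}$, claiming the co-hypothesis $R\incl r'$ ``discharges precisely the $\BdyF$ occurrences.'' It cannot: $R$ relates \emph{application} configurations, whereas the obligations left after the \nrule{App} step concern \emph{body} configurations of the form $(K\cc L, V\update{\slist{x}}{\slist{v}}, s)$, which are not in the domain of $R$, so the co-hypothesis never applies to them.

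The missing idea --- and the entire reason hypothesis~3 is stated with ${\GSim{p}{r}}\cup r$ --- is that no coinduction is needed in this lemma at all: one application of \myref{lem:gf_unfold} rewrites the goal $\GSim{p}{r}$ into $\SimF\,(r \cup {\GSim{p}{r}})\,p$, and the \nrule{Bisim-Step} disjunct of $\SimF$ asks the successors after the two \nrule{App} reductions to lie in $r \cup {\GSim{p}{r}}$, which is literally hypothesis~3; the single guard step of the lock-step reduction ``pays for'' the $r$-part. The coinduction you are reaching for lives one lemma later, in \myref{lem:fix_compat}, where hypothesis~3 is not given but must be produced, and there the co-hypothesis is (correctly) an inclusion of $\BdyF$, not of $\App$. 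A smaller imprecision: in the $L$-case you say you finish ``directly'' from hypothesis~4, but the configurations at hand carry the contexts $K\cc L, K\cc L'$, so clause~(4) of hypothesis~4 does not apply to them verbatim; you need \hyperref[lem:sim_Y]{\nrule{Sim-Retract}}, using that $(K\cc L, V, f\,\slist{e})$ and $(L, V, f\,\slist{e})$ reduce to the same rewound configuration.
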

\begin{proof}
The proof distinguishes whether the function pair is from $K$ and $K'$ or $L$ and $L'$.
This is possible because $\PR$ separates $K,K'$ under $\AIE$.
In the first case, the result follows from (3) after a lock-step simulation step that reduces function applications on both sides.
In the second case, the result follows from (4) and \hyperref[lem:sim_Y]{\nrule{Sim-Retract}}.
\end{proof}

\newcommand{\fdefsim}[7]{\ensuremath{#2\,|\,#3\vdash#4 \mathrel{#7} #5 :^{#1} #6}}
\newcommand{\fdefsims}[7]{\ensuremath{#2\,|\,#3\vdash\\#4 \mathrel{#7} #5 :^{#1} #6}}
\begin{definition}
\label{def:fctx_ass}
Given a proof relation $\PR$, function context $K, K'$ are in $r$-relation under $L$ and $L'$ with respect to $\PR$ and $\AIE\cc\AIE'$,
 written \mkdef[{\fdefsims{\PR}{L}{L'}{K}{K'}{\AIE\cc\AIE'}{r}}]{$\fdefsim{\PR}{L}{L'}{K}{K'}{\AIE\cc\AIE'}{r}$} if
 \begin{enumerate}
 \item $\sep{\PR}{\AIE;\AIE'}{K}{K'}$
 \item $\Paf{\AIE}{K}{K'}$
 \item $\forall r, \labsim{\PR}{K\cc\! L}{K'\cc\! L'}{\!\AIE\cc\AIE'}{r}\impl \simI{\PR}{K}{K'}{(\AIE\cc\AIE')}{L}{L'}{r}$
 \end{enumerate}
\end{definition}

\begin{lemma}[Fix Compatibility]
\label{lem:fix_compat}
Let $\PR$ be a proof relation. If
$\fdefsim{\PR}{L}{L'}{K}{K'}{\AIE\cc\AIE'}{\GSim{p}{r}}$
and $\labsim{\PR}{L}{L'}{\AIE'}{\GSim{p}{r}}$
then $\simI{\PR}{K}{K'}{(\AIE\cc\AIE')}{L}{L'}{\GSim{p}{r}}$.
\end{lemma}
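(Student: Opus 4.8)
Write $B := \BdyFf{\PR}{\AIE\cc\AIE'}{K}{K'}{L}{L'}$ for the body relation of \myref{def:bdyf}; the goal $\simI{\PR}{K}{K'}{(\AIE\cc\AIE')}{L}{L'}{\GSim{p}{r}}$ is then the inclusion $B \incl \GSim{p}{r}$. Unfolding the first hypothesis with \myref{def:fctx_ass} extracts three facts: the separation $\sep{\PR}{\AIE;\AIE'}{K}{K'}$, the parameter relation $\Paf{\AIE}{K}{K'}$, and the implication
\[
 \forall q.~ \labsim{\PR}{K\cc L}{K'\cc L'}{\AIE\cc\AIE'}{q} \impl (B \incl q).
\]
The plan is to prove $B \incl \GSim{p}{r}$ by parameterized coinduction, closing the coinductive goal with the \emph{Extension Lemma} (\myref{lem:extension}) and the displayed implication.

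Following \myref{rem:outline}, I would first apply \myref{lem:gf_ind} to the goal $B \incl \GSim{p}{r}$ (recall $\GSim{p}{r} = \G\,\SimF\,r\,p$). This leaves the goal $B \incl \GSim{p}{r'}$ for a fresh parameter $r'$, while providing the assumptions $r \incl r'$ and the coinductive hypothesis $B \incl r'$. The decisive effect of this step is that the coinductive hypothesis deposits $B$ \emph{inside} the parameter $r'$, which is precisely where the Extension Lemma will consume it.

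Next I would discharge $B \incl \GSim{p}{r'}$ through the displayed implication, instantiated with $q := \GSim{p}{r'}$: it then suffices to establish $\labsim{\PR}{K\cc L}{K'\cc L'}{\AIE\cc\AIE'}{\GSim{p}{r'}}$. I obtain this from the Extension Lemma (\myref{lem:extension}) with its parameter chosen to be $r'$, checking its four premises. Premises (1) and (2) are the separation $\sep{\PR}{\AIE;\AIE'}{K}{K'}$ and parameter relation $\Paf{\AIE}{K}{K'}$ already in hand. Premise (3), the body inclusion $B \incl \GSim{p}{r'}\cup r'$, is immediate from the coinductive hypothesis $B \incl r'$ together with $r' \incl \GSim{p}{r'}\cup r'$. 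Premise (4), $\labsim{\PR}{L}{L'}{\AIE'}{\GSim{p}{r'}}$, follows from the second hypothesis $\labsim{\PR}{L}{L'}{\AIE'}{\GSim{p}{r}}$: since $r \incl r'$, monotonicity of $\G\,\SimF$ in its parameter (\myref{def:pgf}) gives $\GSim{p}{r} \incl \GSim{p}{r'}$, and \myref{lem:labsim_mon} then lifts the relation.

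The one real obstacle is the apparent circularity: premise (3) of the Extension Lemma is, up to the union with $r'$, the very inclusion $B \incl \GSim{p}{r'}$ we are proving. Parameterized coinduction is exactly what breaks it --- \myref{lem:gf_ind} moves $B$ into the parameter \emph{before} the generating function $\SimF$ is unfolded, and the Extension Lemma only uses $B \incl r'$ after its internal lock-step \nrule{App} reduction on both sides, so productivity is respected (cf.\ \myref{rem:outline}). The remaining work is purely definitional: matching $B$, the separation predicate, and the context relation $\labsim{\PR}{\cdot}{\cdot}{\cdot}{\cdot}$ against the premises of \myref{lem:extension}.
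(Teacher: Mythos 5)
Your proof is correct and follows essentially the same route as the paper's: parameterized coinduction via \myref{lem:gf_ind}, reduction of the goal through clause (3) of \myref{def:fctx_ass} instantiated at $\GSim{p}{r'}$, and an application of \myref{lem:extension} whose third premise is discharged by the coinductive hypothesis together with $r' \incl \GSim{p}{r'} \cup r'$ and whose fourth premise follows by monotonicity (\myref{lem:gf_mon}, \myref{lem:labsim_mon}). The only differences are expository: you make explicit the separation and parameter-relation premises of the Extension Lemma and the productivity argument, which the paper's proof leaves implicit.
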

\begin{proof}
By coinduction via \myref{lem:gf_ind}.
We have to show  \simI{\PR}{F}{F'}{(\AIE\cc\AIE')}{L}{L'}{\SimN_{r'}} from
${r\incl r'}$
and the coinductive hypothesis ${\simI{\PR}{F}{F'}{(\AIE\cc\AIE')}{L}{L'}{r'}}$.
Applying clause (3) of the first premise reduces the proof obligation to $\labsim{\PR}{K\cc L}{K\cc L'}{\AIE\cc\AIE'}{\GSim{p}{r'}}$. We apply \myref{lem:extension}.
The third premise of \myref{lem:extension} follows from the coinductive hypothesis and $r' \incl \SimN_{r'} \cup \mathrel{r'}$, the fourth premise follows from monotonicity (\myref{lem:gf_mon} and \myref{lem:labsim_mon}).
\end{proof}
\myref{lem:fix_compat} shows that equivalence of function bodies is sufficient to
show that the corresponding recursive functions are equivalent.
\begin{lemma}[Extension]
\label{lem:sem_extension}
Let $\PR$ be a proof relation. If we have
$\fdefsim{\PR}{L}{L'}{K}{K'}{\AIE\cc\AIE'}{\GSim{p}{r}}$
and $\labsim{\PR}{L}{L'}{\AIE'}{\GSim{p}{r}}$
then $\labsim{\PR}{K\cc L}{K'\cc L'}{\AIE\cc\AIE'}{\GSim{p}{r}}$.
\end{lemma}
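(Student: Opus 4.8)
The plan is to derive \myref{lem:sem_extension} as an immediate consequence of Fix Compatibility (\myref{lem:fix_compat}) and the internal extension lemma (\myref{lem:extension}): between them these two results already carry all the coinductive and case-analysis content, so the only new work is lining up their hypotheses and performing a trivial weakening of one relation.

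First I would unfold the hypothesis $\fdefsim{\PR}{L}{L'}{K}{K'}{\AIE\cc\AIE'}{\GSim{p}{r}}$ according to \myref{def:fctx_ass}. Clauses (1) and (2) of that definition give exactly $\sep{\PR}{\AIE;\AIE'}{K}{K'}$ and $\Paf{\AIE}{K}{K'}$, which are premises (1) and (2) of \myref{lem:extension}. Premise (4) of \myref{lem:extension}, namely $\labsim{\PR}{L}{L'}{\AIE'}{\GSim{p}{r}}$, is the second hypothesis of the present lemma verbatim. Thus the only premise of \myref{lem:extension} still to be supplied is the containment of the body relation $\BdyF$ in $\GSim{p}{r}\cup r$.

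To obtain it, I would invoke \myref{lem:fix_compat}: its two hypotheses coincide exactly with the two hypotheses of the present lemma, so it yields the sharper statement $\simI{\PR}{K}{K'}{(\AIE\cc\AIE')}{L}{L'}{\GSim{p}{r}}$, i.e.\ the body relation is already contained in $\GSim{p}{r}$ alone. Since $\GSim{p}{r}\incl\GSim{p}{r}\cup r$, transitivity of $\incl$ weakens this to the containment in $\GSim{p}{r}\cup r$ demanded by premise (3) of \myref{lem:extension}. With all four premises now in hand I apply \myref{lem:extension} and read off its conclusion $\labsim{\PR}{K\cc L}{K'\cc L'}{\AIE\cc\AIE'}{\GSim{p}{r}}$, which is precisely the goal.

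I do not expect a genuine obstacle here. The coinduction lives entirely inside \myref{lem:fix_compat} (discharged through \myref{lem:gf_ind}), and the case distinction on whether a related function pair originates from $K,K'$ or from $L,L'$ lives inside \myref{lem:extension}, so \myref{lem:sem_extension} is essentially their composition. The one point that requires care is bookkeeping: one must verify that the contexts and analysis-information lists $K,K',L,L',\AIE,\AIE'$ are instantiated identically across all three statements, and that the $\BdyF$-relation produced by \myref{lem:fix_compat} is literally the one whose containment is required by clause (3) of \myref{lem:extension}, so that the weakening step genuinely closes the remaining gap.
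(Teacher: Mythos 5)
Your proposal is correct and takes essentially the same route as the paper: the paper's proof also applies \myref{lem:extension}, observes that the only non-trivial premise is the $\BdyF$ containment, and discharges it by \myref{lem:fix_compat} together with the weakening ${\GSim{p}{r}} \incl {\GSim{p}{r}} \cup r$. Your write-up merely makes explicit the bookkeeping the paper leaves implicit, namely that premises (1) and (2) of \myref{lem:extension} come from unfolding \myref{def:fctx_ass} and premise (4) is the second hypothesis verbatim.
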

\begin{proof}
We apply \myref{lem:extension}. The only non-trivial premise is to show
$$\simI{\PR}{K}{K'}{(\AIE\cc\AIE')}{L}{L'}{{\GSim{p}{r}} \cup r}$$
We make use of the fact ${\GSim{p}{r}} \incl {\GSim{p}{r}} \cup r$  and finish the proof with \myref{lem:fix_compat}.
\end{proof}

\begin{lemma}[Fun Compatibility]
\label{lem:sim_fun}
Let $\PR$ be a proof relation. If
$\fdefsim{\PR}{L}{L'}{\mkBlocks{F}_V}{\mkBlocks{F'}_{V'}}{\AIE\cc\AIE'}{\GSim{p}{r}}$
and $\labsim{\PR}{L}{L'}{\AIE'}{\GSim{p}{r}}$ and
\begin{align*}
  \forall r,~ &\labsim{\PR}{\mkBlocks{F}_V\cc L}{\mkBlocks{F'}_{V'}\cc L'}{\AIE\cc\AIE'}{\GSim{p}{r}}\impl \\
  &(\mkBlocks{F}_V\cc L,V,t)\GSim{p}{r}(\mkBlocks{F'}_{V'}\cc L',V',t')
\end{align*}
then $(L,V,\ilLetRecP{F}{t})\GSim{p}{r}(L',V',\ilLetRecP{F'}{t'})$.
\end{lemma}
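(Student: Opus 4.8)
The plan is to reduce both sides by their single silent \nrule{Fun} step and then hand the remaining goal to the machinery already assembled in \myref{lem:sem_extension}. Evaluating a function definition merely pushes the group of closures onto the context without touching the environment or emitting an event, so each side takes exactly one \nrule{Fun} transition: $(L,V,\ilLetRecP{F}{t}) \fevals (\mkBlocks{F}_V\cc L, V, t)$ and $(L',V',\ilLetRecP{F'}{t'}) \fevals (\mkBlocks{F'}_{V'}\cc L', V', t')$. Since these steps are silent, \hyperref[lem:sim_expansion_closed]{\nrule{Sim-Expansion-Closed}} applies, and it therefore suffices to relate the two successors, i.e. to establish $(\mkBlocks{F}_V\cc L, V, t)\GSim{p}{r}(\mkBlocks{F'}_{V'}\cc L', V', t')$.

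To obtain this, I would first derive the context relation $\labsim{\PR}{\mkBlocks{F}_V\cc L}{\mkBlocks{F'}_{V'}\cc L'}{\AIE\cc\AIE'}{\GSim{p}{r}}$. This is exactly the conclusion of \myref{lem:sem_extension} instantiated with $K:=\mkBlocks{F}_V$ and $K':=\mkBlocks{F'}_{V'}$, whose two hypotheses — namely $\fdefsim{\PR}{L}{L'}{\mkBlocks{F}_V}{\mkBlocks{F'}_{V'}}{\AIE\cc\AIE'}{\GSim{p}{r}}$ and $\labsim{\PR}{L}{L'}{\AIE'}{\GSim{p}{r}}$ — are verbatim the first two premises of the present lemma. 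With the context relation in hand, I would instantiate the third premise, which is universally quantified over the coinduction parameter, at the very $r$ occurring in the goal; its antecedent is precisely the context relation just derived, so its conclusion $(\mkBlocks{F}_V\cc L, V, t)\GSim{p}{r}(\mkBlocks{F'}_{V'}\cc L', V', t')$ discharges the remaining obligation.

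I do not expect a genuine obstacle here, because all of the coinductive difficulty — accounting for the semantic fixed point that the function definition introduces — has already been paid in \myref{lem:fix_compat} and \myref{lem:sem_extension}. Fun Compatibility is the packaging step that bridges the syntactic function-definition term and the context-level statement. The two points that require care are (i) verifying that each configuration truly reduces by a single silent \nrule{Fun} step, so that \hyperref[lem:sim_expansion_closed]{\nrule{Sim-Expansion-Closed}} is applicable with one step on each side, and (ii) instantiating the $\forall r$ in the third premise at the correct parameter, so that the three occurrences of $\GSim{p}{r}$ — in the output of \myref{lem:sem_extension}, in the antecedent of the third premise, and in the final goal — all refer to one and the same parameterized relation.
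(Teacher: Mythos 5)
Your proposal is correct and takes essentially the same route as the paper's own proof: both reduce each side by its single silent \nrule{Fun} step and then combine the universally quantified third premise with \myref{lem:sem_extension} (instantiated at $K:=\mkBlocks{F}_V$, $K':=\mkBlocks{F'}_{V'}$) to close the goal. The only difference is presentational --- the paper applies the third premise first and is then left with the context relation as the remaining obligation, while you derive the context relation first and then discharge the premise's antecedent, i.e.\ the same argument read backward versus forward.
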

\begin{proof}
We reduce both sides one step.
We apply the last premise and have to show $\labsim{\PR}{\mkBlocks{F}_V\cc L}{\mkBlocks{F'}_{V'}\cc L'}{\AIE\cc\AIE'}{\GSim{p}{r}}$.
\myref{lem:sem_extension} finishes the proof.
\end{proof}
When using \myref{lem:sem_extension} or \myref{lem:sim_fun} it suffices
to show that the bodies of new function definitions are related according to \myref{def:fctx_ass}.
Item (3) already provides that the contexts containing the new functions are related.
See the function definition case of \myref{lem:simeq_refl} to understand how
this enables the inductive method.

\subsection{Using Related Function Contexts to Prove the Application Case}

\begin{definition}
Argument evaluation of $L, V, \slist{e}$ and $L', V', \slist{e}'$ agrees with respect to $p$ and $\PR$ if whenever
$L_f=(V, \slist{x},s)$ and $L_{f'}=(V, \slist{x}',s')$ and $\Paf{a}{\slist{x}}{\slist{x}'}$ then
\begin{enumerate}
  \item if $\denot{\slist{e}}\,V=\slist{v}$ and $|\slist{x}|=|\slist{v}|$ then there exists $\slist{v}'$ such that
    $\denot{\slist{e}'}\,V=\slist{v}'$ and $|\slist{x}'|=|\slist{v}'|$ and $\Arf{a}{\slist{v}}{\slist{v}'}$
  \item if $p=\STBi$ and $\denot{\slist{e}}\,V=\bot$ then $\denot{\slist{e}'}\,V'=\bot$
  \item if $p=\STBi$ and $|\slist{x}|\not=|\slist{e}|$ then $|\slist{x}'|\not=|\slist{e}'|$
\end{enumerate}
\end{definition}

\begin{lemma}
\label{lem:labenv_app}
Let $\PR$ be a proof relation. If
\begin{enumerate}
  \item $\labsim{\PR}{L}{L'}{\AIE}{\GSim{p}{r}}$
  \item $\Idxf{\AIEfp}{f}{f'}$
  \item argument evaluation of $L, V, \slist{e}$ and $L', V', \slist{e}'$ agrees with respect to $p$ and $\PR$
\end{enumerate}
then $(L, V, \ilGoto{f}{\slist{e}})\GSim{p}{r}(L', V', \ilGoto{f'}{\slist{e}'})$.
\end{lemma}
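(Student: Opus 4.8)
The plan is to perform a case analysis on whether the left configuration $(L,V,\ilGoto{f}{\slist{e}})$ can reduce. The head term is an application, so the only applicable rule is \nrule{App}, which fires exactly when $f\in\dom L$, the arguments evaluate ($\denot{\slist{e}}\,V=\slist{v}$), and the arities match; in every other situation the configuration is terminal with $\res=\bot$. I would first combine assumption (2), $\Idxf{\AIEfp}{f}{f'}$, with clause (3) of $\labsim{\PR}{L}{L'}{\AIE}{\GSim{p}{r}}$ to obtain $f\in\dom L \iff f'\in\dom L'$, synchronising the two function look-ups. Whenever both succeed, say $L_f=(V_0,\slist{x},s)$ and $L'_{f'}=(V_0',\slist{x}',s')$, clause (2) of $\labsim$ additionally gives $\Paf{\AIEfp}{\slist{x}}{\slist{x}'}$, which is the side condition under which the argument-evaluation hypothesis becomes applicable (at analysis information $\AIEfp$).

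In the reducible case ($f\in\dom L$, $\denot{\slist{e}}\,V=\slist{v}$, $|\slist{x}|=|\slist{v}|$) I would not unfold $\GSim{p}{r}$ at all. Feeding the two closures and $\Paf{\AIEfp}{\slist{x}}{\slist{x}'}$ into clause (1) of the argument-evaluation hypothesis yields $\slist{v}'$ with $\denot{\slist{e}'}\,V'=\slist{v}'$, $|\slist{x}'|=|\slist{v}'|$, and $\Arf{\AIEfp}{\slist{v}}{\slist{v}'}$. Together with $\Idxf{\AIEfp}{f}{f'}$ and the fact that evaluating a list of expressions yields a list of the same length (so $|\slist{v}|=|\slist{e}|$ and $|\slist{v}'|=|\slist{e}'|$), these are precisely the conjuncts defining $\Appf{\PR}{\AIE}{L}{L'}$ on the pair $(L,V,\ilGoto{f}{\slist{e}})$, $(L',V',\ilGoto{f'}{\slist{e}'})$. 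Since clause (4) of $\labsim$ states $\Appf{\PR}{\AIE}{L}{L'}\incl\GSim{p}{r}$, the goal follows immediately: the definition of $\Appf$ and clause (4) are exactly engineered to discharge this case without any reduction.

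It remains to treat the three stuck situations: $f\notin\dom L$; $\denot{\slist{e}}\,V=\bot$; and $\denot{\slist{e}}\,V=\slist{v}$ with $|\slist{x}|\neq|\slist{e}|$. In each, the left configuration is terminal with $\res=\bot$. For $p=\STSi$ I would unfold $\GSim{p}{r}$ by \myref{lem:gf_unfold} and close with the \nrule{Sim-Error} disjunct of $\SimF$ using zero steps, ignoring the right side entirely. For $p=\STBi$ I would instead establish that the right configuration is also terminal with $\res=\bot$ and close with the \nrule{Bisim-Term} disjunct at $w=\bot$: the first case uses the domain equivalence ($f'\notin\dom L'$), while the second and third use clauses (2) and (3) of the argument-evaluation hypothesis — both guarded by $p=\STBi$ — to deduce $\denot{\slist{e}'}\,V'=\bot$, respectively $|\slist{x}'|\neq|\slist{e}'|$, so that \nrule{App} cannot fire on the right either.

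The difficulty is bookkeeping rather than conceptual: making the split on reducibility genuinely exhaustive and routing each case to the correct disjunct of $\SimF$. The only step needing real care is the bisimilarity half of the stuck cases. Similarity may always escape through \nrule{Sim-Error} irrespective of the right-hand behaviour, but bisimilarity forces the right configuration to get stuck as well; this is exactly why the argument-evaluation hypothesis carries the two extra $p=\STBi$-guarded clauses, and without them the lemma would simply fail for bisimilarity.
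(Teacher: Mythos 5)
Your proposal is correct and follows essentially the same route as the paper: discharge the main case by assembling $\Appf{\PR}{\AIE}{L}{L'}$ from clause (1) of the argument-evaluation hypothesis and concluding via clause (4) of $\labsim{\PR}{L}{L'}{\AIE}{\GSim{p}{r}}$ without unfolding the simulation, then dispatch the stuck cases with \nrule{Sim-Error} for $p=\STSi$ and with the $p=\STBi$-guarded clauses (2) and (3) plus \nrule{Bisim-Term} for bisimilarity. The only difference is cosmetic: your case $f\notin\dom{L}$ is vacuous, since assumption (2) forces $\AIE_f$ to be defined and clause (1) of $\labsim$ gives $\dom{\AIE}=\dom{L}$, so $f\in\dom{L}$ always holds --- which is exactly how the paper's proof begins.
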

\begin{proof}
  From (2) we have that $\AIEfp$ is defined, and by
  definition of (1) $\dom{\AIE}=\dom{L}$, which means $f \in\dom{L}$,
  hence again by definition of (2) $f' \in\dom{L'}$.
  We assume that $L_f=(V,\slist{x},s)$ and $L_{f'}=(V,\slist{x}',s')$.
  By definition of (1) we know $\Paf{\slist{a}}{L}{L'}$, hence $\Paf{\AIEfp}{\slist{x}}{\slist{x'}}$.
  Case analysis.
  \begin{itemize}
    \item Case $\opeval{\slist{e}}{V}=\slist{v}$.
      \begin{itemize}
      \item If $|\slist{x}|=|\slist{e}|$ we exploit clause (1) of premise (3) and obtain the fact
        $$\Appf{\PR}{\AIE}{L}{L'}\,(L, V, \ilGoto{f}{\slist{e}})\,(L', V', \ilGoto{f'}{\slist{e}'})$$
        We know $\simL{\PR}{\AIE}{L}{L'}{r}$ from premise (1) and are done.
      \item If $|\slist{x}|\not=|\slist{e}|$.
      If $p=\STSi$, we are done using \nrule{Sim-Error}.
      If $p=\STBi$, we exploit clause (3) of assumption (3) and obtain that
      $|\slist{x}'|\not=|\slist{e}'|$. Both sides are stuck (\nrule{Sim-Term}).
    \end{itemize}
    \item Case $\opeval{\slist{e}}{V}=\bot$.
      If $p=\STSi$, we are done by \nrule{Sim-Error}.
      If $p=\STBi$, we exploit clause (2) of assumption (3) and obtain that
      $\opeval{\slist{e}'}{V'}=\bot$. Both sides are stuck.
  \end{itemize}
\end{proof}

\section{Bisimilarity and Similarity are Contextual}
\label{sec:ctx}
In this section we use the inductive method to show that bisimilarity and similarity
are contextual, that is, sound for contextual equivalence and contextual approximation.

\newcommand{\sseq}{\ensuremath{\M{eq}}}
\newcommand{\SimEq}[2]{\ensuremath{\mathrel{\simeq^{#1}_{#2}}}}

\begin{definition}
We define the proof relation $\PR_{\sseq}$ where
\begin{align*}
A&:=\M{list}\,\Var\\
\Paf{\slist{x}}{\slist{y}}{\slist{y}'}&:=\slist{x}=\slist{y}\land\slist{y}=\slist{y}'\\
\Arf{\slist{x}}{\slist{v}}{\slist{v}'}&:=V=V'\land\slist{v}=\slist{v}'\land|\slist{x}|=|\slist{v}|\\
\Idxf{\_}{f}{f'}&:=f=f'
\end{align*}
\end{definition}
\newcommand{\paprj}[1]{\ensuremath{\M{pa}\,#1}}
Let \mkdef{$\paprj{L}$} denote the projection of each closure to its parameters, and
\mkdef{$\paprj{F}$} the projection of each function definition to its parameters.
\begin{definition}[Program Equivalence] For two terms $s, s'$ we define equivalence \mkdef{$s\SimEq{p}{r} s'$} as
\begin{align*}
\forall L L' V, \labsim{\PR_{\sseq}}{L}{L'}{(\paprj{L'})}{\GSim{p}{r}}\impl (L,V,s) \GSim{p}{r} (L',V,s')
\end{align*}
\end{definition}

\begin{lemma}[Reflexivity]
\label{lem:simeq_refl}
$s\SimEq{p}{r}s$.
\end{lemma}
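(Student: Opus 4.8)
The plan is to prove $s \SimEq{p}{r} s$ by structural induction on the term $s$. Unfolding the definition of $\SimEq{p}{r}$, we fix arbitrary $L, L', V$ with $\labsim{\PR_{\sseq}}{L}{L'}{(\paprj{L'})}{\GSim{p}{r}}$, and must show $(L,V,s)\GSim{p}{r}(L',V,s)$. Note that the proof relation $\PR_{\sseq}$ is set up so that related functions have identical names ($\Idx$ is equality), identical parameter lists ($\Pa$ forces $\slist{x}=\slist{y}=\slist{y}'$), and matching arguments under equal environments ($\Ar$ forces $V=V'$ and $\slist{v}=\slist{v}'$). Thus the hypothesis $\labsim{\PR_{\sseq}}{L}{L'}{(\paprj{L'})}{\GSim{p}{r}}$ records exactly that every function in $L$ is related to the same-named function in $L'$, with equal closures up to (bi)similarity of bodies. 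The analysis context used is $\paprj{L'}$, the parameter projection, so that $\Pa$ holds by construction.

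The proof proceeds by case analysis on the shape of $s$. For $s = e$ (expression), both configurations reduce identically, terminating with the same result $\opeval{e}{V}$, so \nrule{Bisim-Term} closes the case. For $s = \ilLet{x}{\eta}{s_0}$, I split on whether $\eta$ is an operation or a system call and apply \refsimletop{} or \refsimletcall{} respectively; the side condition $\opeval{e}{V}=\opeval{e}{V}$ is trivial, and the continuation obligation $\forall v,\ (L,V\update{x}{v},s_0)\ ({\GSim{p}{r}}\cup r)\ (L',V\update{x}{v},s_0)$ follows from the induction hypothesis for $s_0$ applied with the extended environment (the hypothesis on $L,L'$ carries over unchanged since the function contexts are untouched). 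For $s = \ilIf{e}{s_1}{s_2}$, I apply \refsimcond{}: the condition values agree, and each branch is handled by the induction hypothesis. For $s = f\,\slist{e}$ (application), I invoke \myref{lem:labenv_app} with premise (1) supplied directly by our context hypothesis, premise (2) $\Idxf{\_}{f}{f}$ by reflexivity of equality, and premise (3) — argument agreement — discharged because $V=V'$ makes $\opeval{\slist{e}}{V}=\opeval{\slist{e}}{V'}$ and the $\Ar$ relation reduces to equality.

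The main obstacle is the function-definition case $s = \ilLetRecP{F}{t}$, and this is precisely where the extension lemma does its work. Here I apply \myref{lem:sim_fun}. Its final premise requires showing, for all $r$, that if $\labsim{\PR_{\sseq}}{\mkBlocks{F}_V\cc L}{\mkBlocks{F}_V\cc L'}{\AIE\cc\AIE'}{\GSim{p}{r}}$ then the configurations running $t$ under the extended contexts are related; this follows from the induction hypothesis for $t$, using the extended context hypothesis as the assumption needed to instantiate $\SimEq{p}{r}$ at the larger contexts. The remaining premise $\fdefsim{\PR_{\sseq}}{L}{L'}{\mkBlocks{F}_V}{\mkBlocks{F}_V}{\AIE\cc\AIE'}{\GSim{p}{r}}$ decomposes by \myref{def:fctx_ass} into separation (clause 1), a parameter relation (clause 2, immediate since $\PR_{\sseq}$'s $\Pa$ is equality and both sides share $F$), and clause (3): that related extended contexts imply $\BdyF$-related bodies are $\GSim{p}{r}$-related — and this last obligation is supplied by the induction hypotheses for the function bodies in $F$, whose free variables are bound by the matching parameter lists and whose function contexts are the extended ones provided by the assumption. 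The delicate point is choosing the analysis context as the parameter projection so that $\Pa$ is satisfied definitionally and the separation condition holds; once the bookkeeping between $\AIE\cc\AIE'$ and $\paprj{\cdot}$ is aligned, the coinductive content is entirely absorbed by \myref{lem:sim_fun} and \myref{lem:sem_extension}, leaving only the structural induction hypotheses to be applied.
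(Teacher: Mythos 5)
Your overall strategy---structural induction with \refsimletop{}, \refsimletcall{}, \refsimcond{}, \myref{lem:labenv_app}, and \myref{lem:sim_fun} absorbing all coinductive content in the function-definition case---is exactly the paper's proof, and your treatment of the \texttt{fun} case is right, including the key observation that clause (3) of \myref{def:fctx_ass} hands you the extended-context assumption $(*)$ without which the inductive hypothesis for the function bodies could not be instantiated. There is, however, a genuine gap in your application case: you invoke \myref{lem:labenv_app} unconditionally, claiming premise (2) holds ``by reflexivity of equality.'' Premise (2) is $\Idxf{\AIEfp}{f}{f}$, and $\AIEfp$ denotes the lookup $\AIE_f$ in the analysis context $\AIE=\paprj{L'}$; this lookup exists only when $f\in\dom{L'}$. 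Nothing in the statement of \myref{lem:simeq_refl} guarantees that the function name in $s=\ilGoto{f}{\slist{e}}$ is bound in the ambient contexts---the lemma quantifies over \emph{all} terms $s$ and all related $L,L'$---so when $f\notin\dom{L'}$ there is no analysis information with which to state premise (2), and \myref{lem:labenv_app} cannot be applied at all.

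The paper closes this hole with a case analysis on whether $L'_f$ exists. If it does, your argument goes through verbatim. If it does not, one uses clause (1) of the context relation, $\dom{(\paprj{L'})}=\dom{L}$, hence $\dom{L}=\dom{L'}$, to conclude that $L_f$ does not exist either; then both configurations are stuck applications, both terminate with result $\bot$, and \nrule{Bisim-Term} (with $w=\bot$) finishes the case. This is a small repair, but without it the proof breaks on terms containing unbound function names; in the formal development the missing branch would surface as an unprovable goal rather than something one can wave away.
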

\begin{proof}
By induction on $s$.
\begin{itemize}
\item The case for let follows from the inductive hypothesis, and lemmas \refsimletop{} and \refsimletcall{}.
\item The conditional case follows by \refsimcond{} and the inductive hypotheses.
\item In the case of application $\ilGoto{f}{\slist{e}}$, we do a case analysis on whether $L'_f$ exists.
  If it exists we are done by $\labsim{\PR_{\sseq}}{L}{L'}{\paprj{L'}}{\GSim{p}{r}}$ with \myref{lem:labenv_app},
  and the fact that argument evaluation obviously agrees.
  Otherwise, we know from $\labsim{\PR_{\sseq}}{L}{L'}{\paprj{L'}}{\GSim{p}{r}}$ that $\dom{L}=\dom{L'}$,
  so $L_f$ does not exist either, and both sides are stuck. We finish with \nrule{Sim-Term}.

\item The case for operation is by \nrule{Sim-Term}.
\item In the function definition case we apply \myref{lem:sim_fun}.
  The second premise of \myref{lem:sim_fun} is an assumption, and the third is the inductive hypothesis.
For the first premise $\fdefsim{\PR}{L}{L'}{\mkBlocks{F}_V}{\mkBlocks{F'}_{V'}}{\AIE\cc\AIE'}{\GSim{p}{r}}$ (\myref{def:fctx_ass}) we need to show separation and parameter relation,
which hold because both sides define the same functions.
It remains to show (3) of \myref{def:fctx_ass}, i.e. from
$$(*)~\labsim{\PR}{\mkBlocks{F}_V\cc L}{\mkBlocks{F}_V\cc L'}{(\paprj{F};\paprj{L'})}{\GSim{p}{r}}$$
that $\simI{\PR}{\mkBlocks{F}_V}{\mkBlocks{F}_V}{(\paprj{F};\paprj{L'})}{L}{L'}{\GSim{p}{r}}$ holds.
Unfolding \M{Bdy} (\myref{def:bdyf}), we get $\Idxf{(\paprj{F};\paprj{L'})}{f}{f'}$ and $F_f=(\slist{x}, s)$ and $F_{f'}=(\slist{x}', s')$ and $\Arf{\slist{x}}{\slist{v}}{\slist{v'}}$.
After unfolding $\Idx$ to obtain $f=f'$, and after unfolding $\Ar$ to obtain $\slist{v}=\slist{v}'$ we have to show
that
 $$ (\mkBlocks{F}_V\cc L, V\update{\slist{x}}{v}, s) \Bisim_r (\mkBlocks{F}_{V}\cc L', V\update{\slist{x}}{v}, s)$$
This follows from the inductive hypothesis, but only with~$(*)$.
Note that if \myref{lem:sim_fun} had not provided $(*)$ through \myref{def:fctx_ass},
the proof would not work.
\end{itemize}
\end{proof}

\begin{lemma}[Reflexivity]
$\labsim{\PR}{L}{L}{\paprj{L}}{\GSim{p}{r}}$.
\end{lemma}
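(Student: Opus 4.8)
The plan is to specialise the statement to the proof relation $\PR_{\sseq}$ (the one under which $\paprj{L}$ and program equivalence are defined) and to prove it by induction on the list of groups making up the context $L$, peeling off the first group at each step and discharging the induction step with the extension lemma (\myref{lem:sem_extension}). Conditions (1)--(3) of $\labsim{\PR_{\sseq}}{L}{L}{\paprj L}{\GSim{p}{r}}$ hold for any $L$ without effort: projecting each closure to its parameters preserves domains, so $\dom{\paprj L}=\dom L$; the $\Idx$ component of $\PR_{\sseq}$ is equality, so the domain bi-implication is trivial; and since $\Idxf{\_}{f}{f'}$ forces $f=f'$ we get $L_f=L_{f'}$, hence equal parameter lists, which is exactly what the parameter relation of $\PR_{\sseq}$ demands. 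The real content is condition (4), $\Appf{\PR_{\sseq}}{\paprj L}{L}{L}\incl\GSim{p}{r}$, and this is what the induction delivers.

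In the base case $L=\EVE$ the relation $\Appf{\PR_{\sseq}}{\paprj \EVE}{\EVE}{\EVE}$ is empty, since it requires some $f$ with $\EVE_f$ defined, so condition (4) holds vacuously. For the step, write $L=K\cc L_0$ with $K$ the first (newest) group; the induction hypothesis is $\labsim{\PR_{\sseq}}{L_0}{L_0}{\paprj{L_0}}{\GSim{p}{r}}$. Since projection distributes over concatenation, $\paprj L=\paprj K\cc\paprj{L_0}$, so I apply \myref{lem:sem_extension} with $K=K'=K$, $L=L'=L_0$, $\AIE=\paprj K$ and $\AIE'=\paprj{L_0}$. Its second premise is the induction hypothesis verbatim, and its conclusion is precisely the goal $\labsim{\PR_{\sseq}}{K\cc L_0}{K\cc L_0}{\paprj K\cc\paprj{L_0}}{\GSim{p}{r}}$.

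It remains to supply the first premise $\fdefsim{\PR_{\sseq}}{L_0}{L_0}{K}{K}{\paprj K\cc\paprj{L_0}}{\GSim{p}{r}}$ of \myref{def:fctx_ass}. Separation and the parameter relation hold for the same reasons as conditions (1)--(3) above: under $\PR_{\sseq}$ a function of $K$ is only ever related to itself, with identical parameters. The substantive clause is (3): at the accumulators $\rho=\GSim{p}{r'}$ at which it is consulted, and assuming $\labsim{\PR_{\sseq}}{K\cc L_0}{K\cc L_0}{\paprj K\cc\paprj{L_0}}{\rho}$, I must show $\BdyFf{\PR_{\sseq}}{\paprj K\cc\paprj{L_0}}{K}{K}{L_0}{L_0}\incl\rho$. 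Unfolding \BdyF{} (\myref{def:bdyf}) and using that $\PR_{\sseq}$ forces equal function names and equal arguments, each such pair collapses to a single configuration $(K\cc L_0, V\update{\slist x}{\slist v}, s)$ related to itself, where $s$ is the body of a function of $K$. Reflexivity of terms (\myref{lem:simeq_refl}), i.e.\ $s\SimEq{p}{r'}s$, instantiated with both contexts set to $K\cc L_0$, yields exactly this configuration pair in $\GSim{p}{r'}=\rho$, provided its own hypothesis $\labsim{\PR_{\sseq}}{K\cc L_0}{K\cc L_0}{\paprj K\cc\paprj{L_0}}{\GSim{p}{r'}}$ is available --- and that is precisely the assumption clause (3) hands us.

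I expect the matching of these two context-relation hypotheses to be the crux. Clause (3) of \myref{def:fctx_ass} closes only because it supplies, as its own antecedent, exactly the context relation that reflexivity of terms requires as a premise; this is the same interplay that makes the hypothesis ``$(*)$'' indispensable in the function-definition case of \myref{lem:simeq_refl}. The recursion through a function body therefore needs no fresh coinductive argument here: the coinduction is already packaged inside \myref{lem:sem_extension} (through \myref{lem:fix_compat}), and the induction on the groups of $L$ only has to thread \myref{lem:simeq_refl} through each group.
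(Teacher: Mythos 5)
Your proposal is correct, and since the paper states this lemma without proof, your argument fills the gap in exactly the intended way: induction on the groups of $L$, with the step discharged by \myref{lem:sem_extension}, whose substantive premise (clause (3) of \myref{def:fctx_ass}) is supplied by term reflexivity (\myref{lem:simeq_refl}) fed the very context relation that the clause hands you as antecedent --- this is the same interplay the paper exhibits in the function-definition case of \myref{lem:simeq_refl}. Your reading of the quantifier in clause (3) as ranging over relations of the form $\GSim{p}{r'}$ (rather than literally arbitrary $r$, for which the clause would be unprovable here) is also the reading the paper itself relies on in its proofs of \myref{lem:simeq_refl}, \myref{thm:uce_correct}, and \myref{thm:dve_correct}, so your proof is consistent with the paper's own use of \myref{def:fctx_ass}.
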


\begin{lemma}[Transitivity]
${s\SimEq{p}{\bot}\!s'\impl s'\SimEq{p}{\bot}\!s'' \impl s\SimEq{p}{\bot}\!s''}$.
\end{lemma}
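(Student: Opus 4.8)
The plan is to reduce transitivity of $\SimEq{p}{\bot}$ to transitivity of the underlying relation $\GSim{p}{\bot}$, which is already available: $\SimN_\bot$ is a preorder and $\BisimN_\bot$ an equivalence, both transitivity parts being instances of \myref{lem:sim_gen_trans}. First I would unfold the definition of program equivalence. Assuming $s\SimEq{p}{\bot}s'$ and $s'\SimEq{p}{\bot}s''$, I must show for arbitrary $L,L'',V$ satisfying $\labsim{\PR_{\sseq}}{L}{L''}{\paprj{L''}}{\GSim{p}{\bot}}$ that $(L,V,s)\GSim{p}{\bot}(L'',V,s'')$.

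The key decision is the choice of the intermediate function context, and taking $L''$ itself makes everything line up. I would instantiate the first hypothesis $s\SimEq{p}{\bot}s'$ at $L,L'',V$, discharging its premise with the assumption $\labsim{\PR_{\sseq}}{L}{L''}{\paprj{L''}}{\GSim{p}{\bot}}$ that we already have; this yields $(L,V,s)\GSim{p}{\bot}(L'',V,s')$. Next I would instantiate the second hypothesis $s'\SimEq{p}{\bot}s''$ at $L'',L'',V$, whose premise $\labsim{\PR_{\sseq}}{L''}{L''}{\paprj{L''}}{\GSim{p}{\bot}}$ is exactly the reflexivity of the function-context relation established just above. This yields $(L'',V,s')\GSim{p}{\bot}(L'',V,s'')$.

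Finally I would compose the two chains through the common configuration $(L'',V,s')$. Since that intermediate configuration is literally identical on both sides, the reduction side condition of \myref{lem:sim_gen_trans} ($\config_2\fevalsg{}{}^{*}\config'_2$ or the other direction) is satisfied by the reflexive, zero-step reduction, and transitivity at $\bot$ delivers $(L,V,s)\GSim{p}{\bot}(L'',V,s'')$, as required. The whole argument is uniform in $p$, so it settles similarity and bisimilarity at once. Note that the symmetric choice of $L$ as the intermediate context works equally well.

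I do not expect a genuine obstacle here; the only real content is selecting the identity context so that the side condition of \myref{lem:sim_gen_trans} collapses to a trivial zero-step reduction. The single point to stay careful about is that transitivity of $\GSim{p}{r}$ is only available at $r=\bot$, which is precisely why the statement is restricted to $\SimEq{p}{\bot}$: for a nonzero parameter $r$ the relation $\GSim{p}{r}$ need not be transitive, and this composition step would fail.
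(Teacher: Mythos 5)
Your proof is correct; note that the paper itself states this lemma without proof, but your argument is evidently the intended one: the context-relation reflexivity lemma $\labsim{\PR_{\sseq}}{L}{L}{\paprj{L}}{\GSim{p}{r}}$ is placed immediately before transitivity precisely so that one can instantiate the middle hypothesis at the identity context and then compose with \myref{lem:sim_gen_trans}, the zero-step reduction discharging its side condition. Your closing remark is also on point: the restriction to the parameter $\bot$ is exactly what makes \myref{lem:sim_gen_trans} applicable.
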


\begin{theorem}
Let $C[]$ be an IL context, i.e. a term with a hole.
$(\forall r, s\SimEq{p}{r}s') \impl C[s]\SimEq{p}{r}C[s']$.
\end{theorem}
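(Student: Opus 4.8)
The plan is to prove $C[s]\SimEq{p}{r}C[s']$ for all $r$ (which is what the free $r$ in the conclusion abbreviates) by induction on the structure of the single-hole context $C[]$, under the standing hypothesis $\forall r,\ s\SimEq{p}{r}s'$. Unfolding the definition of $\SimEq{p}{r}$, the goal in every case reads: for all $L,L',V$ with $\labsim{\PR_{\sseq}}{L}{L'}{\paprj{L'}}{\GSim{p}{r}}$, show $(L,V,C[s])\GSim{p}{r}(L',V,C[s'])$. The proof mirrors the reflexivity argument of \myref{lem:simeq_refl} almost verbatim: at each context constructor I apply the same compatibility lemma, invoking the induction hypothesis on the single subterm that carries the hole and the reflexivity lemma \myref{lem:simeq_refl} on the subterms that do not (these are syntactically identical on the two sides, since $\PR_{\sseq}$ is the identity proof relation). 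Carrying the conclusion over all $r$ is essential, since the compatibility lemmas — in particular \myref{lem:sim_fun} — discharge their premises under a universally quantified $r$; this is also why the hypothesis is phrased as $\forall r,\ s\SimEq{p}{r}s'$.

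The base case $C[]=[]$ is immediate: $C[s]=s$, $C[s']=s'$, and the goal is the hypothesis instantiated at $r$. The let cases $C[]=\ilLet{x}{e}{C_1[]}$ and $C[]=\ilEvent{x}{\alpha}{\slist{e}}{C_1[]}$ follow from \refsimletop{} and \refsimletcall{} together with the induction hypothesis on $C_1$. The conditional cases $C[]=\ilIf{e}{C_1[]}{t}$ and $C[]=\ilIf{e}{s}{C_1[]}$ follow from \refsimcond{}, using the induction hypothesis on the branch containing the hole and reflexivity on the other. In each of these cases the side condition on expression evaluation holds trivially, because the bound expression and the condition are literally unchanged on the two sides.

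The crux, which I expect to be the main obstacle and which is exactly the situation the inductive method is designed for, is the function definition case. Here $C[]$ is a $\ilLetRecP{F}{t}$ whose hole sits either in the continuation $t$ or in one of the (mutually recursive) function bodies, and in both sub-cases I apply \myref{lem:sim_fun}. Its first premise unfolds by \myref{def:fctx_ass} to a separation condition and a parameter relation — both immediate because $\PR_{\sseq}$ forces the two sides to declare the same function names and parameters — together with clause (3), which relates the function bodies through \M{Bdy}. When the hole lies in a body I discharge clause (3) by the induction hypothesis on that body and by reflexivity on the remaining bodies; when the hole lies in $t$, clause (3) is pure reflexivity and the third premise of \myref{lem:sim_fun} is discharged by the induction hypothesis on $t$. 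The remaining premise of \myref{lem:sim_fun} is the standing assumption $\labsim{\PR_{\sseq}}{L}{L'}{\paprj{L'}}{\GSim{p}{r}}$.

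The delicate point, exactly as in the function definition case of \myref{lem:simeq_refl}, is that clause (3) must be shown for the \emph{extended} contexts $\mkBlocks{F}_V\cc L$ and $\mkBlocks{F'}_{V'}\cc L'$. What makes this go through is that \myref{def:fctx_ass} hands us the assumption that these extended contexts are already $\labsim$-related, and it is precisely under this assumption that the induction hypothesis (respectively reflexivity) can fire at the extended contexts. Had \myref{lem:sim_fun} merely assumed the bodies equivalent without also supplying this extended-context relation, the induction would not close; strengthening the induction hypothesis in this way — via \myref{def:fctx_ass} and \myref{lem:sem_extension} — is the whole reason descending under a fixed point works here with no coinductive bookkeeping and no well-founded measure.
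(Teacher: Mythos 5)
Your proposal is correct and follows exactly the route the paper intends: its proof of this theorem is literally ``Induction on $C$,'' and your elaboration --- compatibility lemmas \refsimletop{}, \refsimletcall{}, \refsimcond{} plus \myref{lem:sim_fun} at each constructor, the induction hypothesis on the subterm with the hole, and \myref{lem:simeq_refl} on the hole-free subterms --- is precisely how that induction unfolds, mirroring the paper's own proof of \myref{lem:simeq_refl}. Your identification of the crux (that \myref{def:fctx_ass} supplies the $\labsim$-relation for the \emph{extended} contexts, without which neither the induction hypothesis nor reflexivity could fire under the fixed point) matches the paper's own remark following \myref{lem:sim_fun}.
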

\begin{proof}
Induction on $C$.
\end{proof}

\begin{lemma}
If we have $\forall r\, i, s_i\SimEq{p}{r}s_i'$ and $\forall r, t\SimEq{p}{r}t'$ then $\forall r, \ilLetRecM{f}{\slist{x}}{s}{t} \SimEq{p}{r} \ilLetRecM{f}{\slist{x}}{s'}{t'}$.
\end{lemma}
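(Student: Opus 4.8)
The plan is to reduce the statement to Fun Compatibility (\myref{lem:sim_fun}) instantiated with the proof relation $\PR_{\sseq}$; the argument is a direct adaptation of the function-definition case of \myref{lem:simeq_refl}, with the two given equivalence hypotheses playing the role that the reflexivity induction hypotheses play there. First I would fix $r$ and unfold the definition of $\SimEq{p}{r}$: fixing $L,L',V$ and assuming $(H):\labsim{\PR_{\sseq}}{L}{L'}{\paprj{L'}}{\GSim{p}{r}}$, it suffices to prove $(L,V,\ilLetRecP{F}{t})\GSim{p}{r}(L',V,\ilLetRecP{F'}{t'})$, writing $F=\ilFDefs{f}{\slist{x}}{s}$ and $F'=\ilFDefs{f}{\slist{x}}{s'}$. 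Since both lists of definitions carry the same parameters $\slist{x}$, we have $\paprj{F}=\paprj{F'}$, and I would apply \myref{lem:sim_fun} with $\AIE':=\paprj{L'}$ and $\AIE:=\paprj{F}$.

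Two of the three premises of \myref{lem:sim_fun} are immediate. The second premise is literally $(H)$. The third premise, concerning the bodies $t,t'$, is an instance of the hypothesis $\forall r, t\SimEq{p}{r}t'$: instantiating it at the extended contexts $\mkBlocks{F}_V\cc L$ and $\mkBlocks{F'}_{V}\cc L'$ and at the environment $V$, its context side condition is exactly the antecedent $\labsim{\PR_{\sseq}}{\mkBlocks{F}_V\cc L}{\mkBlocks{F'}_{V}\cc L'}{\paprj{F}\cc\paprj{L'}}{\GSim{p}{r}}$ that \myref{lem:sim_fun} hands us, using that $\paprj{\mkBlocks{F'}_{V}\cc L'}=\paprj{F'}\cc\paprj{L'}=\paprj{F}\cc\paprj{L'}$.

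The work is in the first premise, $\fdefsim{\PR_{\sseq}}{L}{L'}{\mkBlocks{F}_V}{\mkBlocks{F'}_{V}}{\paprj{F}\cc\paprj{L'}}{\GSim{p}{r}}$, which by \myref{def:fctx_ass} splits into separation, the parameter relation, and clause (3). Separation and $\Paf{\paprj{F}}{\mkBlocks{F}_V}{\mkBlocks{F'}_{V}}$ hold because $F$ and $F'$ define the same function names with identical parameter lists. For clause (3) I would, as in \myref{lem:simeq_refl}, unfold $\BdyF$ (\myref{def:bdyf}) together with the definitions of $\Idx$ and $\Ar$ for $\PR_{\sseq}$; this yields $f=f'$ and equal argument vectors $\slist{v}=\slist{v}'$, reducing the obligation to showing that the corresponding bodies, run in the extended contexts under $V\update{\slist{x}}{v}$, are related by $\GSim{p}{r}$. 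This is precisely an instance of $\forall r\, i, s_i\SimEq{p}{r}s_i'$, whose context side condition is discharged by the antecedent supplied through \myref{def:fctx_ass}.

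I expect clause (3) to be the main obstacle, for exactly the reason flagged at the end of \myref{lem:simeq_refl}: the bodies must be related in the \emph{already extended} contexts $\mkBlocks{F}_V\cc L$ and $\mkBlocks{F'}_{V}\cc L'$, not merely in $L,L'$. A naive congruence rule would only provide the latter, whereas \myref{lem:sim_fun} (through \myref{def:fctx_ass}) provides the former; this strengthened hypothesis is what makes the function-definition case go through and is the whole point of the inductive method. The remaining care is purely bookkeeping: checking that the parameter-projection analysis contexts on the two sides coincide, which holds because the transformation preserves all parameter lists.
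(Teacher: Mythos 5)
Your proposal is correct and takes essentially the approach the paper intends: the paper states this lemma without an explicit proof, but it is meant as the congruence instance of the general machinery, i.e., an application of \myref{lem:sim_fun} with the proof relation $\PR_{\sseq}$, mirroring the function-definition case of \myref{lem:simeq_refl} with the assumed equivalences playing the role of the induction hypotheses --- exactly what you do. Your bookkeeping (using $\paprj{F}=\paprj{F'}$ to match the analysis contexts, and discharging clause (3) of \myref{def:fctx_ass} by unfolding $\Idx$ and $\Ar$ to get $f=f'$ and equal argument vectors) coincides with the paper's treatment.
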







\section{Unreachable Code Elimination}
\newcommand{\uc}[3]{\ensuremath{#1\vdash\textbf{\textup{reach}\,}{}#3}}
\newcommand{\ucc}[4]{\ensuremath{#2\vdash\textbf{\textup{reach}}\,{}#4}}
\newcommand{\LC}{\ensuremath{\Lambda}}
\newcommand{\lv}{\ensuremath{X}}

\label{sec:UCE}
We apply our inductive proof method in the correctness proof of the first optimization: unreachable code elimination.

\subsection{Representing Program Analysis Information}
\label{sec:ann}
\newcommand{\Ann}[1]{\ensuremath{\textsf{Ann}\,#1}}
\newcommand{\AnnT}[2]{\ensuremath{\textsf{Ann}\,#1\,#2}}
\newcommand{\ann}[1]{#1}
\newcommand{\anni}[2]{#1\cdot#2}
\newcommand{\annii}[3]{#1\cdot#2,#3}
\newcommand{\annF}[3]{#1\cdot#2,#3}
\newcommand{\ia}[1]{\set{#1}~}

A program analysis associates information with every subterm of a program.
From now on, we use annotated terms instead of terms, i.e., $\ia{a} s$, which associates analysis information $a$ with term $s$.
Note that the subterms in $s$ are again annotated.
Given analysis information of type $A$, we denote this new inductive type by $\AnnT\Exp{A}$.
The formal development contains verified analyses for DVE and UCE.

\newcommand{\beq}[1]{\ensuremath{=}}

\begin{figure}
\begin{center}
  \begin{topprooftree}{Reach-Let}
    \AxiomC{$b \beq{p} b'$}
    \AxiomC{$\ucc{p}{\LC}{a}{\ia{b'}s}$}
    \BinaryInfC{$\ucc{p}{\LC}{\anni{b}{a}}{\ia{b}\ilLet{x}{\extexp}{\ia{b'}s}}$}
  \end{topprooftree}
  \begin{topprooftree}{Live-Exp}
    \AxiomC{$$}
    \UnaryInfC{$\ucc{p}{\LC}{b}{\ia{b} e}$}
  \end{topprooftree}
  \begin{topprooftree}{Reach-App}
    \AxiomC{$b \impl \LC_f$}
    \UnaryInfC{$\ucc{p}{\LC}{b}{\ia{b}\ilGoto{f}{\slist{e}}}$}
  \end{topprooftree}
\end{center}
\begin{center}
  \begin{topprooftree}{Reach-Cond}
    \AxiomC{$\valtobool(\denot{e}\emptyset) \not=\BoolF \impl b \beq{p} b_1$}
    \noLine
    \UnaryInfC{$\valtobool(\denot{e}\emptyset) \not=\BoolT \impl b \beq{p} b_2$}
    \AxiomC{$\ucc{p}{\LC}{a_1}{\ia{b_1}s_1}$}
    \noLine
    \UnaryInfC{$\ucc{p}{\LC}{a_2}{\ia{b_2}s_2}$}
    \BinaryInfC{$\ucc{p}{\LC}{\annii{b}{a_1}{a_2}}{\ia{b}\ilIf{e}{\ia{b_1}s_1}{\ia{b_2}s_2}}$}
  \end{topprooftree}
\end{center}
\begin{center}
  \begin{topprooftree}{Reach-Fun}
    \Axiom$\fCenter\forall g,~\ucc{p}{{\slist{f:b}};\LC}{a_f}{\ia{b_g}s_g}$
    \noLine
    \UnaryInf$\fCenter\ucc{p}{{\slist{f:b}};\LC}{a'}{\ia{c}t}$
    \Axiom$\fCenter a \beq{p} c$
    \BinaryInfC{$\ucc{p}{\LC}{\annF{b}{\slist{a}}{a'}}{\ia{a}\ilLetRecM{f}{\slist{x}}{\ia{b}s}{\ia{c}t}}$}

  \end{topprooftree}
\end{center}
\caption{Definition of the Reachability Predicate $\ucc{p}{\LC}{a}{s}$. The context $\LC:\textit{context}~\Bool$ contains reachability information for functions and $s:\AnnT{\Exp}{\Bool}$ is an program annotated with reachability information.}
\label{fig:reach}
\end{figure}

\subsection{Inductive Reachability Judgment}
We want to annotate $t$ with reachability information $b,b':\Bool$ such that
whenever $(L,V,\ia{b}s)\red^\ast(L',V',\ia{b'}t)$ we have $b\impl b'$.
Reachability is a non-trival semantic property, hence undecidable.
We define inductively the judgment \ndef{reach} in \myref{fig:reach}.
Reach considers the value of the condition expression $e$ in the empty environment: $\denot{e}\,\emptyset$.
If $\denot{e}\,\emptyset=\bot$, both cases are assumed to be reachable, which may over-approximate.
\nrule{Reach-Let} propagates reachability information through let-bindings:
If the let is reachable, then so is its successor.
\nrule{Reach-App} ensures that whenever a function application is reachable,
then the function is also reachable.
\nrule{Reach-Cond} evaluates $\denot{e}\emptyset$, i.e. it evaluates the condition under the empty variable environment.
If $\valtobool(\denot{e}\emptyset)=\BoolF$ then reachability is not propagated into the consequence~$s_1$.
Propagation into the alternative $s_2$ is treated similarily.
\nrule{Reach-Fun} propagates reachability into $t$.
The context $\LC$ is extended with the reachability information of the function bodies $\slist{f:b}$.
The topmost premise ensures that the reachability information for all function bodies is sound.

\subsection{Transformation and Correctness}
\newcommand{\uceN}{\M{uce}}
\newcommand{\uce}[3]{\ensuremath{\uceN\,#2}}
\newcommand{\countTrueN}{\M{countTrue}}

\newcommand{\filterN}{\M{filter}}
\newcommand{\filter}[2]{\ensuremath{\filterN\,#1\,#2}}
\newcommand{\filterbyN}{\M{filterby}}
\newcommand{\filterby}[3]{\ensuremath{\filterbyN\,#1\,#2\,#3}}

\newcommand{\zipN}{\M{zip}}
\newcommand{\zip}[2]{\ensuremath{\zipN\,#1\,#2}}
In \myref{fig:uce}, we define a function $\uceN$ that removes all code not marked reachable.
If all functions from a mutually recursive function definition are removed, the $\texttt{fun}$-statement is removed, too.
Conditionals are removed if the value of the condition can be statically evaluated.

%
\newcommand{\uceFN}{\M{uceF}}
\newcommand{\uceF}[3]{\ensuremath{\uceFN\,#2}}
%
\begin{figure}
\begin{align*}
  &\filterbyN\,\,:~\forall X Y, (X\to\Bool)\to \M{list}\,X\to\M{list}\,Y\to\M{list}\,Y\\
  &\filterbyN\,p\,(x::x')\,(y::y')=\\
  &\quad\M{if~}p x\M{~then~}y::\filterby{p}{x'}{y'}\M{~else~}\filterby{p}{x'}{y'}\\
  &\filterbyN\,p\,\_,\_=\M{nil}\\[1mm]
  &\filterN\,\,:\forall X, (X\to\Bool)\to \M{list}\,X\to\M{list}\,X\\
  &\filterN\,p\,x=\filterby{p}{x}{x}
\end{align*}
\caption{Definition of \M{filter}}
\label{fig:helpers}
\end{figure}

\begin{figure}
\begin{align*}
  &\uceN:\AnnT{\Exp}{\Bool}\to\Exp\\
  &\uce{B}{(\ilLet{x}{\extexp}{s})}{(\anni{b}{a})}~=~\ilLet{x}{\extexp}{(\uce{L}{s}{a})}\\
  &\uce{B}{(\ilIf{e}{s_1}{s_2})}{(\annii{b}{a_1}{a_2})}~=~\\
         &\quad\M{if~}\denot{e}\emptyset=\BoolT\M{~then~}\uce{L}{s_1}{a_1}\\
         &\quad\M{else~if~}\denot{e}\emptyset=\BoolF\M{~then~}\uce{L}{s_2}{a_2}\\
         &\quad\M{else~} \ilIf{e}{(\uce{L}{s_1}{a_1})}{(\uce{L}{s_2}{a_2})}\\
  &\uce{\slist{b}}{(\ilGoto{f}{\slist{e}})}{b}~=~
  \ilGoto{f}{\slist{e}}\\
  &\uce{B}{e}{\_}~=~e\\
  &\uce{B}{(\ilLetRecP{F}{t})}{(\annF{\_}{\slist{a}}{b})}{}~=~\\
   &\quad \M{let~} F' = \uceF{(\cg{\slist{b}};L)}{F}{\slist{a}}~\M{in} \\
   &\quad \M{if~} |F'| = 0 \M{~then~} \uce{([\slist{a}];L)}{t}{b} \M{~else~}
           \ilLetRecP{F'}{(\uce{([\slist{a}];L)}{t}{b})}\\
  &\uceF{L}{F}{\slist{a}}~=~\\
&\quad \M{let~} K = \filter{(\lambda(\slist{x},\ia{b}s).\,b)}{F}\M{~in}\\
  &\quad \M{map}\,(\lambda(\slist{x},s).(\slist{x},\uce{L}{s}{a}))\,K
\end{align*}
\caption{Definition of Unreachable Code Elimination}
\label{fig:uce}
\end{figure}

\begin{definition}
We define the proof relation $\PR_{\uceN}$ where
\begin{align*}
A&:=\Bool\\
\Paf{\_}{\slist{x}}{\slist{x}'}&:=\slist{x}=\slist{x}'\\
\Arf{a}{\slist{v}}{\slist{v}'}&:=\slist{v}=\slist{v}'\\
\Idxf{a}{f}{f'}&:=a = \BoolT \land f = f'
\end{align*}
\end{definition}

\begin{lemma}
\label{lem:uce_sep}
If $\dom{\AIE}=\dom{F}$ then
$\sep{\PR_\uceN}{\AIE;\AIE'}{F}{\uceF{([\slist{a}];L)}{F}{\slist{a}}}$.
\end{lemma}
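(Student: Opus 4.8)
I need to show that if $\dom{\AIE}=\dom{F}$, then the separation predicate $\sep{\PR_\uceN}{\AIE;\AIE'}{F}{\uceF{([\slist{a}];L)}{F}{\slist{a}}}$ holds. By the definition of separation, this amounts to two obligations: first, that $\dom{\AIE}=\dom{F}$, which is exactly the hypothesis and requires nothing; and second, that whenever $\Idxf{(\AIE;\AIE')_f}{f}{f'}$ holds, we have $f \in \dom{F} \iff f' \in \dom{(\uceF{([\slist{a}];L)}{F}{\slist{a}})}$.

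**The plan.** The whole difficulty is concentrated in the second clause, and the plan is to unfold the definition of $\Idx$ for $\PR_\uceN$. By definition, $\Idxf{a}{f}{f'} := a = \BoolT \land f = f'$, so the hypothesis $\Idxf{(\AIE;\AIE')_f}{f}{f'}$ gives me immediately that $f = f'$ and that the annotation $(\AIE;\AIE')_f = \BoolT$. Because $f = f'$, the biconditional $f \in \dom{F} \iff f' \in \dom{(\uceF{([\slist{a}];L)}{F}{\slist{a}})}$ simplifies to $f \in \dom{F} \iff f \in \dom{(\uceF{([\slist{a}];L)}{F}{\slist{a}})}$. So the real content is to understand how $\uceF$ affects the domain: $\uceF$ applies $\filterN$ with the predicate $\lambda(\slist{x},\ia{b}s).\,b$, keeping exactly those function definitions whose reachability annotation $b$ is $\BoolT$, and then maps over the survivors without changing their names. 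Hence $f \in \dom{(\uceF{\ldots}{F}{\slist{a}})}$ holds precisely when $f \in \dom{F}$ and $f$'s reachability bit in $F$ is $\BoolT$.

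**Tying the annotations together.** What remains is to connect the analysis annotation $(\AIE;\AIE')_f = \BoolT$ obtained from $\Idx$ with the reachability bit carried inside $F$ that the filter inspects. I expect the intended reading is that $\AIE$ is precisely the context $\slist{a}$ of reachability bits for $F$ (this matches the instantiation pattern in \myref{lem:uce_sep}, where $\uceF$ is called with the same $\slist{a}$), so that the annotation seen by $\Idx$ and the bit $b$ filtered on by $\uceF$ coincide definitionally. Given $\dom{\AIE}=\dom{F}$, the lookup $(\AIE;\AIE')_f$ falls into the $\AIE$ part exactly when $f \in \dom{F}$, and then agrees with the reachability bit of $f$ in $F$. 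With $(\AIE;\AIE')_f = \BoolT$ in hand, both directions of the biconditional follow: if $f \in \dom{F}$ then $f$ survives the filter because its bit is $\BoolT$, so $f$ is in the domain of $\uceF{\ldots}{F}{\slist{a}}$; conversely, any surviving function was already present in $F$, so the reverse inclusion is immediate from the fact that $\uceF$ only removes definitions and never renames or adds them.

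**Main obstacle.** The routine part is the filter/domain bookkeeping; the only genuinely delicate point is making the identification between the $\Idx$-annotation $(\AIE;\AIE')_f$ and the per-function reachability bit that $\filterN$ tests, i.e. confirming that $\AIE$ is indeed the list $\slist{a}$ of bits attached to the functions of $F$. Once that alignment is pinned down, the proof is a direct unfolding with no induction required, since $\uceF$ preserves names and only filters on exactly the bit that $\Idx$ exposes.
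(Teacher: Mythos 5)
The paper states \myref{lem:uce_sep} without any proof (it is discharged only in the Coq development), so there is no paper argument to compare against; judged on its own, your proof has the right structure, and it is essentially the only one possible: clause (1) of separation is literally the hypothesis, and clause (2) reduces, after unfolding $\Idxf{a}{f}{f'} := (a = \BoolT \land f = f')$, to $f \in \dom{F} \iff f \in \dom{(\uceFN\,F)}$. The backward direction is free because $\uceFN$ only filters and its $\M{map}$ step preserves names and parameter lists, and the forward direction needs exactly the bit alignment you isolate.

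Your ``main obstacle'' paragraph identifies a real issue, not merely a presentational one: with the literal hypothesis $\dom{\AIE}=\dom{F}$ and an otherwise arbitrary $\AIE$, the forward direction is false --- take $F$ consisting of a single $f$ whose body carries annotation $\BoolF$ and $\AIE = [f:\BoolT]$; then $\Idxf{\AIE_f}{f}{f}$ holds, yet $f$ is removed by the filter, so $f \in \dom F$ but $f \notin \dom{(\uceFN\, F)}$. The displayed statement is too weak because the paper's macros hide the annotation arguments; the intended premise is that $\AIE$ is the context $\slist{f:b}$ of reachability bits annotated on the bodies of $F$, which is exactly how the lemma is instantiated in \myref{thm:uce_correct}, where \nrule{Reach-Fun} extends the context with $\slist{f:b}$. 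Compare \myref{lem:uce_pa}, which does bind these bits explicitly via the premise $F = \ilFDefs{f}{\slist{x}}{\ia{b}s}$. One small correction to your reading: the bits the filter tests are the top-level annotations $\slist{b}$ of the bodies inside $F$, not the annotation list $\slist{a}$ passed to $\uceFN$ (that argument only steers the recursive $\uceN$ calls on the surviving bodies); with that substitution, your alignment argument is exactly right and the rest of your unfolding goes through.
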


\begin{lemma}
\label{lem:uce_pa}
$F=\ilFDefs{f}{\slist{x}}{\ia{b}s}\impl\Paf{\slist{b}}{F}{(\uceF{(\cg{\slist{b}};L)}{F}{\slist{a}})}$.
\end{lemma}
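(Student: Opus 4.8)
The plan is to unfold the definition of the parameter relation $\Paf{\slist{b}}{F}{\uceF{(\cg{\slist{b}};L)}{F}{\slist{a}}}$ for the proof relation $\PR_\uceN$. By the definition of $\Pa$ on contexts, it suffices to show that for every pair $f, f'$ with $\Idxf{(\slist{b})_f}{f}{f'}$, whenever the lookup of $f$ in $F$ records the parameter list $\slist{x}$ and the lookup of $f'$ in $\uceF{(\cg{\slist{b}};L)}{F}{\slist{a}}$ records the parameter list $\slist{x}'$, we have $\Pa\,(\slist{b})_f\,\slist{x}\,\slist{x}'$. Specializing to $\PR_\uceN$, the index condition $\Idxf{(\slist{b})_f}{f}{f'}$ unfolds to $(\slist{b})_f = \BoolT \land f = f'$, and the goal $\Pa\,(\slist{b})_f\,\slist{x}\,\slist{x}'$ unfolds to $\slist{x} = \slist{x}'$. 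Thus I must establish parameter equality under the assumptions that $f = f'$ and that the reachability bit recorded for $f$ is $\BoolT$.

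Next I would exploit the structure of $\uceF$. Since $f = f'$, it remains to compare the parameter list recorded for $f$ in $F$ with the one recorded for $f$ in $\uceF{(\cg{\slist{b}};L)}{F}{\slist{a}}$. By definition, $\uceF$ first applies $\filterN\,(\lambda(\slist{x}, \ia{b}s).\,b)$ to $F$, retaining exactly those function definitions whose top-level reachability bit is $\BoolT$, and then maps the function $\lambda(\slist{x}, s).(\slist{x}, \uceN\,s)$ over the surviving definitions. The key observation is that this mapping leaves the parameter list $\slist{x}$ untouched and only rewrites the body; moreover the filtering decision depends solely on the bit, not on the parameters. Since the assumption $(\slist{b})_f = \BoolT$ is precisely the condition checked by the filter predicate for $f$, the definition of $f$ is not dropped, so its name-based lookup in $\uceF{(\cg{\slist{b}};L)}{F}{\slist{a}}$ is defined and returns the same parameter list as in $F$. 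Hence $\slist{x}' = \slist{x}$, which discharges the goal. Note that the context argument $(\cg{\slist{b}};L)$ to $\uceF$ is irrelevant here, since it enters only through $\uceN$ on the bodies.

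The main obstacle I anticipate is purely the bookkeeping that connects name-based context lookup ($L_g$) with the list operations $\filterN$ and $\M{map}$. Concretely, I need an auxiliary fact stating that $\filterN$ retains an element exactly when its predicate holds, and that $\M{map}$ commutes with lookup over the retained portion, so that looking up a function whose bit is $\BoolT$ yields the parameter list preserved by $\M{map}$. In the named presentation this step is immediate; in the De Bruijn formalization it reduces to a routine lemma describing how lookup indices are shifted by $\filterbyN$, which is the only genuinely technical ingredient of the proof.
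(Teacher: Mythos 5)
Your proposal is correct, and it is exactly the argument the paper intends: the paper states \myref{lem:uce_pa} without proof, treating it as immediate from unfolding $\Paf{}{}{}$ and $\PR_{\uceN}$ and observing that the filter predicate in $\uceFN$ coincides with the index condition $(\slist{b})_f=\BoolT$ (using the premise that aligns $\slist{b}$ with the annotations in $F$) while the subsequent $\M{map}$ leaves parameter lists untouched. Your closing remark about the lookup/\M{filter}/\M{map} bookkeeping being the only technical content, especially under De Bruijn indices, matches what the paper itself notes in \myref{sec:coq}.
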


\begin{lemma}
\label{lem:uce_fun_cases}
Suppose that it holds $F'=\uceF{([\slist{a}];L)}{F}{\slist{a}}$ and
we have that $(\mkBlocks{F}_V\cc L, V, s)\Bisim_r(\mkBlocks{F'}_{V'}\cc L', V', \uce{([\slist{a}];L)}{t}{b})$
then
\[\arraycolsep=0pt
  \begin{array}{lll}
    (L, V, \ilLetRecP{F}{s})
    \Bisim_r~(L', V',~&\M{if~} |F'| = 0 \M{~then~}\uce{([\slist{a}];L)}{t}{b}\\
   &\M{~else~} \ilLetRecP{F'}{\uce{(\slist{a};L)}{t}{b}}).
  \end{array}\]
\end{lemma}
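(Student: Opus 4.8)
The plan is to reduce the outermost function-definition redex on each side with \hyperref[lem:sim_expansion_closed]{\nrule{Sim-Expansion-Closed}} and then read off the goal from the hypothesis, splitting on the test $|F'| = 0$ exactly as the definition of $\uceN$ does. The key observation is that the left-hand configuration always makes a single silent \nrule{Fun}-step $(L,V,\ilLetRecP{F}{s}) \fevals (\mkBlocks{F}_V\cc L, V, s)$, regardless of the case, so the whole case analysis only concerns the shape of the right-hand term.

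In the case $|F'| > 0$ the right-hand term is $\ilLetRecP{F'}{\uce{([\slist{a}];L)}{t}{b}}$, which also makes one \nrule{Fun}-step, to $(\mkBlocks{F'}_{V'}\cc L', V', \uce{([\slist{a}];L)}{t}{b})$. I would apply \hyperref[lem:sim_expansion_closed]{\nrule{Sim-Expansion-Closed}} with these two one-step reductions; the remaining obligation is then literally the supplied hypothesis $(\mkBlocks{F}_V\cc L, V, s)\Bisim_r(\mkBlocks{F'}_{V'}\cc L', V', \uce{([\slist{a}];L)}{t}{b})$, which closes the case.

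In the case $|F'| = 0$ we have $F' = \M{nil}$, so the right-hand term is just $\uce{([\slist{a}];L)}{t}{b}$ and takes no corresponding \nrule{Fun}-step. Here I would apply \hyperref[lem:sim_expansion_closed]{\nrule{Sim-Expansion-Closed}} reducing only the left side by the \nrule{Fun}-step above and leaving the right side fixed by the reflexive zero-step reduction. The obligation becomes $(\mkBlocks{F}_V\cc L, V, s)\Bisim_r(L', V', \uce{([\slist{a}];L)}{t}{b})$, which is again the hypothesis once we identify $\mkBlocks{F'}_{V'}\cc L'$ with $L'$.

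The single point that is not pure reduction bookkeeping, and hence the only real obstacle, is this last identification: when $F'$ is empty, $\mkBlocks{F'}_{V'}$ is the empty group, and one must argue that prepending it to $L'$ does not change the relevant behaviour — it leaves $\dom{L'}$, every lookup $L'_{f'}$, and every rewinding $\rewind{L'}{f'}$ unchanged, because an empty group binds no function names — so that the hypothesis applies verbatim. Everything else is the routine matching of the two \nrule{Fun}-successors against the assumed relation.
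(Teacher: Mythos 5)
Your proof is correct and takes essentially the same route as the paper: the paper likewise splits on $|F'|=0$, reduces only the left side via \nrule{Sim-Expansion-Closed} in the empty case, and reduces both sides one step otherwise (there via \myref{lem:gf_unfold} rather than \nrule{Sim-Expansion-Closed}, an immaterial difference), closing each case with the hypothesis. Your explicit check that prepending the empty group $\mkBlocks{F'}_{V'}$ changes neither lookups nor rewindings in $L'$ is a detail the paper leaves implicit (in the formalization the empty block list is absorbed by context concatenation), but it is exactly the right point to verify.
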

\begin{proof}
If $|F'|=0$, then $F'=\M{nil}$. After reducing only the left side one step via \hyperref[lem:sim_expansion_closed]{\nrule{Sim-Expansion-Closed}}, the assumption solves the goal.
Otherwise we reduce both sides one step (\myref{lem:gf_unfold}), and the assumption solves the goal.
\end{proof}

\begin{theorem}
\label{thm:uce_correct}
Let $\ucc{\textsf{s}}{\AIE}{a}{\ia{\BoolT}s}$ and $\labsim{\PR_{\uceN}}{L}{L'}{\AIE}{\Bisim_r}$.
Then: $(L, V, s)\Bisim_r(L', V, \uce{\slist{b}}{s}{a})$.
\end{theorem}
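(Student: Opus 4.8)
The plan is to proceed by induction on the term $s$, carrying the reachability judgment and the context relation as generalized hypotheses and proving the statement \emph{uniformly in} $r$ (so that the $\forall r$ premises of the fix-compatibility machinery are available from the induction hypothesis). The guiding invariant is that every subterm we recurse into is reachable, i.e.\ carries annotation $\BoolT$: the reachability rules propagate $\BoolT$ exactly into those subterms, whereas any subterm receiving annotation $\BoolF$ is precisely one that $\uceN$ deletes and that $\PR_{\uceN}$ leaves unrelated (its $\Idx$ demands $a=\BoolT$). Structurally the argument mirrors the reflexivity proof \myref{lem:simeq_refl}, but replaces the identity by $\uceN$ and therefore calls the UCE-specific lemmas.

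The non-function cases are routine. For a returned expression both sides evaluate the same $e$ in the same environment, so \nrule{Bisim-Term} closes the case. For a variable binding I would split on whether $\eta$ is an expression or an external call and apply \refsimletop{} or \refsimletcall{}; \nrule{Reach-Let} makes the body reachable, the contexts are unchanged, and the induction hypothesis discharges the residual premise. For a conditional I would invoke \myref{lem:if_elimination}, whose right-hand side is \emph{exactly} the term $\uceN$ produces on a conditional (instantiating $s_i':=\uceN\,s_i$): its environment premise is trivial since the two environments coincide, and its two branch premises are the induction hypotheses, where \nrule{Reach-Cond} guarantees that the branch actually selected is annotated $\BoolT$. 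For an application $\ilGoto{f}{\slist e}$ I would use \myref{lem:labenv_app}: \nrule{Reach-App} yields $\AIE_f=\BoolT$, so $\Idx$ relates $f$ with itself, and since $\uceN$ keeps the arguments and the environment, argument evaluation agrees trivially.

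The main obstacle is the function-definition case, in which $\uceF$ filters out the unreachable functions and $\uceN$ may even erase the whole \texttt{fun} block when $|F'|=0$. To cope with both shapes of the target uniformly I would close this case with \myref{lem:uce_fun_cases}, which reduces the goal to the continuation bisimilarity $(\mkBlocks{F}_V\cc L,V,t)\,\Bisim_r\,(\mkBlocks{F'}_{V'}\cc L',V',\uceN\,t)$, with the surviving blocks already installed. To obtain the required context relation for the continuation I would run \myref{lem:sem_extension}: separation and the parameter relation are supplied by \myref{lem:uce_sep} and \myref{lem:uce_pa}, while clause (3) of \myref{def:fctx_ass} unfolds $\BdyF$ and, because $\Idx$ forces the source function to be reachable, only demands bisimilarity of the \emph{reachable} bodies $s_g$ — precisely where the induction hypothesis applies, with the analysis context extended by $\slist{f:b}$ as in \nrule{Reach-Fun}. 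Finally, since \nrule{Reach-Fun} gives $a=c$, the continuation $t$ is itself reachable, so the induction hypothesis for $t$ under the now-related extended contexts delivers the continuation bisimilarity that \myref{lem:uce_fun_cases} consumes. The delicate points are thus entirely localized here: matching the two target shapes, invoking the hypothesis only on reachable bodies, and threading the $\forall r$ quantification through the coinductive extension lemma.
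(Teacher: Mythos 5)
Your proposal is correct and follows essentially the same route as the paper's proof: induction on $s$ with inversion of \textbf{reach}, the admissible rules \refsimletop{}/\refsimletcall{}, \myref{lem:if_elimination}, and \myref{lem:labenv_app} for the simple cases, and \myref{lem:uce_fun_cases} plus \myref{lem:sem_extension} (discharged by \myref{lem:uce_sep} and \myref{lem:uce_pa}) with the inductive hypothesis applied to reachable bodies in the function-definition case. Your explicit remarks about keeping the statement uniform in $r$ and about $\Idx$ restricting attention to reachable bodies are exactly the points the paper's proof relies on implicitly.
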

\begin{proof}
Induction on $s$ and in each case inversion of \textbf{reach}.
\begin{itemize}
\item The case for let follows from \refsimletcall{} and the inductive hypothesis.
\item The case for the conditional follows by \myref{lem:if_elimination} and the inductive hypotheses.
\item The application case follows from $\labsim{\PR_{\uceN}}{L}{L'}{\AIE}{\Bisim_r}$ with \myref{lem:labenv_app}, after discharging premises:
  Note that $\AIEfp=\BoolT$ by inversion on \ndef{reach}, so
  $\Idxf{\AIEfp}{f}{f}$ holds by definition.
  Argument evaluation agrees, since parameters,
  and arguments and environments are identical.

\item The case for operation is trivial, since operation and environments
  are identical.
\item In the function definition case, let $F'=\uceF{(\slist{a};L)}{F}{\slist{a}}$.
\myref{lem:uce_fun_cases} lets us deal with both cases uniformly and requires
$$(\mkBlocks{F}_V\cc L, V, t)\Bisim_r(\mkBlocks{F'}_E\cc L', E, \uce{(\slist{a};L)}{t}{b})$$
After applying the inductive hypothesis, we must show $\labsim{\PR}{\mkBlocks{F}_E\cc L}{\mkBlocks{F'}_E\cc L'}{\AIE'\cc\AIE}{\SimN_r}$.
We apply \myref{lem:sem_extension} and discharge its premises by using \myref{lem:uce_sep} and \myref{lem:uce_pa}.
The remaining premise requires us to show from
\begin{align*}
\labsim{\PR}{\mkBlocks{F}_E\cc L}{\mkBlocks{F'}_E\cc L'}{\AIE'\cc\AIE}{\BisimN_r}&&(*)
\end{align*}
that $\simI{\PR}{\mkBlocks{F}_E}{\mkBlocks{F'}_E}{(\AIE'\cc\AIE)}{L}{L'}{\BisimN_r}$.
Unfolding \M{Bdy}, we obtain $\Idxf{(\AIE'\cc\AIE)}{f}{f'}$ and  $F_f=(\slist{x}, s)$ and $F'_{f'}=(\slist{x}', s')$ and $\Arf{\AIEf'}{\slist{v}}{\slist{v'}}$.
Unfolding those, we have to show
$$ (\mkBlocks{F}_E\cc L, E\update{\slist{x}}{v}, s) \Bisim_r (\mkBlocks{F'}_E\cc L', E\update{\slist{x}}{v}, \uce{}{s}{})$$
The inductive hypothesis solves the goal with $(*)$.
\end{itemize}
\end{proof}

\section{Dead Variable Elimination}
\label{sec:DVE}
Dead variable elimination relies on a true liveness analysis.
A variable is live if it is (potentially) used later on.
A variable is true life, if it is used to compute a value that is live later on.
True liveness requires a fixed-point computation, but is able to
detect parameters of a function that do not contribute to the behavior of the function.

\subsection{Inductive Liveness Judgement}
\newcommand{\tl}[4]{\ensuremath{#1~|~#2\vdash\textbf{\textup{tlive}\,}{}#4}}
\newcommand{\LZ}{\ensuremath{\zeta}}

We specify sound true liveness information with the inductive judgment \ndef{tlive} in \myref{fig:truelive}.

\nrule{TLive-Op} ensures that all variables that are live after the let ($X'$) are also
live before the let, except the variable defined.
The free variables of $e$ only need to be live if $x$ is live after the let.
\nrule{TLive-Call} is similar, but always requires the free variables of $\slist{e}$ to be live,
as we can never remove calls (even if their result is unused).
\nrule{TLive-Exp} requires the free variables of $\slist{e}$ to be live.
\nrule{TLive-Cond} tests whether the condition is a constant expression by evaluating it
in the empty environment.
Only if this is unsuccessful, we require its free variables to be live.
If $\valtobool(\denot{e}\emptyset)\not=\BoolF$, the consequence might be reachable.
In this case the rule requires that all variables live in the consequence ($X_1$) are live before
the conditional ($\lv_1\incl\lv$), and that the judgment holds recursively.
Otherwise, no requirements are imposed.
\nrule{TLive-App} requires whenever a parameter $x_i$ is in the live set of the function body $X_f$,
then the free variables of the corresponding argument expression $e_i$ are live at the application.
\nrule{TLive-Fun} requires the liveness judgment to recursively hold for the continuation and
the function bodies under extended contexts.
The context $\LZ$ is extended with the parameters $\slist{f:\slist{x}}$ of the newly defined functions,
and the context $\LC$ is extended with the liveness information $\slist{f:Y}$ for the function bodies.
The rule requires all variables live after the function definition to be live before the function definition,
and that all variables live in the function bodies (except parameters) are also live before the function definition.

\begin{figure}
  \begin{center}
    \begin{topprooftree}{TLive-Op}
      \Axiom$\lv' \setminus \set{x} \incl \lv\fCenter$
      \noLine
      \UnaryInf$x \in \lv' \impl \fv(e)\incl \lv\fCenter$
      \AxiomC{$\tl{\LZ}{\LC}{a'}{\ia{\lv'}s}$}
      \BinaryInfC{$\tl{\LZ}{\LC}{\anni{\lv}{a}}{\ia{\lv}\ilLet{x}{e}{\ia{\lv'}s}}$}
    \end{topprooftree}
    \begin{topprooftree}{TLive-Exp}
     \AxiomC{$\fv(e)\incl\lv$}
     \UnaryInfC{$\tl{\LZ}{\LC}{\lv}{\ia{X}e}$}
   \end{topprooftree}
  \end{center}
  \begin{center}
    \begin{topprooftree}{TLive-Call}
      \Axiom$\lv' \setminus \set{x} \incl \lv\fCenter$
      \noLine
      \UnaryInf$\fv(\slist{e})\incl \lv\fCenter$
      \AxiomC{$\tl{\LZ}{\LC}{a'}{\ia{\lv'}s}$}
      \BinaryInfC{$\tl{\LZ}{\LC}{\anni{\lv}{a}}{\ia{\lv}\ilEvent{x}{\alpha}{\slist{e}}{\ia{\lv'}s}}$}
    \end{topprooftree}
   \begin{topprooftree}{TLive-App}
     \AxiomC{$|\slist{x}|=|\slist{e}|$}
     \noLine
     \UnaryInfC{$\forall i, x_i\in\LC_f\impl\fv(e_i)\incl\lv$}
     \UnaryInfC{$\tl{\LZ}{\LC}{\lv}{\ia{\lv}\ilGoto{f}{\slist{e}}}$}
   \end{topprooftree}
  \end{center}
 \begin{center}
 \begin{topprooftree}{TLive-Cond}
   \AxiomC{$\denot{e}\emptyset=\bot \impl \fv(e)\incl\lv$}
   \noLine
   \UnaryInfC{$\valtobool(\denot{e}\emptyset)\not=\BoolF \impl \lv_1\incl\lv \land \tl{\LZ}{\LC}{a_1}{\ia{\lv_1}s_1}$}
   \noLine
   \UnaryInfC{$\valtobool(\denot{e}\emptyset)\not=\BoolT \impl \lv_2\incl\lv \land \tl{\LZ}{\LC}{a_2}{\ia{\lv_2}s_2}$}
   \UnaryInfC{$\tl{\LZ}{\LC}{\annii{\lv}{a_1}{a_2}}{\ia{X}\ilIf{e}{\ia{\lv_1}s_1}{\ia{\lv_2}s_2}}$}
   \end{topprooftree}
 \end{center}
 \begin{center}
   \begin{topprooftree}{TLive-Fun}
     \Axiom$\fCenter\tl{\slist{f:\slist{x}}\cc\LZ}{\slist{f:Y}\cc\LC}{a'}{\ia{\lv'}t}$
     \noLine
     \UnaryInf$\forall g,\,\fCenter\tl{\slist{f:\slist{x}}\cc\LZ}{\slist{f:Y}\cc\LC}{a_g}{\ia{Y_g}s_g}$
     \Axiom$\lv' \incl \lv\fCenter$
     \noLine
     \UnaryInf$\!\!\!\!\!\!\!\!\!\forall g,~ Y_g \setminus \slist{x}_g \incl \lv\fCenter$
     \BinaryInfC{$\tl{\LZ}{\LC}{\annF{\lv}{\slist{a}}{a'}}{\ia{\lv}\ilLetRecM{f}{\slist{x}}{\ia{Y}s}{\ia{\lv'}t}}$}
   \end{topprooftree}
 \end{center}
 \caption{Definition of the True Liveness Predicate $\tl{\LZ}{\LC}{a}{s}$. The context $\LZ:\textit{context}~(\M{list}\,\Var)$ contains parameters of functions, $\LC:\textit{context}~(\M{set}\,\Var)$ contains the variables live in the function body, and $s:\AnnT{\Exp}{(\M{set}\,\Var)}$ is a program annotated with liveness information.}
 \label{fig:truelive}
\end{figure}

\subsection{Transformation and Correctness}

\newcommand{\dveN}{\M{dve}}
\newcommand{\dve}[4]{\ensuremath{\dveN\,#1\,#2\,#3}}
We realize dead variable elimination (DVE) with the recursive function $\dveN$ defined in \myref{fig:dve}.
The recursive procedure descends through the program, removes unused let-bindings and filters
parameter and argument lists according to liveness information.
As in UCE, conditionals are removed if the condition can be statically evaluated.

\begin{figure}
\begin{align*}
  &\dveN:\M{list}\,\Var\to\M{set}\,\Var\to\AnnT{\Exp}{(\M{set}\,\Var)}\to\Exp\\
  &\dve{\LZ}{\LC}{(\ilLet{x}{e}{\ia{X}s})}{(\anni{\lv}{a})}~=~\\
    &\quad\M{let~}s' = \dve{\LZ}{\LC}{(\ia{X}s)}{a}\M{~in}\\
    &\quad\M{if~}x\in\lv\M{~then~}\ilLet{x}{e}{s'}\M{~else~}s'\\
  &\dve{\LZ}{\LC}{(\ilEvent{x}{\alpha}{\slist{e}}{s})}{(\anni{\lv}{a})}~=~\\
    &\quad\ilEvent{x}{\alpha}{\slist{e}}{(\dve{\LZ}{\LC}{s}{a})}\\
  &\dve{\LZ}{\LC}{(\ilIf{e}{s_1}{s_2})}{(\annii{\lv}{a_1}{a_2})}~=~\\
         &\quad\M{if~}\denot{e}\emptyset=\BoolT\M{~then~}\dve{\LZ}{\LC}{s_1}{a_1}\\
         &\quad\M{else~if~}\denot{e}\emptyset=\BoolF\M{~then~}\dve{\LZ}{\LC}{s_2}{a_2}\\
         &\quad\M{else~} \ilIf{e}{(\dve{\LZ}{\LC}{s_1}{a_1})}{(\dve{\LZ}{\LC}{s_2}{a_2})}\\
  &\dve{(\LZ;f:\slist{x};\LZ')}{(\LC;f:\lv;\LC')}{(\ia{\_}\ilGoto{f}{\slist{e}})}{\lv}~=~\\
  &\quad\ilGoto{f}{(\filterby{(\lambda x.x\in\lv)}{\slist{x}}{\slist{e}})}\\
  &\dve{\LZ}{\LC}{e}{\_}~=~e\\
  &\dve{\LZ}{\LC}{(\ilLetRecM{f}{\slist{x}}{\ia{X}s}{t})}{(\annF{\_}{\slist{a}}{b})}{}~=~\\
   &\quad \M{let~} \forall i, F'_i = (\filter{(\lambda x.x\in X_i)}{\slist{x}_i},\\
   &\quad\quad\dve{(\slist{\slist{x}}\cc\LZ)}{(\slist{X};\LC)}{s_i}{a_i})\M{~in}\\
   &\quad \ilLetRecP{F'}{(\dve{(\slist{\slist{x}},\LZ)}{(\slist{X};\LC)}{t}{b})}
\end{align*}
\caption{Definition of Dead Variable Elimination}
\label{fig:dve}
\end{figure}


\begin{definition}
\label{def:proofrel_dve}
We define the proof relation $\PR_{\dveN}$ where
\begin{align*}
A:=&~\M{list}\,\Var\times\M{set}\,\Var\\
\Paf{(\slist{x},\lv)}{\slist{y}}{\slist{y}'}:=&~\slist{x}=\slist{y}\land\slist{y'}=\filter{(\lambda x. x\in\lv)}{\slist{y}}\\
\Arf{(\slist{x},\lv)}{\slist{v}}{\slist{v}'}:=&~\slist{v'}=\filterby{(\lambda (x). x\in\lv)}{\slist{x}}{\slist{v}}\\
&~\land|\slist{x}|=|\slist{v}|\\
\Idxf{\_}{f}{f'}:=&~f = f'
\end{align*}
\end{definition}

\begin{lemma}
\label{lem:dve_sep}
If $\dom{F}=\dom{\AIE}=\dom{F'}$ then we have $\sep{\PR_\dveN}{\AIE;\AIE'}{F}{F'}$.
\end{lemma}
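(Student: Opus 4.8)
The plan is to unfold the definition of separation and discharge its two clauses directly, using the hypothesis $\dom{F}=\dom{\AIE}=\dom{F'}$ together with the trivial form of the index relation in $\PR_{\dveN}$.

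Recall that $\sep{\PR_\dveN}{\AIE;\AIE'}{F}{F'}$ requires (i)~$\dom{\AIE}=\dom{F}$ and (ii)~$\Idxf{(\AIE;\AIE')_f}{f}{f'} \impl (f \in \dom F \iff f' \in \dom F')$. Clause~(i) is one of the two equalities assumed in the hypothesis, so nothing remains to be done there.

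For clause~(ii), the key observation is that the index component of $\PR_{\dveN}$ is defined to be $f = f'$, independently of its analysis-information argument. Hence the assumption $\Idxf{(\AIE;\AIE')_f}{f}{f'}$ immediately yields $f = f'$, and the required biimplication collapses to $f \in \dom F \iff f \in \dom F'$. This follows directly from the remaining hypothesis $\dom{F}=\dom{F'}$.

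I do not anticipate any genuine obstacle: the statement is a purely definitional consequence of the assumed domain equalities and of the fact that $\Idx$ for $\PR_{\dveN}$ does not constrain its first argument. The only point requiring a little attention is to read $\Idx$ correctly, so as to see that separation holds for every choice of $\AIE;\AIE'$ whose domain matches that of $F$; the analysis information $\AIE;\AIE'$ itself plays no role beyond this domain condition.
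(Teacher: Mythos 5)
Your proof is correct: separation for $\PR_{\dveN}$ is exactly the definitional unfolding you give, with clause~(i) being the assumed equality $\dom\AIE=\dom F$, and clause~(ii) collapsing---because $\Idx$ for $\PR_{\dveN}$ ignores its analysis-information argument and just asserts $f=f'$---to $f\in\dom F \iff f\in\dom F'$, which follows from $\dom F=\dom F'$. The paper states this lemma without an explicit proof (deferring to the Coq development) precisely because it is this kind of immediate definitional consequence, so your argument is the intended one.
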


\begin{lemma}
\label{lem:dve_pa}
Let $F=\ilFDefs{f}{\slist{x}}{\ia{X}s}$ and $F'$ s.t. for all $i$,
$$F'_i =  (\filter{(\lambda x.x\in X_i)}{\slist{x}_i},
\dve{(\slist{\slist{x}}\cc\LZ)}{(\slist{X};\LC)}{s_i}{a_i})$$
 and $|F|=|F'|$.
Then $\Paf{(\slist{f:(\slist{x},X)})}{F}{F'}$.
\end{lemma}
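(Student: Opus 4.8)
The plan is to unfold the context-level parameter relation $\Paf{(\slist{f:(\slist{x},X)})}{F}{F'}$ down to an element-wise obligation and discharge it by reflexivity. By the definition of $\Pa$ between contexts, it suffices to fix names $f,f'$ together with $\Idxf{(\slist{f:(\slist{x},X)})_f}{f}{f'}$ and the corresponding definitions $F_f$ and $F'_{f'}$, and then to establish the element-level relation $\Paf{(\slist{f:(\slist{x},X)})_f}{\slist{y}}{\slist{y}'}$ on their parameter lists $\slist{y},\slist{y}'$.

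Since $\Idx$ for $\PR_{\dveN}$ is plain equality of function names (\myref{def:proofrel_dve}), the assumption $\Idxf{(\slist{f:(\slist{x},X)})_f}{f}{f'}$ forces $f=f'$. Thus $f$ and $f'$ select the same, say $i$-th, function, and I would read off the relevant data directly from the hypotheses: from $F=\ilFDefs{f}{\slist{x}}{\ia{X}s}$ the parameter list of the $i$-th function is $\slist{x}_i$, and the analysis context assigns $(\slist{f:(\slist{x},X)})_f=(\slist{x}_i,X_i)$; from the hypothesis defining $F'$ the parameter list of its $i$-th function is $\filter{(\lambda x.x\in X_i)}{\slist{x}_i}$. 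The hypothesis $|F|=|F'|$ is what makes this positional correspondence total, so that the two contexts line up function-for-function; this is the only real bookkeeping, and it is precisely what the De~Bruijn formalization relies on.

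It then remains to check $\Paf{(\slist{x}_i,X_i)}{\slist{x}_i}{\filter{(\lambda x.x\in X_i)}{\slist{x}_i}}$. Unfolding the $\Pa$ clause of $\PR_{\dveN}$, this is the conjunction of $\slist{x}_i=\slist{x}_i$ and $\filter{(\lambda x.x\in X_i)}{\slist{x}_i}=\filter{(\lambda x.x\in X_i)}{\slist{x}_i}$, both of which hold by reflexivity. I do not expect any genuine obstacle: the statement is essentially definitional, because $F'$ is constructed so that its parameter lists are exactly the liveness-filtered lists demanded by $\Pa$. The only point meriting attention is the index alignment noted above, which is immediate in the named presentation and is discharged from $|F|=|F'|$ in the formal development.
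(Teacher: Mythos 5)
Your proof is correct and matches the paper's treatment: the paper states \myref{lem:dve_pa} without proof precisely because it is definitional, and your argument — unfold the context-level $\Pa$, use $\Idx$-equality of names to align the $i$-th entries, and discharge both conjuncts of the element-level $\Pa$ of $\PR_{\dveN}$ by reflexivity — is exactly the intended justification. The only bookkeeping, the positional alignment via $|F|=|F'|$, is handled correctly in your write-up.
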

We are now ready to show the correctness theorem.
We write \mkdef{$V =_X V'$} if $V$ and $V'$ agree on the values of the variables in the set $X$.
\begin{theorem}
\label{thm:dve_correct}
Let $\tl{\LZ}{\LC}{a}{\ia{\lv}s}$ and $\labsim{\PR_{\dveN}}{L}{L'}{\zip{\LZ}{\LC}}{\Sim_r}$ and $V =_X V'$.
Then: $$(L, V, s)\Sim_r(L', V', \dve{\LZ}{\LC}{\ia{\lv}s}{a}).$$
\end{theorem}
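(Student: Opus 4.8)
The plan is to proceed by structural induction on $s$, mirroring the proof of \myref{thm:uce_correct}, and to invert the liveness judgment \textbf{tlive} in each case. The overall shape is dictated by \myref{def:proofrel_dve}: the proof relation $\PR_{\dveN}$ records both the parameter list and the live set, and it specifies that arguments on the right are obtained by filtering the arguments on the left through the live set. The key global invariant that the induction must maintain is the environment agreement $V =_X V'$ on the current live set $X$, together with $\labsim{\PR_{\dveN}}{L}{L'}{\zip{\LZ}{\LC}}{\Sim_r}$ relating the function contexts. A crucial preliminary observation I would establish is that whenever $\fv(e)\incl X$ and $V =_X V'$, the side condition on free variables (from the assumption on $\fv$ in the expression section) gives $\denot{e}\,V=\denot{e}\,V'$; this licenses evaluating matching expressions to equal values on both sides, which is what every non-trivial case ultimately relies on.

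\textbf{First} I would dispatch the operation case by \nrule{Sim-Term} and the system-call case by \refsimletcall{}, using the $\fv$-agreement to equate expression evaluations and passing the recovered environment agreement (now on $X'$) to the inductive hypothesis. For the let case, inversion of \nrule{TLive-Op} yields $X'\setminus\set{x}\incl X$ and the conditional free-variable requirement; I would then case split on whether $x\in X'$. If $x\in X'$, I keep the binding and apply \refsimletcall{} (or the operation analogue via \refsimletop{}), checking that $V\update{x}{v}=_{X'}V'\update{x}{v}$; if $x\notin X'$, the binding is dropped on the right, so I would reduce only the left side via \hyperref[lem:sim_expansion_closed]{\nrule{Sim-Expansion-Closed}} and observe that $V\update{x}{v}=_{X'}V'$ still holds because $X'$ excludes $x$. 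The conditional case follows \myref{lem:if_elimination}, supplying its three hypotheses from inversion of \nrule{TLive-Cond} and the $\fv$-agreement (which gives the required $\denot{e}\,V=\denot{e}\,V'$ when the condition is not statically determined), then discharging the recursive obligations by the inductive hypotheses.

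\textbf{The hard part} will be the application and function-definition cases, since these are where the asymmetry of the proof relation bites. For application $\ilGoto{f}{\slist{e}}$, I would apply \myref{lem:labenv_app}: from $\labsim{\PR_{\dveN}}{L}{L'}{\zip{\LZ}{\LC}}{\Sim_r}$ I obtain $\Idxf{\_}{f}{f}$ (since $\Idx$ is equality of names), and the main work is verifying that argument evaluation agrees with respect to $\STSi$ and $\PR_{\dveN}$. Here I must show the filtered argument list $\filterby{(\lambda x.x\in\lv)}{\slist{x}}{\slist{e}}$ evaluates under $V'$ to the filtered value list demanded by $\Ar$; this uses the \nrule{TLive-App} premise that $x_i\in\LC_f$ implies $\fv(e_i)\incl X$, so that exactly the \emph{retained} arguments have their free variables live and hence evaluate equally under $V$ and $V'$. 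Because we only prove similarity ($\Sim_r$) rather than bisimilarity, the $\STSi$ clauses of the agreement definition are vacuous, which is what makes DVE provable as a simulation but not a bisimulation. For the function-definition case I would follow the template of \myref{thm:uce_correct}'s final bullet: reduce both sides one step, apply the inductive hypothesis for the continuation $t$, and then use \myref{lem:sem_extension} to extend the related contexts, discharging its separation and parameter-relation premises by \myref{lem:dve_sep} and \myref{lem:dve_pa}. The remaining \textbf{Bdy}-premise unfolds to an obligation about each function body, which the inductive hypothesis solves \emph{provided} the extended context relation $(*)$ is available---exactly the strengthening that \myref{lem:sem_extension} supplies through \myref{def:fctx_ass}, and exactly the step where the inductive method earns its keep.
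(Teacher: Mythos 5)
Your proposal follows the paper's proof almost exactly: the same induction on $s$ with inversion of \textbf{tlive} in each case, the same lemmas at the same places (\refsimletop{}, \refsimletcall{}, \myref{lem:if_elimination}, \myref{lem:labenv_app}, \myref{lem:dve_sep}, \myref{lem:dve_pa}), and the same two key observations: expression evaluation agrees on environments that agree on live variables, and the filtered argument list evaluates to the filtered value list in the application case. Your function-definition case, which reduces both sides one step and then invokes \myref{lem:sem_extension}, is simply an inlining of \myref{lem:sim_fun}, which is what the paper applies; this is a difference of packaging, not of substance.

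There is, however, one concrete gap, in the dropped-binding sub-case of let-op. When $x\notin\lv'$ (the live set of the continuation), inversion of \nrule{TLive-Op} tells you nothing about $\fv(e)$, since its free-variable premise is conditional on $x\in\lv'$; hence $\opeval{e}{V}$ may be $\bot$ --- a dead binding is allowed to read dead or uninitialized variables. In that situation the left configuration cannot take a step at all (rule \nrule{Op} fires only when $\denot{e}\,V=v$), so your plan to ``reduce only the left side via \nrule{Sim-Expansion-Closed}'' does not apply. The paper therefore performs an additional case analysis on $\opeval{e}{V}$ in this sub-case: if it is $\bot$, the left side is stuck and \nrule{Sim-Error} closes the goal; only if it is a value does \nrule{Sim-Expansion-Closed} plus the inductive hypothesis apply. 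This sub-case is not incidental: together with the application case (which you did treat correctly, via the vacuity, when $p=\STSi$, of the agreement clauses conditioned on $p=\STBi$), it is precisely where the proof is forced to deliver similarity rather than bisimilarity.
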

\begin{proof}
Induction on $s$ and in each case inversion of \textbf{tlive}.
\begin{itemize}
\item The case for let-call follows from \refsimletcall{} and the inductive hypothesis.
\item In the case for let op, we do a case analysis in $x\in\lv$.
\begin{itemize}
\item If $x\in\lv$, the case follows from  \refsimletop{} with the fact that $\opeval{e}{V}=\opeval{e}{V'}$ because $V$ and $V'$ agree on $\fv(e)$.
\item If $x\not\in\lv$, case analysis on $\opeval{e}{V}$.
\begin{itemize}
\item If $\opeval{e}{V}=\bot$, the left side is stuck (\nrule{Sim-Error}).
\item If $\opeval{e}{V}=v$, we use \hyperref[lem:sim_expansion_closed]{\nrule{Sim-Expansion-Closed}} to reduce the left side one step and are done by the inductive hypothesis with the observation that
$V\update{x}{v}$ still agrees with $V'$ on the $X$ because $x\not\in\lv$.
\end{itemize}
\end{itemize}

\item The case for the conditional follows by \myref{lem:if_elimination} and the inductive hypotheses.

\item The case for application follows from $\labsim{\PR_{\dveN}}{L}{L'}{\zip{\LZ}{\LC}}{\Bisim_r}$ with \myref{lem:labenv_app}, after discharging premises. The relation
  $\Idxf{(\slist{x},\lv)}{f}{f}$ holds by definition.
  Argument evaluation agrees because
  \begin{align*}
    &\denot{\filterby{(\lambda x.x\in\lv_f)}{\slist{x}}{\slist{e}}}\,V'\\
    =&~\filterby{(\lambda x.x\in\lv_f)}{\slist{x}}{(\denot{\slist{e}}\,V)}
  \end{align*}
  since we already know that $\denot{\slist{e}}\,V\not=\bot$ and $V$ and $V'$ agree on the live variables.

\item The case for operation is trivial, since operation are identical and environments agree on the live variables.
\item In the function definition case, we have $F'$ such that $|F'|=|F|$ and for all $i$
$$F'_i = (\filter{(\lambda x.x\in X_i)}{\slist{x}_i},
\dve{(\slist{\slist{x}}\cc\LZ)}{(\slist{X};\LC)}{s_i}{a_i})$$
We apply \myref{lem:sim_fun} and have to discharge premises.
The second premise holds by assumption, the third is the inductive hypothesis.
The first two requirements of the first premise are \myref{lem:dve_sep} and \myref{lem:dve_pa}.
It remains to show from
$(*) \labsim{\PR}{\mkBlocks{F}\cc L}{\mkBlocks{F'}\cc L'}{\slist{(\slist{x},\lv};\AIE)}{\BisimN_r}$ that
$$\simI{\PR}{F}{F'}{(\slist{(\slist{x},\lv)}\cc\AIE)}{L}{L'}{\BisimN_r}$$
After unfolding \M{Bdy} (\myref{def:bdyf}), we have the assumptions $\Idxf{(\slist{x},X)}{f}{f'}$ and  $F_f=(\slist{x}, s)$ and $F'_{f'}=(\slist{x}', s')$ and $\Arf{(\slist{x},X)}{\slist{v}}{\slist{v'}}$. And after further unfolding we get $\slist{x}'=\filter{(\lambda x.x\in \lv)}{\slist{x}}$ and $\slist{v}'=\filterby{(\lambda (x). x\in \lv)}{\slist{x}}{\slist{v}}$. We have to show
that
\begin{align*}
 &~(\mkBlocks{F}\cc L, V\update{\slist{x}}{v}, s) \\
\Bisim_r &~(\mkBlocks{F'}\cc L', V'\update{\slist{x}'}{\slist{v'}}, \dve{(\slist{\slist{x}}\cc\LZ)}{(\slist{\lv};\LC)}{s}{a_f})
\end{align*}
Inductive hypothesis provides the latter. Its premises are discharged by $(*)$ and the observation that
the updated environments still agree on the live variables.
\end{itemize}
\end{proof}

\section{Coq Development}
\label{sec:coq}
The formal development accompanying this paper is part of a verified compiler LVC.
LVC use the inductive method presented in this paper, and variations of it,
 for many correctness proofs.
These include DVE, UCE, Copy Propagation, Sparse Conditional Constant Propagation,
and some lowering passes.
LVC also features an imperative variant of IL, which is called IL/I and serves as source language.
The difference between IL and IL/I is that the latter uses imperative variables instead of lexically scoped binders.
The first transformations in the pipeline of the LVC compiler are UCE and DVE on IL/I.
The formal development hence also contains proofs of UCE and DVE for IL/I.
The setup as described works for IL/I, too, and the proof structure
remains the same.

In the formal development we use De-Bruijn indices, not a named representation for function binders.
This fact complicates UCE, as indices change whenever a function is removed.
Fortunately, this does not causes problems with our inductive method.

The part of the LVC development that pertains to this paper is available online:
\begin{center}
\url{www.ps.uni-saarland.de/~sdschn/lvc-ind/}
\end{center}
LVC has more than 36k LoC.
The formalization of the inductive method (\myref{sec:ind_method}) takes \textasciitilde 400 LoC.
The correctness theorems (\myref{thm:uce_correct}, \myref{thm:dve_correct}) are \textasciitilde 50 LoC each.
Also counting lemmas, it takes \textasciitilde 300 LoC to show each of DVE and UCE correct.

\section{Conclusion}
\label{sec:conclusion}
We described an inductive method for proofs of simulation-based program equivalence.
In contrast to the standard approach, which indexes the (bi)simulation with a measure, our approach works without modifying the simulation.
With out method, bisimilarity can be proven with the need for symmetrization.

After the method is setup, the overhead of the correctness proofs of transformations is low, and the proof becomes a straight-forward induction.
The details of the proof are simple enough to be explained in full on paper.
The method separates concerns: The correctness proofs follow the syntactic definition of the transformation, and a separate, general lemma proved by coinduction is used to deal with fixed-point computation in the language.
This allows to focus on the actual verification problem inherent to the transformation instead of requirements induced by the proof method.
We think inductive methods like ours are an essential tool for bisimulation-based compiler verification and useful in general.

We applied the method to two optimizations, unreachable code elimination (UCE) and dead variable elimination (DVE).
The optimizations are not straight-forward to verify because they remove instructions and change function signatures.
The inductive method also improves modularity of the correctness argument:
We argued certain removal steps in a separate lemma (e.g.\ \myref{lem:if_elimination}).
After using such a lemma in a plain coinductive proof (even when using Paco~\cite{DBLP:conf/popl/HurNDV13}), further justification would be required before the cohypothesis could be applied.


\printbibliography

\end{document}